\newtheorem{theo}{{\bf{Theorem}}}[section]
\newtheorem{prop}[theo]{{\bf Proposition}}
\newtheorem{lem}[theo]{{\bf Lemma}}
\newtheorem{rem}{{\bf Remark}}[section]
\newtheorem{remark}[rem]{Remark}
\newtheorem{defi}{{\bf Definition}}[section]
\newtheorem{defn1}[defi]{Definition}
\renewcommand{\proof}{\noindent{\bf Proof.\ }}
\newcommand{\no}{\nonumber}
\newcommand{\noi}{\noindent}
\newcommand{\txt}{\textrm}
\newcommand{\pa}{\partial}
\newcommand{\si}{\sigma}
\newcommand{\la}{\lambda}
\newcommand{\calF}{\mathcal{F}}
\newcommand{\sig}{\sigma}
\newcommand{\tab}{\hspace*{0.3in}}
\newcommand{\vf}{\varphi}
\title{Asset Pricing in a Semi-Markov Modulated Market with Time-dependent Volatility }
\author{Tanmay S. Patankar }
\begin{document}
	\thesisfront
	
%

	
	\chapter*{Introduction}

	\noi In 1973, Black, Scholes and Merton developed a mathematical model for the problem of option pricing, for which they were awarded the Nobel prize in Economics. Since then, numerous different improvements of their theoretical model are being studied. Regime switching models are one such extension of the Black-Scholes model. The goal of this project is to establish the pricing theory of defaultable	bonds for a very general kind of regime switching market. Extensive research has been done to study markets with Markov-modulated regime switching. However, it seems that the above problem with semi-Markov regimes has not yet been studied in the literature. A semi-Markov switching has past memory unlike	the well studied homogeneous Markov switching which is memoryless. Hence the former has	much greater appeal in terms of applicability than the latter. The semi-Markov switching is mathematically more interesting, too, mainly because of non-locality and unboundedness of the infnitesimal generator of the related augmented process. To address this problem, a satisfactory knowledge of continuous time stochastic processes, in particular diffusion processes and Poisson point processes, is necessary. A reasonable	understanding of pricing theory in continuous time market model is also essential.
	
	\noi We have successfully represented a large class of semi-Markov processes as solutions of a class of stochastic integral equations. This finding is original in nature and crucial to achieve the main aim of the project.
	
	\noi In the geometric Brownian motion model of asset prices, the drift and the volatility coefficients of the prices are constants. On the other hand, the regime switching model, allows those coefficients to be Markov pure jump processes. We consider a financial market where the asset price dynamics follow a regime switching model where the coefficients depend on a more general, possibly non-Markov pure jump stochastic processes. We further allow the volatility coefficient to depend on time explicitly, to capture periodic fluctuations like Monday effects etc. Under this market assumption we study locally risk minimizing pricing of vanilla options. It is shown that the price function can be obtained by solving a non-local degenerate parabolic PDE. We establish existence and uniqueness of a classical solution of at most linear growth of the PDE. We further show that the PDE is equivalent to a Volterra integral equation of second kind. Thus one can find the price function by solving the integral equation which is computationally more efficient. We finally show that the corresponding optimal hedging can be computed by performing a numerical integration.
	
	\chapter{Preliminaries}
	\begin{defi}\label{poisson def}
		Let $ (E,\mathcal{E}) $ be an Euclidean measurable space. Let $ M_P(E) $ be the set of all integer-valued measures on $(E,\mathcal{E})$. We associate $ M_P(E) $ with a $\sigma$-algebra $\mathcal{M}_P(E)$, which is the smallest $\sigma$-algebra on $ M_P(E) $ that makes the maps $ A:M_P(E)\rightarrow \mathbb{N}\cup\{0\} $, $m\mapsto m(A)$ measurable for all Borel sets $A$. Let $\mu$ be a Radon measure on $E$. A \textit{Poisson random measure} with mean measure $\mu$ is a measurable function $\wp: (\Omega,\mathcal{F},P)\rightarrow(M_P(E),\mathcal{M}_P(E))$ satisfying the following properties:
		\begin{enumerate}
			\item For $A\in\mathcal{E}$ and $k\in \mathbb{N}$, 
			\begin{equation}
			P[\omega:\wp(\omega)(A)=k]=\begin{cases}
			e^{-\mu(A)}\frac{(\mu(A))^k}{k!},\ &\mu(A)<\infty\\
			0,\ &\mu(A)=\infty.
			\end{cases}
			\end{equation}
			\item For any $m\in \mathbb{N}$, if $ A_1,A_2,\dots,A_m $ are mutually disjoint sets in $\mathcal{E}$, then $ \wp(A_1),\wp(A_2),$\\ $\dots ,\wp(A_m) $ are independent random variables.
		\end{enumerate}
	\end{defi}
	

	\begin{defi}
		A discrete-time Markov chain is a sequence of random variables $\{X_n\}_{n\geq0}$ satisfying
		\begin{equation*}
			P[X_{n+1}=x\mid X_0=x_0,X_1=x_1,\dots X_n=x_n]=P[X_{n+1}=x\mid X_n=x_n],
		\end{equation*}
		provided both conditional probabilities are well-defined, i.e $P[X_0=x_0,X_1=x_1,\dots X_n=x_n]>0$.
	\end{defi}
	
	\begin{defi}
		A continuous-time time-homogeneous Markov chain with rate matrix $ \Lambda $ is a stochastic process $\{X_t\}_{t\geq 0}$ satisfying the following conditions
		\begin{enumerate}
			\item $ X_t $ is a piecewise constant right-continuous process with left-limits, with discontinuities at a discrete set $\{T_n\}_{n\geq1}$. (This means that $X_t$ is a right-continuous process whose left-hand limit exists at all points with probability 1.)
			\item The sequence $\{X_{T_n}\}_{n=0,1,\dots}$ is a Markov chain with transition matrix $\mathbb{P}=(p_{ij})$, where $p_{ij}=\frac{\lambda_{ij}}{|\lambda_{ii}|}$.
			\item $P\left[ X_{T_{n+1}}=j, T_{n+1}-T_n\leq y |(X_0,T_0),(X_1,T_1),\dots,(X_{T_n}=i,T_n) \right]= p_{ij}(1-e^{\lambda_{ii} y})$.
		\end{enumerate}
	\end{defi}
	
	\begin{defi}
		A general continuous-time Markov process is a process $\{X_t\}_{t\geq 0}$ on a probability space $(\Omega,\mathcal{F},P)$ and taking values in a measurable space $(S,\mathcal{S})$, satisfying 
		\begin{equation}
			P[X_t\in A\mid \mathcal{F}_s]=P[X_t\in A\mid X_s]
		\end{equation}
		for all $A\in\mathcal{S}$  and for each $s<t$.
	\end{defi}
	
	\begin{defi}
		A \textit{semi-Markov} process is a process $\{X_t\}_{t\geq 0}$ that satisfies the following properties:
		\begin{enumerate}
			\item $ X_t $ is a piecewise constant rcll process with discontinuities at a discrete set $\{T_n\}_{n\geq1}$. 
			\item The transition probabilities satisfy
			\begin{align}
			& \no P\left[ X_{T_{n+1}}=j, T_{n+1}-T_n\leq y |(X_0,T_0),(X_1,T_1),\dots,(X_{T_n},T_n) \right]\\
			=& P\left[ X_{T_{n+1}}=j, T_{n+1}-T_n\leq y |X_{T_n} \right].
			\end{align}
		\end{enumerate}
	\end{defi}

	\begin{defi}
		A $C_0$-semigroup of operators $\{S(t)\}_{t\geq0}$ on a Banach space $ V $ is a map $S:\mathbb{R}_+\rightarrow BL(V)$, such that 
		\begin{enumerate}
			\item $ S_0f=f \quad  \forall f\in V $,
			\item $ S_{t+s}=S_t\circ S_s \quad \forall t,s\geq 0$, and
			\item $ \| S_t f - f \|\rightarrow0 $ as $t\downarrow0$, for all $ f\in V $.
		\end{enumerate}
	\end{defi}
	
	\begin{defi}
		Let $\{S(t)\}_{t\geq0}$ be a $C_0$-semigroup of operators. The domain of the infinitesimal generator of the semigroup is defined as
		\begin{equation*}
			\mathcal{D}(\mathcal{A}):=\left\lbrace f\in V \mid \lim_{t\rightarrow 0}\frac{S_tf-f}{t} \textrm{ exists} \right\rbrace 
		\end{equation*}
		and the infinitesimal generator of $f$ is  the operator $\mathcal{A}$, defined such that
		\begin{equation*}
			\mathcal{A}f:=\lim_{t\rightarrow 0}\frac{S_tf-f}{t}
		\end{equation*}
		for all $f\in\mathcal{D}$.
	\end{defi}

	\chapter{Age-dependent processes}
	\section{Time-homogeneous Age-dependent processes}\label{sec 2.1}
	\noi We consider a class of stochastic processes which is constructed as a strong solution of a certain set of stochastic integral equations. Let $(\Omega,\mathcal{F},\{\mathcal{F}_t\}_{t\geq0},P)$ be a filtered probability space, and $\chi=\left\lbrace 1,2,\dots,k\right\rbrace\subset\mathbb{R} $ be the state space. For $i,j\in \chi$ and $i\neq j$,  define 
	\begin{equation}\label{lambda from}
		\lambda:\chi\times\chi\times\ (0,\infty)\rightarrow[0,\infty)
	\end{equation}
	to be a measurable function with 
	\begin{equation}\label{lambda condition 1}
	\sup_{y\in(0,\infty)}\sum_{j\neq i}\lambda_{ij}(y)<\infty.
	\end{equation} and
	\begin{equation}\label{lambda condition 2}
		\lim\limits_{y\rightarrow\infty} \Lambda_{i}(y)=\infty, \text{\ where\ } \Lambda_i(y)=\int_{0}^{y}\sum_{j\neq i} \lambda_{ij}(v)\,dv.
	\end{equation}
	
	\noi The diagonal elements  are defined as $\lambda_{ii}(y):=-\sum_{j\neq i} \lambda_{ij}(y)$.

	\noi For $i\neq j, y>0$,  let $\Lambda_{ij}(y)$ be consecutive (w.r.t the lexicographical ordering) right-open, left-closed intervals of length $\lambda_{ij}(y)$. Define $h:\ \chi\times\mathbb{R}_+\times\mathbb{R}\rightarrow\mathbb{R}$ as
	\begin{equation}
		h(i,y,z)=\sum_{j\neq i\in \chi}(j-i)\mathds{1}_{\Lambda_{ij}(y)}(z) \label{hdef}
	\end{equation}
	and a function $g:\ \chi\times\mathbb{R}_+\times\mathbb{R}\rightarrow\mathbb{R}$ as
	\begin{equation}
		g(i,y,z)=y\sum_{j\neq i\in \chi}\mathds{1}_{\Lambda_{ij}(y)}(z). \label{gdef}
	\end{equation}.

	\noi We consider the following system of coupled stochastic integral equations in $X_t$ and $Y_t$:
	\begin{eqnarray}
		X_t=X_0+\int_{(0,t]}\int_{\mathbb{R}}h(X_{u-},Y_{u-},z)\,\wp(du,dz)\label{xdef}\\
		Y_t=Y_0+t-\int_{(0,t]}\int_{\mathbb{R}}g(X_{u-},Y_{u-},z)\,\wp(du,dz),\label{ydef}
	\end{eqnarray}
	where $h$ and $g$ are defined by equations \eqref{hdef} and \eqref{gdef} respectively, $\wp(du,dz)$ is a Poisson random measure on $ \mathbb{R}_+\times\mathbb{R} $ with intensity $du\times dz$, and $ \{\wp((0.t]\times dz)\}_{t\geq0} $ is adapted to the filtration $ \{\mathcal{F}_t\}_{t\geq0} $.
	
	\begin{theo}\label{exun1}
		There exists a unique strong solution to equations \eqref{xdef} and \eqref{ydef}.
	\end{theo}
	\begin{proof}First, we note that \eqref{lambda condition 1} can be rewritten as:
		\begin{equation}
			\sum_{j\neq i}\lambda_{ij}(y)<c\ \text{for all } y\in[0,\infty),
		\end{equation}
		for some $c>0$. Thus, it follows that
		\begin{equation*}
			\bigcup_{y\in(0,T]}\left[ \{y\}\times [0,\sum_{j\neq i}\lambda_{ij}(y)]\right] \subset [0,T]\times[0,c].
		\end{equation*}
		
		\noi The interval $[0,c]$ has finite Lebesgue measure $c$. Define $D$ to be the set of all point masses of the measure $\wp(\omega)$:
		\begin{equation*}
			D:=\left\lbrace s\in(0,\infty)|\wp(\omega)(\{s\}\times[0,c])=1 \right\rbrace \text{ for any fixed $ \omega\in\Omega $}.
		\end{equation*}
		
		\noi An illustration of a sample of the points of a Poisson random measure with $ c=10 $ and $T=1$ is shown in Figure \ref{fig:poisson_points}.
		
		\begin{figure}
			\centering
			\includegraphics[width=0.7\linewidth]{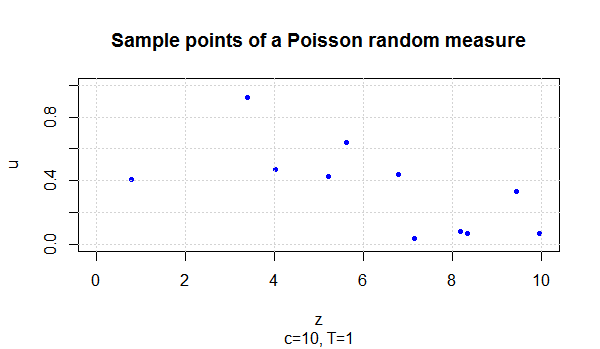}
			\caption{A sample of point masses of a Poisson random measure}
			\label{fig:poisson_points}
		\end{figure}

		\noi The set of all transition times of $X(\omega)$ is a subset of $D$.
		Since the measure of the set $ [0,c] $ is finite, $D$ is a discrete set (i.e, $D$ has no limit point) with probability 1. We can thus enumerate the set $D$ as
		\begin{equation*}
			D=\{0,\sigma_1,\sigma_2,\dots, \},
		\end{equation*}
		and it is easy to see that $ \sigma_1,\sigma_2,\dots $ are stopping times under the filtration $\mathcal{F}_t$ of the underlying probability space. Since $D$ is a discrete set, 
		\begin{equation}\label{sigma infty}
			\lim_{n\rightarrow\infty}\sigma_n = \infty \ \text{a.s.}
		\end{equation}
		We use an iterative argument for proving the existence and uniqueness of a strong solution to \eqref{xdef} and \eqref{ydef}. For a fixed $\omega$, we construct a  solution to this pair of equations on the time interval $[0,\sigma_1]$. Then we extend this solution to the time interval $(\sigma_1,\sigma_2]$, and so on.
		
		\noi Since 
		\begin{equation*}
			\wp(\omega)([0,\sigma_1)\times [0,c])=0,
		\end{equation*}
		for $ t\in[0,\sigma_1) $,
		\begin{align*}
			X_t(\omega)&=X_0+\int_{0}^{t}\int_{[0,c]}h(X_{u-},Y_{u-},z)\,\wp(\omega)(du,dz)\\
			&=X_0
		\end{align*}
		and
		\begin{align*}
			Y_t(\omega)&=Y_0+t-\int_{0}^{t}\int_{[0,c]}g(X_{u-},Y_{u-},z)\,\wp(\omega)(du,dz)\\
			&=Y_0+t.
		\end{align*}
		At $t=\sigma_1$, 
		\begin{align*}
			X_{\sigma_1}(\omega)&=X_0+\int_{[0,c]}h(X_0,Y_0+\sigma_1,z)\,\wp(\omega)(\{\sigma_1\}\times dz),\\
			Y_{\sigma_1}(\omega)&=Y_0+\sigma_1-\int_{[0,c]}g(X_0,Y_0+\sigma_1,z)\,\wp(\omega)(\{\sigma_1\}\times dz).
		\end{align*}
		Since we have been able to write down the solution in the time interval $[0,\sigma_1]$ explicitly, it is obviously unique.
		
		\noi Now we consider the time interval $(\sigma_1,\sigma_2]$. We define the following quantities:
		\begin{align*}
			\tilde{X}(0)&=X(\sigma_1),\\
			\tilde{Y}(0)&=Y(\sigma_1),\\
			\tilde{\wp}(ds,dz)&=\wp(\sigma_1+ds,dz),\\
			\tilde{\sigma}_n&=\sigma_{n+1}-\sigma_1.
		\end{align*} 
		Then, $ \tilde{D}=\left\lbrace s>0|s+\sigma_1\in D \right\rbrace = \{\tilde{\sigma}_n\}_{n\geq1} . $
		Now we consider the equations \eqref{xdef} and \eqref{ydef} on $[0,\tilde{\sigma_1}]$, where $X_0, Y_0$ and $\wp$ are replaced by $\tilde{X}(0), \tilde{Y}(0)$ and $\tilde{\wp}$ respectively. If $t\in[0,\tilde{\sigma_1})$, then the solution $\tilde{X}_t$ is given by
		\begin{align*}
			\tilde{X}_t&=\tilde{X}_0,\\
			\tilde{Y}_t&=\tilde{Y}_0+t.
		\end{align*}
		and for $t=\tilde{\sigma}_1$, we have:
		\begin{align*}
			\tilde{X}_{\tilde{\sigma}_1}&=\tilde{X}_0+\int_{[0,c]}h(\tilde{X}_0,\tilde{Y}_0+\tilde{\sigma}_1,z)\,\tilde{\wp}(\{\tilde{\sigma}_1\}\times dz),\\
			\tilde{Y}_{\tilde{\sigma}_1}&=\tilde{Y}_0+\tilde{\sigma}_1-\int_{[0,c]}g(\tilde{X}_0,\tilde{Y}_0+\tilde{\sigma}_1,z)\,\tilde{\wp}(\{\tilde{\sigma}_1\}\times dz).
		\end{align*}
		Therefore, the solution $(X,Y)$ of the original equations can be reconstructed from $(\tilde{X},\tilde{Y})$ by the following relation
		\begin{align*}
			X_t&=\begin{cases}
				X(t),\ t\in[0,\sigma_1]\\
				\tilde{X}(t-\sigma_1),\ t\in(\sigma_1,\sigma_2]
			\end{cases}\\
			Y_t&=\begin{cases}
				Y(t),\ t\in[0,\sigma_1]\\
				\tilde{Y}(t-\sigma_1),\ t\in(\sigma_1,\sigma_2].
			\end{cases}
		\end{align*}
		This establishes the existence and uniqueness of the strong solution in the time interval $[0,\sigma_2]$.
		
		\noi Continuing in this fashion, we can uniquely construct the solution in successive time intervals. By \eqref{sigma infty}, this sequence of intervals covers the entire positive real time-axis. Hence, the solution is globally determined.	\qed
	\end{proof}\bigskip
	\noi In the above proof, it is evident that the process $X_t$ has almost surely piecewise constant r.c.l.l paths. The points of discontinuity of $X_t$ are called transition times.
	
	\begin{defi}
		Transition times are elements of an increasing sequence $ \{T_n\}_{n\geq 1} $ such that $ \{T_n:n\geq 1\}=\{t> 0: X_t\neq X_{t-}\} $. We set $T_0:=-Y_0$.
		We define the holding times $\tau_n:=T_n-T_{n-1}$ for all $n\geq1$.		
	\end{defi}\bigskip
	
	\noi From the above definitions, it is clear that
	\begin{equation*}
	X_u-X_{u-}=\int_{\mathbb{R}}h(X_{u-},Y_{u-},z)\,\wp(\{u\}\times dz)
	\end{equation*}
	is non-zero if and only if $u=T_n$ for some positive integer $n$. This also implies that
	\begin{equation*}
	\int_{\mathbb{R}}g(X_{u-},Y_{u-},z)\,\wp(\{u\}\times dz)=\begin{cases}
	Y_{u-},\text{ if $u=T_n$ for some $n$}\\
	0,\text{ otherwise}.
	\end{cases}
	\end{equation*}
	Hence, by induction, we obtain, for any integer $n\geq 0$,
	\begin{equation*}
	Y_t=Y_0+t-\sum_{r=1}^n Y_{T_r-} \text{ for } t\in[T_n,T_{n+1}).
	\end{equation*}
	Thus, $Y_t=0$ iff $t=T_n$ for some $n\in\mathbb{N}$, and $Y_{T_n-}=T_n-T_{n-1}$ for all $n\in\mathbb{N}$. This observation motivates us to define the following:
	
	\begin{defi}
		Let $(X_t,Y_t)$ be the unique strong solution to equations \eqref{xdef} and \eqref{ydef}. The process $X_t$ is then called an ``age-dependent process'' and  $Y_t$ is called the ``holding time process'' corresponding to $X_t$.
	\end{defi}

	\begin{theo}
		Let $(X_t,Y_t)$ be the unique strong solution to equations \eqref{xdef} and \eqref{ydef}. The process $Z_t:=(X_t,Y_t)$ is a Markov process.
	\end{theo}
	\begin{proof} From equations \eqref{xdef} and \eqref{ydef}, we get, for $t<T$,
		\begin{align*}
			{X}_T&={X}_0+\int_{0}^{T}\int_{\mathbb{R}}h({X}_{u-},{Y}_{u-},z)\,{\wp}(du,dz)\\
			&={X}_t+\int_{t}^{T}\int_{\mathbb{R}}h({X}_{u-},{Y}_{u-},z)\,{\wp}(du,dz)
		\end{align*}
		and
		\begin{align*}
			{Y}_T&=Y_0+T-\int_{0}^{T}\int_{\mathbb{R}}g({X}_{u-},{Y}_{u-},z)\,{\wp}(du,dz)\\
			&={Y}_t+(T-t)-\int_{t}^{T}\int_{\mathbb{R}}g({X}_{u-},{Y}_{u-},z)\,{\wp}(du,dz).
		\end{align*}
		From the second property of $ \wp $, as in Definition \ref{poisson def} and from the above expressions, it is thus clear that $Z_t$ is a Markov process.\qed
	\end{proof}
	
	\noi It is also easy to see that $Z_t$ is strongly Markov.


	\begin{theo}\label{agedep}
		Let $X_t$ be an age-dependent process. Then, $X_t$ is a semi-Markov process.
	\end{theo}
	\begin{proof}We have already seen in the proof of Theorem \ref{exun1}, that $X_t$ is a piecewise constant right-continuous process, and the left-hand limits exist. In other words, $ X_t $ is a c\`{a}dl\`{a}g process.
		

\noi Next, we show that 
\begin{align}
& \no P[X_{T_{n+1}}=j,T_{n+1}-T_n\leq y\mid (X_{T_0},{T_0}),(X_{T_1},{T_1}),\dots,(X_{T_n},{T_n})].\\
=& P[X_{T_{n+1}}=j,T_{n+1}-T_n\leq y\mid X_{T_n}].\label{note 1}
\end{align}

\noi We note that the LHS of \eqref{note 1} can be written as
\begin{align}
	& \no P(X_{T_{n+1}}=j\mid (X_{T_0},{T_0}),(X_{T_1},{T_1}),\dots,(X_{T_n},{T_n}),\{ T_{n+1}-T_n\leq y \})\times\\
	& P[T_{n+1}-T_n\leq y\mid (X_{T_0},{T_0}),(X_{T_1},{T_1}),\dots,(X_{T_n},{T_n})]\label{note 2}
\end{align}

\noi From equation \eqref{xdef},
\begin{align*}
	X_{T_{n+1}}=X_{T_n}+\int_{\mathbb{R}}h(X_{T_n},T_{n+1}-T_n,z)\,\wp(\{T_n+(T_{n+1}-T_n)\}\times dz),
\end{align*}
since $ Y_{T_{n+1}\!-}=T_{n+1}-T_n $ and
\begin{equation*}
	\int_{(T_n,T_{n+1})}\int_{\mathbb{R}}h(X_{u-},Y_{u-},z)\,\wp(du,dz)=0.
\end{equation*}

\noi Again, since $ \wp $ is a Poisson random measure, for any Borel set $ B\subset (0,\infty)\times\mathbb{R} $, $ \wp((T_n,0)+B) $ is independent of $ \mathcal{F}_{T_n} $. Therefore,
\begin{align}
	\no & P(X_{T_{n+1}}=j\mid (X_{T_0},{T_0}),(X_{T_1},{T_1}),\dots,(X_{T_n},{T_n}),\{ T_{n+1}-T_n\leq y \})\\
	=& \no P\bigg(\int_{\mathbb{R}}h(X_{T_n},T_{n+1}-T_n,z)\,\wp(\{T_n+(T_{n+1}-T_n)\}\times dz)=j-X_{T_n}\Big|\bigg.\\
	&\bigg.\no (X_{T_0},{T_0}),(X_{T_1},{T_1}),\dots,(X_{T_n},{T_n}),\{ T_{n+1}-T_n\leq y \}    \bigg)\\
	=& \no P\left(\int_{\mathbb{R}}h(X_{T_n},T_{n+1}-T_n,z)\,\wp(\{T_n+(T_{n+1}-T_n)\}\times dz)=j-X_{T_n}\Big|X_{T_n},T_n, \{ T_{n+1}-T_n\leq y \}  \right) \\
	=& P\left(X_{T_{n+1}}=j\mid X_{T_n},\{ T_{n+1}-T_n\leq y \} \right), \label{note 3}
\end{align}
since the distribution of $ \wp(B) $ depends only on the Lebesgue measure of $ B $ and thus is invariant under the translation of $ B $. 

\noi For every $ \omega\in\Omega $, equation \eqref{ydef} implies that
\begin{equation*}
	\int_{(T_n,T_n+t]}\int_{\mathbb{R}}g(X_{T_n},u-T_n,z)\,\wp(du,dz)=\begin{cases}
	0, \text{ for } t<T_{n+1}-T_n\\
	T_{n+1}-T_n,\text{ for } t= T_{n+1}-T_n.
	\end{cases}
\end{equation*}

\noi Hence, $ T_{n+1}-T_n $ is the first non-zero value of the following map
\begin{equation*}
	t\mapsto\int_{(0,t]}\int_{\mathbb{R}}g(X_{T_n},u,z)\,\wp(T_n+du,dz).
\end{equation*}
Again, since $ \wp(T_n+du,dz) $ is independent of $ \mathcal{F}_{T_n} $ and $ T_n $, we obtain, from the above,
\begin{align}
	&\no P[T_{n+1}-T_n\leq y\mid (X_{T_0},{T_0}),(X_{T_1},{T_1}),\dots,(X_{T_n},{T_n})]\\
	=& P[T_{n+1}-T_n\leq y\mid X_{T_n}]. \label{note 4}
\end{align}
Thus, using \eqref{note 2}, \eqref{note 3} and \eqref{note 4}, the LHS of \eqref{note 1} is equal to
\begin{align*}
	& P(X_{T_{n+1}}=j\mid X_{T_n},\{ T_{n+1}-T_n\leq y \})\times P[T_{n+1}-T_n\leq y\mid X_{T_n}]\\
	=& P(X_{T_{n+1}},T_{n+1}-T_n\leq y\mid X_{T_n})\\
	=& \text{RHS of equation \eqref{note 1}}.
\end{align*}

\noi Hence, $X_t$ is a semi-Markov process.\qed

	\end{proof}
	
	\noi We define a function $F:[0,\infty)\rightarrow [0,1]$ as $F(y|i):=1-e^{-\Lambda_i(y)}$. From \eqref{lambda condition 2}, $ \Lambda_i(y) $ is an absolutely continuous function of $y$. Thus, $F(y|i)$ is differentiable almost everywhere. Let $ f(y|i):=\frac{d}{dy}F(y|i) $. We also  define $p_{ij}(y)$, such that
	\begin{equation}\label{p_ij}
		p_{ij}(y):=\begin{cases}\dfrac{\lambda_{ij}(y)}{-\lambda_{ii}(y)}\mathds{1}_{(0,\infty)}(-\lambda_{ii}(y)),\ j\neq i\\
		\mathds{1}_{\{0\}}(-\lambda_{ii}(y)),\ j=i.
				\end{cases}
	\end{equation}
	This ensures that $ [p_{ij}(y)] $ is a probability matrix for all $y$.
	
	
	\begin{prop}
		\begin{enumerate}
			\item The function $F$ is the conditional c.d.f of the holding time of the age-dependent process $X_t$.
			\item $ p_{ij}(y)=P[X_{T_{n+1}}=j|X_{T_n}=i, Y_{T_{n+1}-}=y] $.
		\end{enumerate}
	\end{prop}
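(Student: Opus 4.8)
The plan is to exploit the explicit Poisson-random-measure construction of $(X_t,Y_t)$ from the proof of Theorem \ref{exun1}. By the strong Markov property of $Z_t=(X_t,Y_t)$ and the translation invariance of $\wp$ (both established in the proof of Theorem \ref{agedep}), it suffices to analyse the first holding time issuing from state $i$: I would condition on $X_{T_n}=i$, reset the clock so that $T_n=0$ and $Y_{T_n}=0$, and use that $\widetilde\wp(du,dz):=\wp(T_n+du,dz)$ is again a Poisson random measure with intensity $du\,dz$, independent of $\mathcal{F}_{T_n}$. All probabilities below are then computed from the law of $\widetilde\wp$ via Definition \ref{poisson def}.

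For part (1), I would first identify the region of the $(s,z)$-plane whose points trigger a transition out of $i$ before age $y$. From \eqref{gdef}, while the process sits at $i$ with age $s$, a Poisson point $(s,z)$ forces a jump exactly when $z\in\bigcup_{j\neq i}\Lambda_{ij}(s)$, a set of Lebesgue length $\sum_{j\neq i}\lambda_{ij}(s)=-\lambda_{ii}(s)$. Hence the relevant region is
\begin{equation*}
A_y:=\{(s,z):0<s\leq y,\ z\in\textstyle\bigcup_{j\neq i}\Lambda_{ij}(s)\},\qquad \mathrm{Leb}(A_y)=\int_0^y\sum_{j\neq i}\lambda_{ij}(s)\,ds=\Lambda_i(y).
\end{equation*}
The holding time exceeds $y$ if and only if $\widetilde\wp(A_y)=0$, which by Definition \ref{poisson def} has probability $e^{-\Lambda_i(y)}$. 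Therefore $P[\tau_{n+1}\leq y\mid X_{T_n}=i]=1-e^{-\Lambda_i(y)}=F(y\mid i)$, proving the first claim.

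For part (2), since the conditioning event $\{Y_{T_{n+1}-}=y\}$ is a null event, I would avoid naive conditioning and instead compute the joint law of $(X_{T_{n+1}},\tau_{n+1})$ and extract a conditional density. A transition to $j$ at age $\leq y$ corresponds to a first Poisson point of $A_\infty$ that lies in the sub-strip $\bigcup_{s\leq y}(\{s\}\times\Lambda_{ij}(s))$; using the disjoint-region independence of $\wp$ one obtains the joint distribution function
\begin{equation*}
P[X_{T_{n+1}}=j,\ \tau_{n+1}\leq y\mid X_{T_n}=i]=\int_0^y e^{-\Lambda_i(s)}\lambda_{ij}(s)\,ds.
\end{equation*}
Summing over $j\neq i$ recovers the marginal holding-time density $f(y\mid i)=e^{-\Lambda_i(y)}(-\lambda_{ii}(y))$ consistent with part (1). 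Identifying $p_{ij}(y)$ as the Radon–Nikodym derivative of the joint law against the holding-time law then gives $\lambda_{ij}(y)/(-\lambda_{ii}(y))=p_{ij}(y)$ on $\{\lambda_{ii}(y)<0\}$, matching \eqref{p_ij}.

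The main obstacle is making the heuristic ``$dy$'' computation rigorous: one must justify the boxed joint distribution function as an exact statement about $\widetilde\wp$ applied to the thin annular region $A_{y+\Delta}\setminus A_y$, and then legitimately pass to the conditional probability on a null set by realising $p_{ij}(y)$ as a Radon–Nikodym derivative rather than a literal conditional expectation. Verifying the required measurability and the disjointness/independence inputs from Definition \ref{poisson def} is where essentially all the care lies; the remaining algebra is routine.
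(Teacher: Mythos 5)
Your proof of part (1) is essentially the paper's own argument: both identify the event $\{\tau_{n+1}>y\}$ with the Poisson random measure charging no points of the region swept out by the slices $\{s\}\times\bigcup_{j\neq i}\Lambda_{ij}(s)$ for $s\in(0,y]$, whose Lebesgue measure is $\Lambda_i(y)$, giving $1-e^{-\Lambda_i(y)}$. For part (2), however, you take a genuinely different route. The paper conditions directly on the (null) event that the transition occurs at age exactly $y$, and reads off the conditional probability as the ratio of interval lengths $|\Lambda_{ij}(y)|/|\bigcup_{k\neq i}\Lambda_{ik}(y)|$ on the single time-slice $\{T_n+y\}\times\mathbb{R}$ --- short and intuitive, but it leaves the null-event conditioning entirely heuristic. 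You instead compute the joint law $P[X_{T_{n+1}}=j,\ \tau_{n+1}\leq y\mid X_{T_n}=i]=\int_0^y e^{-\Lambda_i(s)}\lambda_{ij}(s)\,ds$ directly from the first-arrival decomposition of the Poisson random measure, and then realise $p_{ij}(y)$ as the Radon--Nikodym derivative of this joint law against the holding-time law from part (1). This is correct and in fact more careful than the paper on the point it glosses over; note, though, that it inverts the paper's logical order: the paper proves the kernel formula $Q_{ij}(y)=\int_0^y e^{-\Lambda_i(s)}\lambda_{ij}(s)\,ds$ in the \emph{subsequent} theorem as a consequence of this proposition, whereas you establish that formula first and derive $p_{ij}$ from it. There is no circularity since your joint-law computation rests only on the Poisson measure, and as a bonus your argument proves the kernel theorem along the way. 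The one caveat you correctly flag yourself is that the Radon--Nikodym identification pins down $p_{ij}(y)$ only $f(\cdot\mid i)\,dy$-almost everywhere, i.e.\ on $\{\lambda_{ii}(y)<0\}$; the values assigned by \eqref{p_ij} on $\{\lambda_{ii}(y)=0\}$ are a choice of version, which is all a conditional probability given a null event can mean.
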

	\begin{proof}
		The conditional c.d.f of the holding time after the $n$\textsuperscript{th} transition, given the $n$\textsuperscript{th} state, is
		\begin{align*}
		& P[\tau_{n+1}\leq y|X_{T_n}=i]\\
		&=1-P[\{\text{No transition in }(T_n, T_n+y]\}|X_{T_n}=i]\\
		&=1-P[\wp\{(u,z)\in \mathbb{R}_+ \times \mathbb{R}_+|z\in \bigcup_{j\neq i}\Lambda_{ij}(u)\}=0\mid X_{T_n}=i],\text{ where $ u=s+T_n $ and $ s\in (0,y) $}\\
		&=1-e^{-\Lambda_i(y)}\\
		&=F(y|i).
		\end{align*}
		
		\noi Also, we note that, for $j\neq i$, $ P[X_{T_{n+1}}=j|X_{T_n}=i, Y_{T_{n+1}-}=y] $ is the probability of the event that  a Poisson point mass lies somewhere in  $\{\tau_n+y\}\times\Lambda_{ij}(y)$, given no transition of $X_t$ occurs within time $y$. This probability is
		\begin{align*}
		&\dfrac{|\Lambda_{ij}(y)|}{|\bigcup_{j\neq i}\Lambda_{ij}(y)|}\mathds{1}_{(0,\infty)}(|\bigcup_{j\neq i}\Lambda_{ij}(y)|)\\
		=&\dfrac{\lambda_{ij}(y)}{-\lambda_{ii}(y)}\mathds{1}_{(0,\infty)}(-\lambda_{ii}(y))\\
		=& p_{ij}(y).
		\end{align*}\qed
	\end{proof}
	\noi We note that under the assumptions \eqref{lambda condition 1} and \eqref{lambda condition 2}, $ F(y|i)<1 $ for all $ y>0 $ and $ \lim_{y\rightarrow\infty}F(y|i)=1 $. Thus, the holding times are unbounded but finite almost surely.

	\begin{prop}
		We have, for $ y>0 $,
		\begin{equation*}
		p_{ij}(y)\dfrac{f(y|i)}{1-F(y|i)}=\begin{cases}
		\lambda_{ij}(y),\textrm{ for } i\neq j,\\
		0,\textrm{ for } i=j.
		\end{cases}
		\end{equation*}
	\end{prop}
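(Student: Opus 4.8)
The plan is to evaluate the factor $f(y|i)/(1-F(y|i))$ explicitly from the definition of $F$, and then multiply by $p_{ij}(y)$, treating the cases $i\neq j$ and $i=j$ separately while being careful about the indicator functions that appear in the definition \eqref{p_ij} of $p_{ij}(y)$.

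First I would differentiate $F(y|i)=1-e^{-\Lambda_i(y)}$. By \eqref{lambda condition 1} the integrand $\sum_{j\neq i}\lambda_{ij}$ is bounded, hence integrable on bounded intervals, so $\Lambda_i$ is absolutely continuous with $\Lambda_i'(y)=\sum_{j\neq i}\lambda_{ij}(y)$ for almost every $y$. The chain rule then gives $f(y|i)=e^{-\Lambda_i(y)}\sum_{j\neq i}\lambda_{ij}(y)$. Since $1-F(y|i)=e^{-\Lambda_i(y)}>0$ for every $y$, the quotient is well defined and
\[
\frac{f(y|i)}{1-F(y|i)}=\sum_{j\neq i}\lambda_{ij}(y)=-\lambda_{ii}(y),
\]
using the definition $\lambda_{ii}(y):=-\sum_{j\neq i}\lambda_{ij}(y)$.

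Next I would multiply $p_{ij}(y)$ by this factor and split by cases. For $i\neq j$, substituting $p_{ij}(y)=\frac{\lambda_{ij}(y)}{-\lambda_{ii}(y)}\mathds{1}_{(0,\infty)}(-\lambda_{ii}(y))$ and multiplying by $-\lambda_{ii}(y)$ formally cancels the denominator and leaves $\lambda_{ij}(y)\,\mathds{1}_{(0,\infty)}(-\lambda_{ii}(y))$. For $i=j$, substituting $p_{ii}(y)=\mathds{1}_{\{0\}}(-\lambda_{ii}(y))$ gives $-\lambda_{ii}(y)\,\mathds{1}_{\{0\}}(-\lambda_{ii}(y))$, which vanishes identically because the factor $-\lambda_{ii}(y)$ is zero exactly on the support of the indicator.

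The only point requiring care—and essentially the whole reason the indicators are present in \eqref{p_ij}—is the degenerate value of $y$ at which $-\lambda_{ii}(y)=\sum_{j\neq i}\lambda_{ij}(y)=0$, where a naive cancellation would produce $0/0$. Here I would argue that nonnegativity of $\lambda$ (from \eqref{lambda from}) forces each $\lambda_{ij}(y)=0$ for $j\neq i$, so the target right-hand side is $0$; simultaneously $\mathds{1}_{(0,\infty)}(-\lambda_{ii}(y))=0$ annihilates the $i\neq j$ expression, so both sides agree. When $-\lambda_{ii}(y)>0$ the indicator equals $1$ and the $i\neq j$ expression reduces cleanly to $\lambda_{ij}(y)$. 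Combining the two regimes, the claimed identity holds for every $y>0$ in both the $i\neq j$ and $i=j$ cases, which completes the argument. I expect no genuine obstacle beyond this bookkeeping at the degenerate $y$.
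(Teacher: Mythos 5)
Your proposal is correct and follows essentially the same route as the paper: compute $f(y|i)/(1-F(y|i))=-\lambda_{ii}(y)$ by differentiating $F(y|i)=1-e^{-\Lambda_i(y)}$, then multiply by $p_{ij}(y)$ and handle the degenerate case $-\lambda_{ii}(y)=0$ via nonnegativity of the off-diagonal rates and the indicators in \eqref{p_ij}. Your treatment of the $0/0$ bookkeeping is, if anything, slightly more explicit than the paper's.
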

	\begin{proof}
		
		
		
		\begin{equation*}
		F(y|i):=1-e^{-\Lambda_i(y)}.
		\end{equation*}
		Hence, differentiating w.r.t $y$, we have
		\begin{align}
			\no f(y|i)&=-\lambda_{ii}(y)e^{-\Lambda_i(y)}\\
			\frac{f(y|i)}{1-F(y|i)}&=-\lambda_{ii}(y)\label{lamb ii}.
		\end{align}
		Hence, for $ i\neq j $,
		\begin{align}
		\no p_{ij}(y)\dfrac{f(y|i)}{1-F(y|i)} =& -\lambda_{ii}(y)\times\dfrac{\lambda_{ij}(y)}{-\lambda_{ii}(y)}\mathds{1}_{(0,\infty)}(-\lambda_{ii}(y))\\
		\no =& \lambda_{ij}(y),
		\end{align}
		since if $ \lambda_{ii}(y)=0 $, then for each $ j(\neq i),\ \lambda_{ij}(y)=0 $. Again, if $ \lambda_{ii}(y)=0 $, then $ p_{ii}(y)=0 $ and if if $ \lambda_{ii}(y)=0 $, then $ \frac{f(y|i)}{1-F(y|i)}=0 $ from \eqref{lamb ii}. Thus
		\begin{align*}
		\no p_{ii}(y)\dfrac{f(y|i)}{1-F(y|i)} =& 0\\
		\end{align*}
		for all $ y>0 $.
		\qed
	\end{proof}
	
	\noi We can also easily verify, from \eqref{p_ij}, that $\sum_{j\in\chi} p_{ij}(y)=1$. 
\begin{theo}
		Let $X_t$ be an age-dependent process as described in equations \eqref{xdef} and \eqref{ydef}. Then, its kernel is given by (for $ y>0,\ i\neq j $) 
		\begin{equation*}
		Q_{ij}(y):=P[X_{T_{n+1}}=j, Y_{T_{n+1}-}\leq y | X_{T_n}=i]
		=\int_{0}^{y}e^{-\Lambda_i(s)}\lambda_{ij}(s)\,ds.
		\end{equation*}
\end{theo}
	\begin{proof}
		We note that
		\begin{align*}
			Q_{ij}(y):&= P[X_{T_{n+1}}=j, Y_{T_{n+1}-}\leq y | X_{T_n}=i]\\
			&=E[P(X_{T_{n+1}}=j, Y_{T_{n+1}-}\leq y|X_{T_n}=i,Y_{T_{n+1}-})|X_{T_n}=i]\\
			&=\int_{0}^{\infty}\mathds{1}_{[0,y]}(s)P[X_{T_{n+1}}=j|X_{T_n}=i,Y_{T_{n+1}-}=s]f(s|i)\,ds\\
			&=\int_0^y p_{ij}(s)f(s|i)\,ds\\
			&=\int_0^y (1-F(s|i))\lambda_{ij}(s)\,ds\\
			&=\int_{0}^{y}e^{-\Lambda_i(s)}\lambda_{ij}(s)\,ds.
		\end{align*}
		\qed
	\end{proof}
	

	\noi It seems that in the literature, for the first time, this class of processes appears as ``Age-dependent processes'' in \cite{agedep}. In \cite{naval}, the class of semi-Markov processes is studied after dividing it into two categories, namely type I and type II. We recognise that the age-dependent process being discussed in this chapter belongs to type II. Here, we present the hierarchy of some important classes of pure jump processes in continuous time.
	\begin{gather*}
	\text{Pure jump processes}\\
	\cup\\
	\text{Time-homogeneous case}\\
	\cup\\
	\text{Semi-Markov processes}\\
	\cup\\
	\text{Age-dependent case}\\
	\cup\\
	\text{Age-independent case}\\
	\cup\\
	\text{Markov processes}
	\end{gather*}

	\section{Time-inhomogeneous Age-dependent processes}
	It is interesting to note that the construction of age-dependent processes in Section \ref{sec 2.1} can easily be generalised to construct a time-inhomogeneous non-Markov pure jump process. To this end we consider a Poisson random measure $N$ has the form
	
	\begin{equation*}
		N(dt,dz):= \wp(d\eta(t),dz),
	\end{equation*}
	where $ \eta $ is an increasing differentiable function with $ \eta(0)=0 $. 
	
	\noi This random measure has intensity $\eta'(t)\,dt\,dz$, where $\eta'$, under the assumption, is a continuous function from $[0,\infty)$ to $(0,\infty)$. Thus, 
	\begin{equation*}
		E[N(A)]=\int_A \eta(t)dt\,dz
	\end{equation*}
	for any set $A\in \mathcal{F}$. We consider a new pair of coupled stochastic integral equations in $ (\tilde{X}_t,\tilde{Y}_t) $:
	\begin{eqnarray}
		\tilde{X}_t=\tilde{X}_0+\int_{0}^{t}\int_{\mathbb{R}}h(\tilde{X}_{u^-},\tilde{Y}_{u^-},z)\,N(du,dz)   \label{xtildef}\\
		\tilde{Y}_t=\tilde{Y}_0+t-\int_{0}^{t}\int_{\mathbb{R}}g(\tilde{X}_{u^-},\tilde{Y}_{u^-},z)\,N(du,dz),\label{ytildef}
	\end{eqnarray}
	where $g$ and $h$ are defined by equations \eqref{gdef} and \eqref{hdef}, respectively.
	
	
	\begin{theo}
		There exists a unique solution $ (\tilde{X}_t,\tilde{Y}_t) $ to equations \eqref{xtildef} and \eqref{ytildef}.
	\end{theo}
	\begin{proof}
		The proof can be constructed in a similar way as that of Theorem \ref{exun1}.
	\end{proof}
	
	\begin{theo}
		The process $\tilde{Z}_t:=(\tilde{X}_t,\tilde{Y}_t)$ is a Markov process.
	\end{theo}
	\begin{proof}
		
		\begin{align*}
			\tilde{X}_T&=\tilde{X}_0+\int_{0}^{T}\int_{\mathbb{R}}h(\tilde{X}_{u^-},\tilde{Y}_{u^-},z)\,N(du,dz)\\
			&=\tilde{X}_t+\int_{t}^{T}\int_{\mathbb{R}}h(\tilde{X}_{u^-},\tilde{Y}_{u^-},z)\,N(du,dz)
		\end{align*}
		and
		\begin{align*}
			\tilde{Y}_T&=\tilde{Y}_0+T-\int_{0}^{T}\int_{\mathbb{R}}g(\tilde{X}_{u^-},\tilde{Y}_{u^-},z)\,N(du,dz)\\
			&=\tilde{Y}_t+(T-t)-\int_{t}^{T}\int_{\mathbb{R}}g(\tilde{X}_{u^-},\tilde{Y}_{u^-},z)\,N(du,dz).
		\end{align*}
		
	\end{proof}
	
	\begin{theo}
		$\tilde{Z}_t:=(\tilde{X}_t,\tilde{Y}_t)$ is a c\`{a}dl\`{a}g process.
	\end{theo}
	\begin{proof}
		This follows from the fact that $\eta'$ is bounded on compact sets and $ \lambda_{ij}(y) $ being bounded from equation \eqref{lambda condition 1}. 
	\end{proof}
	
	\begin{theo}
		The sequence $\{\tilde{X}_{T_n}\}_n$ is a Markov chain.
	\end{theo}
	\begin{proof}
		\begin{align*}
			& P[\tilde{X}_{T_{n+1}}=j|\tilde{X}_{T_0},\tilde{X}_{T_1},\dots,\tilde{X}_{T_n}=i]	\\
			=& E[P(\tilde{X}_{T_{n+1}}=j|\mathcal{F}_{T_n},T_{n+1},\tilde{X}_{T_n})|\tilde{X}_{T_0},\tilde{X}_{T_1},\dots,\tilde{X}_{T_n}=i]\\
			=& E(P[N(\left\lbrace T_{n+1} \right\rbrace\times\Lambda_{ij}(\tau_{n+1}) )\neq 0|N(\left\lbrace T_{n+1} \right\rbrace\times\Lambda_{ik}(\tau_{n+1}))\neq 0 \mathrm{\ for\ some\ } k]|\tilde{X}_{T_0},\tilde{X}_{T_1},\dots,\tilde{X}_{T_n}=i)\\
			=& E\left[ \dfrac{\lambda_{ij}(T_{n+1}-T_n)}{\sum_{k\neq j}\lambda_{ik}(T_{n+1}-T_n)} \bigg|\tilde{X}_{T_0},\tilde{X}_{T_1},\dots,\tilde{X}_{T_n}=i,\tilde{Y}_{T_n}=0\right] \\
			=& E\left[ \dfrac{\lambda_{ij}(\tilde{Y}_{T_{n+1}\!-})}{\sum_{k\neq j}\lambda_{ik}(\tilde{Y}_{T_{n+1}\!-})} \bigg|\tilde{X}_{T_n}=i\right],
		\end{align*}
		since the conditional distribution of $ \tilde{Y}_{T_{n+1}\!-} $ given $ \mathcal{F}_{T_n} $ is the same as that given $ \tilde{X}_{T_n} $. Thus the conditional probability on the LHS depends entirely on $\tilde{X}_{T_n}=i$.\qed
	\end{proof}

	\noi However, the process $\tilde{X}_t$ is not a semi-Markov process. This is because the transition probability can be written as
	\begin{align*}
		& P\Bigg[ \bigcup_{0<s'\leq y}\left\lbrace N\left( \bigcup_{0<s< s'}\left\lbrace T_n+s\right\rbrace \times \bigcup_{k\neq i}\Lambda_{ik}(s) \right)=0  \right\rbrace \bigcap\left\lbrace N\left( \left\lbrace T_n+s'\right\rbrace \times \Lambda_{ij}(s') \right)=1  \right\rbrace \Bigg| \mathcal{F}_{T_n},\tilde{X}_{T_n}=i \Bigg] \\
		=& P\Bigg[ \bigcup_{0<s'\leq y}\left\lbrace N\left( \bigcup_{0<s< s'}\left\lbrace T_n+s\right\rbrace \times \bigcup_{k\neq i}\Lambda_{ik}(s) \right)=0  \right\rbrace \bigcap\left\lbrace N\left( \left\lbrace T_n+s'\right\rbrace \times \Lambda_{ij}(s') \right)=1  \right\rbrace \Bigg| \tilde{X}_{T_n}=i,T_n \Bigg].
	\end{align*}
	However, the Poisson random measure $ N $ is not translation-invariant with respect to time, unless $\eta'$ is a constant. Hence, no further simplification is possible in general.\qed
	
	\section{The Infinitesimal Generator}
	We will derive an expression for the infinitesimal generator of an augmented age-dependent process. Let $(X_t,Y_t)$ be an augmented age-dependent process. Let $\phi: \chi\times[0,\infty)$ be a differentiable function. Then, by It\={o}'s formula,
	\begin{align}
	d\phi(X_t,Y_t)&=\frac{\partial \phi}{\partial y}(X_t,Y_t)\,dY_t^c + \phi(X_t,Y_t)-\phi(X_t-,Y_t-)\nonumber\\
	&=\no \frac{\partial \phi}{\partial y}(X_t,Y_t)\,dt\\
	&+\int_{\mathbb{R}}\left[ \phi(X_{t-}+h(X_{t-},Y_{t-},z),Y_{t-}-g(X_{t-},Y_{t-},z))-\phi(X_{t-},Y_{t-}) \right](\hat{\wp}(dt,dz)+dt\,dz),
	\end{align} 
	where $ \hat{\wp}(dt,dz)={\wp}(dt,dz)-dt\,dz $ is the compensated Poisson random measure, with mean zero, independent of $X_0$. The process obtained by integrating w.r.t $\hat{\wp}$ is a martingale, $M_t$. Hence, we can write
	\begin{equation}
	d\phi(X_t,Y_t)=\frac{\partial \phi}{\partial y}(X_t,Y_t)\,dt + \sum_{j\neq X_{t-}}[\phi(j,0)-\phi(X_{t-},Y_{t-})]\lambda_{X_{t-}j}(Y_{t-})\,dt+dM_t.
	\end{equation}
	Thus, the infinitesimal generator, $\mathcal{L}$, of the augmented age-dependent process is given by the following expression:
	\begin{equation}
		\mathcal{L}\phi(i,y)=\frac{\partial \phi}{\partial y}(i,y)+\sum_{j\neq i}[\phi(j,0)-\phi(i,y)]\lambda_{ij}(y).
	\end{equation}

	\section{An example}
	Here we present some example of age-dependent processes with finitely many states. We let the (age-dependent) transition rate matrix be given by
	\begin{equation}\label{example1}
		\Lambda(y)=\Lambda^{(1)}+y\Lambda^{(2)},
	\end{equation}
	where $ \Lambda^{(1)} $ and $ \Lambda^{(2)} $ are two rate matrices of order $ k $. If, in a particular case, $ \Lambda^{(2)}=0 $, the trivial matrix, then $ \Lambda(y)=\Lambda^{(1)} $ for all $ y $ and the resulting process becomes Markov. Whereas, the resulting process becomes an age-independent semi-Markov process when 
	$ \Lambda^{(1)}=c\Lambda^{(2)}$ for some $c\in \mathbb{R}_+ $. But of course, in general, $ \Lambda(y) $ prescribes an age-dependent process.
	
	\noi The transition probabilities for this process are given by $ (i\neq j) $
	\begin{equation}
		p_{ij}(y)=\frac{\lambda^{(1)}_{ij}+y\lambda^{(2)}_{ij}}{-\lambda^{(1)}_{ii}-y\lambda^{(2)}_{ii}},
	\end{equation}
	which depend explicitly on $ y $. Hence, the stochastic process with such a distribution of transition times is neither a continuous-time Markov process nor an age-independent semi-Markov process.
	
	\noi For inference purposes, one may consider a parametric family $ \Lambda(y) $ given by
	\begin{equation*}
		\Lambda(y)=\Lambda^{(1)}+\Lambda^{(2)}y+\Lambda^{(3)}y^2+\dots + \Lambda^{(n+1)}y^n,\ y>0,
	\end{equation*}
	where each $ \Lambda^{(i)} $ is a rate matrix of order $ k $ and taken as a parameter. In other words, one may estimate the transition rate function with polynomials of fixed degree. In such a consideration, the number of undetermined independent parameters would be $ (n+1)(k^2-k) $. We emphasise that this family includes all Markov processes with $ k $ states and all age-independent semi-Markov processes with $ k $ states whose hazard rates are polynomials of degree not more than $ n$. Of course one may consider 
	\begin{equation*}
		\Lambda(y)=\sum\limits_{i=1}^{n+1}\Lambda^{(i)}\theta_i(y),
	\end{equation*}
	where $ \{ \theta_i \}_{i=1}^{n+1} $ is any complete orthonormal basis of $ L^2([0,\infty)) $.

	\section{Motivation for studying semi-Markov modulated markets}
	\noi In a financial market, there are numerous assets whose dynamics can be modelled by stochastic differential equations (SDEs). The drift and volatility parameters appear to be non constant when verified by empirical data. We aim, in this project, to consider a market model in which these parameters are driven by a class of pure jump processes. In the literature available on this subject, such models are referred to as regime-switching models. Although Markov switching has been better studied in the literature, we, here, aim to consider a larger class of regime switching, viz. ``age-dependent processes''. In this section, we further clarify the importance of such considerations.

	\noi The difference between markets with Markov-switching and those with semi-Markov-switching is more than superficial. To illustrate the greater applicability of the semi-Markov or age-dependent models, consider a market having only two possible regimes modulated by a semi-Markov process with two states 1 and 2, say. Let $F_i$ and $m_i$ denote the c.d.f. and mean of holding time at regime $i$ respectively for each $i$. Further assume that there is a $\delta >0$ such that $F_1(\delta)=F_2(\delta)=0$. Now consider a event $A$ in which a transition takes place at $T-\delta$, where $T$ is the expiry. Then of course there would be no more transition before expiry with probability 1. Thus all the no-arbitrage prices of European call option at time $T-\delta$ are equal to the price suggested by the Black-Scholes-Merton model with fixed parameters of that regime. On the other hand if the regimes of this real market should be modelled by a Markov process whose holding times have means $m_1$ and $m_2$ respectively, then the q-matrix would be
	$\left(
	\begin{array}{cc}
	-\frac{1}{m_1} & \frac{1}{m_1}\\
	\frac{1}{m_2} & -\frac{1}{m_2} \\
	\end{array}
	\right)$. It is evident that under this Markov switching model the conditional probability of further transition before the expiry, given the event $A$, is nonzero. Hence, the locally risk minimizing price of European call option at time $T-\delta$ should be different from Black-Scholes-Merton price with fixed parameters of that regime.
	
	\noi Such a model may, in some cases, be a better approximation to the real markets than the Markov-switching model. This provides the motivation for studying the pricing problem in a semi-Markov modulated market.
	

	\chapter{A non-local parabolic PDE}

	
	We consider a partial differential equation that arises in the derivative pricing problem in a market with semi-Markovian regime switching. This is a generalization of the Black-Scholes PDE. Market parameters are seldom constant in reality. Instead, the markets go through various phases or ``regimes'', in which each market parameter is more or less constant. We often hear of ``bull'' markets, ``flat'' markets and ``bear'' markets. Also known are low/high interest rate regimes and tight liquidity situations, etc. These can be better modelled by regime-switching models, such as those analysed in \cite{BAS}, \cite{BE}, \cite{DES}, \cite{DKR}, \cite{ECS}, \cite{GZ}, \cite{JH}, \cite{RR} and \cite{MR}. Various models of regime-switching have been studied. Work has been done on the pricing problem in a Markov-modulated market, for example, in \cite{tamal}. However, the memoryless property of Markov processes imposes certain restrictions on the model. A semi-Markov regime-switching model allows for greater flexibility and accommodates the impact of business cycles which exhibit duration dependence. In this chapter, we consider the PDE arising from an age-dependent regime-switching model, and show that this PDE is, in fact, equivalent to an equation known as a Volterra equation of the second kind. And thus, we establish the existence of a unique classical solution in an appropriate class of functions. The connection between the PDE and the pricing problem is deferred to the next chapter.

	\noi Let $\mathcal{X}:=\{1,2,\ldots,k\}$ be a finite set.
	We define the following functions
	\begin{equation}\label{market parameters1}
	r:\chi\rightarrow(0,\infty),\ \mu:(0,\infty)\times\chi\rightarrow(0,\infty),\  \sigma:(0,\infty)\times\chi\rightarrow(0,\infty).
	\end{equation}
	with $ r(i)\geq 0,~\sigma(t,i)> 0 $ for all $i\in\chi,~t\in[0,\infty)$.
	 We consider a differentiable function $\la:\mathcal{X}\times \mathcal{X}\times [0,\infty)\to [0,\infty)$ satisfying equation \eqref{lambda condition 2}, and $\lambda_{ii}(y):=-\sum_{j\neq i} \lambda_{ij}(y)$. 
	
	\noi The system of differential equations, under  consideration is given by
	\begin{eqnarray}\label{p1}
	\no \frac{\partial}{\partial t} \varphi(t, s, i, y)+ \frac{\partial} {\partial y} \varphi(t, s, i, y) + r( i ) s \frac{\partial} {\partial s} \varphi(t, s, i, y) + \frac{1}{2} \si^2( t,i ) s^2 \frac{\partial^2} {\partial s^2} \varphi(t, s, i, y)\\
	+\sum_{j\neq i}\la_{ij}(y)\big(\varphi(t, s, j, 0) -\varphi(t, s, i, y)\big) = r( i ) ~ \varphi(t, s, i, y),
	\end{eqnarray}
	\noi defined on
	\begin{equation}\label{D}
	\mathcal{D}:= \{ (t, s, i, y)\in (0,T)\times (0,\infty)\times
	\mathcal{X}\times (0,T) \mid y \in (0,t)\},
	\end{equation}
	\noi and with conditions
	\begin{align}
	\varphi (T, s, i, y)=& K(s); \tab s\in (0,\infty); \tab 0 \le
	y\le T ; \tab i= 1, 2, \ldots, k \label{boundary}
	\end{align}
	\noi where $K$ is a non-negative function of at most linear growth. This assumption on $K(s)$ is justified since we shall be considering in the next chapters defaultable bonds, which can be written as contingent claims satisfying this condition. Some of the special cases of this equation appear in \cite{DKR}, \cite{MR}, \cite{RR}, \cite{DES}, \cite{AGMKG} and \cite{BAS} for pricing a European contingent claim under certain regime switching market assumptions. Owing to the simplicity of the special case, generally authors refer to some standard results in the theory of parabolic PDE for existence and uniqueness issues. But in its general form which arises in this chapter, no such ready reference is available. So, we produce a self contained proof using Banach fixed point theorem. We accomplish this in two steps. First we consider a Volterra integral equation of second kind and establish existence and uniqueness result of that. Then we show in a couple of propositions, that the PDE and the IE problems are ``equivalent". Thus we obtain the existence and uniqueness of the PDE in Theorem \ref{theo1}. Some further properties, viz. the positivity and growth property are also obtained. It is also shown here that the partial derivative of the solution constitutes the optimal hedging strategy of the corresponding claim. We further show that the partial derivative of $\varphi$, can be written as an integration involving $\varphi$ which enables one to develop a robust numerical scheme to compute the Greeks. This study paves the way for addressing many other interesting problems involving this new set of PDEs.
	
	\section{Existence}
	Consider the following initial value problem which is known as B-S-M PDE for each $i$
	\begin{equation}\label{eq2}
	\frac{\partial\rho_{i}(t,s)}{\partial t} + r(i)s\frac{\partial\rho_{i}(t,s)}{\partial s}+\frac{1}{2}\sigma^2(t,i)s^2 \frac{\partial^2\rho_{i}(t,s)}{\partial s^2} = r(i)\rho_{i}(t,s)
	\end{equation}
	for $(t,s)\in (0,T)\times (0,\infty)$ and $\rho_{i}(T,s)=K(s)$. Here, $K$ is assumed to be a non-negative function of at most linear growth. This has a unique classical solution with at most linear growth (see \cite[pg. 202]{KK}). 
	
	\noi We define a function $L:[0,\infty)\times(0,\infty)\times(0,\infty)\times\chi\times(0,\infty)$, where
	\begin{equation}\label{ldef}
		L(t,x,s,i,v):=\frac{\ln\left(\frac{x}{s}\right)-\int_{t}^{t+v}\left(r(i)-\frac{\sigma^2(u,i)}{2}\right)\,du}{\sqrt{\int_{t}^{t+v}\sigma^2(u,i)\,du}}.
	\end{equation}
	
	\noi We also define a function
	\begin{equation}\label{alphadef}
	\alpha(x;t,s,i,v):=\frac{e^{-\frac{1}{2}L^2}}{\sqrt{2\pi}x\sqrt{\int_{t}^{t+v}\sigma^2(u,i)\,du}}.
	\end{equation}
	For notational convenience, we let $\bar{\sigma}$ denote the quantity $ \sqrt{\int_{t}^{t+v}\sigma^2(u,i)\,du} $.
	
	\begin{prop}
		The function $\alpha$ is a log-normal probability density function.
	\end{prop}
	\begin{proof}
		We at once recognise $ \alpha(x;t,s,i,v) $ to be a log-normal density function with the mean of the underlying normal distribution being $ \ln(s)+\int_{t}^{t+v}\left(r(i)-\frac{\sigma^2(u,i)}{2}\right)\,du $ and the corresponding variance being $ \int_{t}^{t+v}\sigma^2(u,i)\,du $. 
		\qed
	\end{proof}
	
	
	\begin{prop}
		\begin{equation}\label{10ae2}
		L\frac{\partial L}{\partial v}+r(i)\frac{L}{\bar{\sigma}}+\frac{\sigma^2(t+v,i)L^2}{2\bar{\sigma}^2}-\frac{\sigma^2 (t+v,i)L}{2\bar{\sigma}}=0.
		\end{equation}
	\end{prop}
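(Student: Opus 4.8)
The plan is to treat \eqref{10ae2} as a pure differentiation identity, since nothing probabilistic is involved: by \eqref{ldef} the quantity $L$ is an explicit function of $v$ (with $t,x,s,i$ held fixed) built out of two elementary pieces. First I would isolate the numerator and denominator of $L$ by writing $L=N/\bar{\sigma}$, where $N:=\ln(x/s)-\int_{t}^{t+v}\big(r(i)-\tfrac{1}{2}\sigma^2(u,i)\big)\,du$ and $\bar{\sigma}^2=\int_{t}^{t+v}\sigma^2(u,i)\,du$, so that in particular $N=L\,\bar{\sigma}$.

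Next I would compute the two basic $v$-derivatives by the fundamental theorem of calculus: $\partial_v N=-r(i)+\tfrac{1}{2}\sigma^2(t+v,i)$ and $\partial_v(\bar{\sigma}^2)=\sigma^2(t+v,i)$, the latter giving $\partial_v\bar{\sigma}=\sigma^2(t+v,i)/(2\bar{\sigma})$. Applying the quotient rule then yields an expression for $\partial_v L$, and substituting $N=L\,\bar{\sigma}$ rewrites it entirely in terms of $L$, $\bar{\sigma}$, $r(i)$ and $\sigma^2(t+v,i)$, namely
\[
\frac{\partial L}{\partial v}=\frac{-r(i)}{\bar{\sigma}}+\frac{\sigma^2(t+v,i)}{2\bar{\sigma}}-\frac{L\,\sigma^2(t+v,i)}{2\bar{\sigma}^2}.
\]

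Finally I would multiply this identity through by $L$ and insert the result into the left-hand side of \eqref{10ae2}. The three surviving terms then pair off against the three explicit terms $r(i)L/\bar{\sigma}$, $\sigma^2(t+v,i)L^2/(2\bar{\sigma}^2)$ and $-\sigma^2(t+v,i)L/(2\bar{\sigma})$, each cancelling its counterpart exactly, so the total is $0$. There is no genuine obstacle here; the only point requiring care is the bookkeeping in applying the quotient rule and re-expressing $N$ via $N=L\,\bar{\sigma}$, since a sign slip in $\partial_v N$ or a misplaced factor of $\bar{\sigma}$ in $\partial_v\bar{\sigma}$ would spoil the otherwise perfect cancellation.
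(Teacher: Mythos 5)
Your proposal is correct and follows exactly the paper's route: the paper's proof consists of the single line ``differentiate $L$ w.r.t.\ $v$ and apply Leibnitz's rule,'' and your computation of $\partial_v N$, $\partial_v\bar{\sigma}$, the quotient rule, and the substitution $N=L\bar{\sigma}$ is precisely the bookkeeping that line leaves implicit. The cancellation you describe checks out term by term.
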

	\begin{proof}
		We differentiate $L(t,x,s,i,v)$ w.r.t $v$ and apply Leibnitz's rule to get the result.
%
		 \qed
	\end{proof}
		
	\noi Set $ 	\mathcal{B}:=\left\{\varphi:\bar{\mathcal{D}}\rightarrow [0,\infty), \mathrm{continuous} \mid \|\varphi\|:=\sup_{\bar{\mathcal{D}}}\mid\frac{\varphi(t,s,i,y)}{1+s}\mid < \infty\right\} $.
	\begin{lem}\label{lm2}
		Consider the following integral equation
		\begin{align}
		\no\varphi(t,s,i,y)=&\frac{1- F(T-t+y\mid i)}{1-F(y\mid i)} \rho_{i}(t,s)+\int_0^{T-t} e^{-r(i)v} \frac{f(y+v\mid i)} {1-F(y\mid i)} \times\\
		& \sum_{j\neq i} p_{i j}(y+v) \int_0^{\infty} \varphi(t+v,x,j,0)\alpha(x;t,s,i,v)\,dx\,dv \label{pricing pde}
		\end{align}
		Then (i) the problem \eqref{pricing pde} has unique solution in $\mathcal{B}$, (ii) the solution of the integral equation is in $C^{1,2,1}(\mathcal{D})$, and (iii) $ \vf(t,s,i,y) $ is non-negative.
	\end{lem}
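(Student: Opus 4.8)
The plan is to recast \eqref{pricing pde} as a fixed-point problem $\varphi=\mathcal{T}\varphi$, where $\mathcal{T}$ is the operator defined by the right-hand side of \eqref{pricing pde}, and to apply the Banach fixed point theorem on $\mathcal{B}$. The first step is to verify that $\mathcal{T}$ maps $\mathcal{B}$ into itself. Non-negativity of $\mathcal{T}\varphi$ is immediate, since every factor $\frac{1-F(T-t+y\mid i)}{1-F(y\mid i)}$, $e^{-r(i)v}$, $\frac{f(y+v\mid i)}{1-F(y\mid i)}$, $p_{ij}(y+v)$, $\alpha$ and $\rho_i$ is non-negative (recall $\rho_i\ge 0$ because $K\ge 0$); this observation gives part (iii) as soon as the fixed point is located in $\mathcal{B}$. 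For the growth bound I would use that $\alpha(\,\cdot\,;t,s,i,v)$ is a probability density with $\int_0^\infty\alpha\,dx=1$ and mean $\int_0^\infty x\,\alpha\,dx=s\,e^{r(i)v}$, so that $\int_0^\infty\varphi(t+v,x,j,0)\,\alpha\,dx\le\|\varphi\|(1+s\,e^{r(i)v})$. Combined with the identity $\frac{f(y+v\mid i)}{1-F(y\mid i)}\sum_{j\neq i}p_{ij}(y+v)=-\lambda_{ii}(y+v)\,e^{-(\Lambda_i(y+v)-\Lambda_i(y))}$ and the cancellation $e^{-r(i)v}(1+s\,e^{r(i)v})=e^{-r(i)v}+s\le 1+s$, the discount factor exactly absorbs the log-normal mean and leaves a bound of the form $\|\varphi\|(1+s)$. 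Continuity of $\mathcal{T}\varphi$ on $\bar{\mathcal{D}}$ follows by dominated convergence.

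The contraction step is where the Volterra structure must be exploited, since a direct estimate only gives
\[
\frac{|\mathcal{T}\varphi_1-\mathcal{T}\varphi_2|}{1+s}\le\|\varphi_1-\varphi_2\|\,\big(1-e^{-(\Lambda_i(T-t+y)-\Lambda_i(y))}\big),
\]
whose constant may approach $1$. I would therefore introduce the weighted norm $\|\varphi\|_\beta:=\sup_{\bar{\mathcal{D}}}\frac{|\varphi(t,s,i,y)|}{(1+s)\,e^{\beta(T-t)}}$. Because $T-t$ ranges over the bounded interval $(0,T)$, this norm is equivalent to $\|\cdot\|$, so $\mathcal{B}$ remains complete. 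Re-running the estimate with this weight and using the uniform bound $-\lambda_{ii}(\,\cdot\,)=\sum_{j\neq i}\lambda_{ij}(\,\cdot\,)\le c$ from \eqref{lambda condition 1} yields $\|\mathcal{T}\varphi_1-\mathcal{T}\varphi_2\|_\beta\le\frac{c}{\beta}\|\varphi_1-\varphi_2\|_\beta$, a contraction once $\beta>c$. The Banach fixed point theorem then provides the unique $\varphi\in\mathcal{B}$ solving \eqref{pricing pde}, establishing (i); and since this fixed point lies in $\mathcal{B}$, it is non-negative, giving (iii).

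For the regularity claim (ii) I would differentiate \eqref{pricing pde} in $t$, $s$ and $y$. The $y$-dependence enters only through the prefactors $\frac{1-F(T-t+y\mid i)}{1-F(y\mid i)}$ and $\frac{f(y+v\mid i)}{1-F(y\mid i)}p_{ij}(y+v)$, which are differentiable because $\lambda$ is assumed differentiable, so $\partial_y\varphi$ is obtained simply by differentiating these factors. The first term $\frac{1-F(T-t+y\mid i)}{1-F(y\mid i)}\rho_i(t,s)$ inherits the smoothness in $(t,s)$ of the classical Black--Scholes--Merton solution $\rho_i$. The real work is in the double integral, whose $(t,s)$-dependence enters through the log-normal kernel $\alpha(x;t,s,i,v)$ and, for $t$, also through the outer limit $T-t$ and the argument $t+v$ of $\varphi$. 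The plan is to pass $\partial_s$, $\partial_s^2$ and $\partial_t$ under the $dx$-integral, which is legitimate for each fixed $v>0$ because the Gaussian kernel is smooth and its $x$-derivatives decay fast enough to dominate the at-most-linear growth of $\varphi(t+v,\cdot,j,0)$; the identity \eqref{10ae2} for $L$ is precisely the computation that controls the kernel derivatives, and the Leibniz boundary contribution at $v=T-t$ is harmless on the open set $\mathcal{D}$, where $T-t>0$.

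The main obstacle is this last point: the second spatial derivative $\partial_s^2\alpha$ carries a factor $1/\bar{\sigma}^2=\big(\int_t^{t+v}\sigma^2(u,i)\,du\big)^{-1}$ that blows up like $1/v$ as $v\downarrow 0$ and so is not integrable against $dv$ at the lower endpoint, whereas $\partial_s\alpha$ blows up only like $1/\sqrt{v}$ and is integrable. I expect to resolve this by a bootstrap: first establish $C^{1}$ regularity in $s$ directly, since the $s$-singularity of the kernel is then integrable; then, knowing $\varphi(t+v,\cdot,j,0)$ is differentiable in its spatial variable, integrate by parts in $x$ to transfer one derivative from $\alpha$ onto $\varphi$, lowering the order of the singularity to the integrable range and yielding the continuous second derivative $\partial_s^2\varphi$. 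Rewriting the inner integral as a convolution in the log-price variable against a standard Gaussian of vanishing variance makes the cancellation that drives this argument transparent, and shows the differentiated integrals converge locally uniformly on $\mathcal{D}$, so that all the derivatives defining $C^{1,2,1}(\mathcal{D})$ exist and are continuous.
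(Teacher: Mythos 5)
Your proposal follows the same overall strategy as the paper: both recast \eqref{pricing pde} as a fixed-point problem for the operator given by its right-hand side and apply the Banach fixed point theorem on $\mathcal{B}$, both use the exact computation $\int_0^\infty(1+x)\,\alpha(x;t,s,i,v)\,dx=1+se^{r(i)v}$ together with the discount factor $e^{-r(i)v}$ to stay inside $\mathcal{B}$, and both obtain non-negativity by iterating the operator on the closed cone of non-negative elements of $\mathcal{B}$. The one genuine divergence is the contraction step. The paper works directly in the unweighted norm and bounds the Lipschitz constant by $\sup\frac{F(y+T-t\mid i)-F(y\mid i)}{1-F(y\mid i)}=\sup\bigl(1-e^{-(\Lambda_i(y+T-t)-\Lambda_i(y))}\bigr)$, which is uniformly below $1$ because $y\le t$ on $\mathcal{D}$ forces $y+T-t\le T$ and \eqref{lambda condition 1} gives $\Lambda_i(y+T-t)-\Lambda_i(y)\le cT$, so the constant is at most $1-e^{-cT}<1$. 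Your worry that the direct constant ``may approach $1$'' is therefore unfounded under the standing hypotheses; nevertheless your Bielecki norm $\|\cdot\|_\beta$ is a correct and somewhat more robust alternative, and the estimate $\|\mathcal{T}\varphi_1-\mathcal{T}\varphi_2\|_\beta\le\frac{c}{\beta}\|\varphi_1-\varphi_2\|_\beta$ checks out. On part (ii) you are more careful than the paper, which essentially asserts the regularity of the double-integral term: you correctly identify that $\partial_s^2\alpha$ carries a non-integrable $1/\bar{\sigma}^2\sim 1/v$ singularity at $v=0$ and propose a bootstrap --- first obtain $\partial_s\varphi$ from the integrable $1/\sqrt{v}$ singularity, then use the identity $s\,\partial_s\alpha=-\partial_x(x\alpha)$ to move one derivative onto $\varphi(t+v,\cdot,j,0)$ by integration by parts. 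That idea works, but to make it rigorous you must also extract a linear-growth bound on $\partial_x\varphi(t+v,\cdot,j,0)$ from the first step, to justify the integration by parts and dominate the resulting integrand; as written this portion is a plan rather than a proof, though it addresses a point the paper leaves implicit.
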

	\begin{proof} (i) We first note that a solution of \eqref{pricing pde} is a fixed point of the operator $A$ and vice versa, where
	\begin{eqnarray}
	\no A \varphi(t,s,i,y)&:=& \frac{1- F(T-t+y\mid i)}{1-F(y\mid i)}\rho_{i}(t,s)+\int_0^{T-t}e^{-r(i) v} \frac{f(y+v\mid i)}{1-F(y\mid i)}
	\sum_{j\neq i} p_{ij}(y+v)\\
	\no && \int^\infty _0 \varphi(t+v,x,j,0) \alpha(x;t,s,i,v)\,dx \,dv.
	\end{eqnarray}
	
	\noi  It is easy to check that for each $\vf\in \mathcal{B}$, $A\vf: \bar{\mathcal{D}}\to (0,\infty)$ is continuous. The continuity of $A\vf$ follows from that of $ \rho_i $. 

	\noi To prove that $A$ is a contraction in $\mathcal{B}$, we need to show that for $\vf_1,\vf_2\in \mathcal{B}$, $||A\varphi_1-A\varphi_2|| \leq J||\varphi_1-\varphi_2||$ where $J<1$. In order to show existence and uniqueness in the prescribed class, it is sufficient to show that $A$ is a contraction in $\mathcal{B}$. The Banach fixed point theorem ensures existence and uniqueness of the fixed point in $\mathcal{B}$. To show that for $\vf_1,\vf_2\in \mathcal{B}$, $||A\varphi_1-A\varphi_2|| \leq J||\varphi_1-\varphi_2||$ where $J<1$, we compute
	\begin{align*}
	\|A\varphi_1-A\varphi_2\|=&\sup_{\bar{\mathcal{D}}}\bigg|\frac{ A\varphi_1-A\varphi_2}{1+s}\bigg|\\
	=&\sup_{\bar{\mathcal{D}}}\bigg|\int^{T-t}_0 e^{-r(i) v} \frac{f(y+v\mid i)}{1-F(y\mid i)} \sum_{j\neq i} p_{ij}(y+v)\times\\
	& \int^\infty_0 (\varphi_1-\varphi_2)(t+v,x,j,0)\frac{\alpha(x;t,s,i,v)}{1+s}dx dv\bigg|\\
	\leq& \sup_{\bar{\mathcal{D}}} \bigg| \int^{T-t}_0 e^{-r(i) v} \frac{f(y+v\mid i)}{1-F(y\mid i)}\sum_{j\neq i} p_{ij}(y+v) \int^\infty_0 (1+x)\times\\
	&\sup_{(t',x',j',y')\in\bar{\mathcal{D}}}\bigg|\frac{\varphi_1(t',x',j',y')-\varphi_2(t',x',j',y')}{1+x'}\bigg| \frac{\alpha(x;t,s,i,v)}{1+s}dx dv\bigg|\\
	=& \sup_{\bar{\mathcal{D}}} \bigg| \int^{T-t}_0 e^{-r(i) v} \frac{f(y+v \mid i)}{1-F(y \mid i)}\|\varphi_1- \varphi_2 \| \frac{a(t,s)}{1+s}dv\bigg|
	\end{align*}
	where, 
	\begin{align*}
		a(t,x,s,i,v):=& \int^\infty_0 (1+x) \alpha(x;t,s,i,v) dx\\
		 =& 1 + \exp\left \lbrace\ln s +  r(i)-\int_{t}^{t+v}\frac{\sigma^2(u,i)}{2}\,du+\int_{t}^{t+v}\frac{\sigma^2(u,i)}{2}\,du\right\rbrace\\
		  =& 1+ s e^{r(i)v}.
	\end{align*}
	
	\noi Thus, $\|A\varphi_1-A\varphi_2\| \, \leq\, J\|\varphi_1-\varphi_2\|$ where,
	\begin{eqnarray*}
	J&=&\sup_{\bar{\mathcal{D}}} \bigg|\int^{T-t}_0e^{-r(i) v}\frac{f(y+v \mid i)}{1-F(y \mid i)}\frac{1+se^{r(i)v}}{1+s}dv\bigg|\\
	&\leq &\sup_{\bar{\mathcal{D}}}\bigg( \frac{1}{1-F(y\mid i)}\int^{T-t}_0 f(y+v|i)dv\bigg)\\
	&=&\sup_{\bar{\mathcal{D}}}\bigg(\frac{F\left( y+T-t\mid i\right) -F(y|i)}{1-F\left(y|i\right)}\bigg)\\
	&<&\frac{1-F(y|i)}{1-F(y|i)}~=~1
	\end{eqnarray*}
	using $r(i)\geq 0$ and the properties of $\lambda$ and $F$.
	
	\noi (ii) Using equation\eqref{lambda condition 2} and smoothness of $\rho_i$ for each $i$, the first term on the right hand side is in $C^{1,2,1}(\mathcal{D})$. Under the assumptions on $\lambda$ and $F$, the second
	term is continuous differentiable in $y$ and twice continuously differentiable in s, follows immediately. The continuous
	differentiability in $t$ follows from the fact that the term $\varphi(t+v,x,j,0)$
	is multiplied by $C^1((0,\infty))$ functions in $v$ and then integrated over $v\in(0,T-t)$.
	Hence $\varphi(t,s,i,y)$ is in $C^{1,2,1}(\mathcal{D})$.

	\noi (iii) We have shown that $ A:\mathcal{B}\rightarrow\mathcal{B} $ is a contraction. It is evident that equation \eqref{eq2} has a non-negative solution. Since all coefficients of the integral equation \eqref{pricing pde} are non-negative, $ A\vf\geq 0 $ for $ \vf\geq 0 $. Now let $ V:=\{\phi\in\mathcal{B}\mid \phi \geq 0 \} $. Then, $ V $ is a closed subset of $ \mathcal{B} $. Consider $ A:V\rightarrow V $, and let $ v_0\in V $. Define a sequence $ \{v_n\}_{n\geq 0} $, such that $ v_n:=A^n v_0 $. Then $ v_n\in V $. We note that
	\begin{align*}
		\| v_{m+p}-v_m \|=&\| (A^p-I)A^m v_0 \|\\
		\leq & \| (A^p-I) \|.\|A\|^m.\|v_0\|.
	\end{align*}
	We have shown that $ \|A\varphi_1-A\varphi_2\| \, \leq\, J\|\varphi_1-\varphi_2\| $, where $ J<1 $. Hence $ \|A\|<1 $, which means $ \| v_{m+p}-v_m \|\rightarrow 0 $ as $ m\rightarrow \infty $. Thus, $ \{v_n\}_{n\geq 0} $ is a Cauchy sequence. Since $ V $ is closed, $ v_n\rightarrow v $, where $ v\in V $. The continuity of $ A $ implies that $ Av_n\rightarrow Av $. Also, $ Av_n=v_{n+1}\rightarrow v $. This means $ Av=v $, i.e. $ v $ is a fixed point of $ A $. 
	
	\noi We have already shown that $ A $ has a fixed point in $ \mathcal{B} $. This fixed point is $ v $, which is an element of $ V $. In other words, $ v $ is non-negative. Thus, we have established that the fixed point of $ A $ in $ \mathcal{B} $ is non-negative, i.e. $ \vf $ is non-negative.\qed
	\end{proof}
	
	\begin{lem}\label{dirac}
		Let $ \varphi $ be the solution of equation \eqref{pricing pde}. Then
		\begin{equation*}
			\lim_{u\downarrow 0}\int^\infty_0 \varphi(t+u,x,j,0)\alpha(x;t,s,i,u)\,dx=\varphi(t,s,j,0).
		\end{equation*}
	\end{lem}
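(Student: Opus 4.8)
The plan is to recognise the right-hand side as the expectation of $\varphi(t+u,\cdot,j,0)$ against a log-normal law that collapses to a point mass at $x=s$ as $u\downarrow 0$; in other words, the family $\alpha(\cdot\,;t,s,i,u)$ is an approximate identity concentrating at $s$. Since $\alpha$ is a probability density (by the earlier proposition), $\int_0^\infty \alpha(x;t,s,i,u)\,dx=1$, so it is enough to show
\[
\int_0^\infty\big[\varphi(t+u,x,j,0)-\varphi(t,s,j,0)\big]\,\alpha(x;t,s,i,u)\,dx\longrightarrow 0 .
\]

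To make the concentration explicit I would substitute $x=\exp(m_u+\bar\sigma_u\,\zeta)$, where $m_u:=\ln s+\int_t^{t+u}\big(r(i)-\tfrac12\sigma^2(w,i)\big)\,dw$ and $\bar\sigma_u:=\big(\int_t^{t+u}\sigma^2(w,i)\,dw\big)^{1/2}$ are exactly the mean and standard deviation of the underlying normal identified in the log-normal proposition. Under this change of variables the integral becomes $E\big[\varphi(t+u,\exp(m_u+\bar\sigma_u\zeta),j,0)\big]$ with $\zeta$ a standard normal variable. Using the mild regularity of $\sigma$ under which the Black--Scholes--Merton problem \eqref{eq2} was solved (local boundedness of $\sigma^2$ near $t$), one gets $\bar\sigma_u\to 0$ and $m_u\to\ln s$ as $u\downarrow 0$, so for each fixed $\zeta$ the argument $\exp(m_u+\bar\sigma_u\zeta)\to s$.

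Pointwise convergence of the integrand then follows from the continuity of $\varphi$ established in Lemma \ref{lm2}: since $(t+u,\exp(m_u+\bar\sigma_u\zeta),j,0)\to(t,s,j,0)$ in $\bar{\mathcal D}$ (the discrete coordinate $j$ and the age $0$ being held fixed), we obtain $\varphi(t+u,\exp(m_u+\bar\sigma_u\zeta),j,0)\to\varphi(t,s,j,0)$ for every $\zeta$. The genuinely delicate step is to interchange the limit with the expectation, because $\varphi$ is only of linear growth (it lies in $\mathcal B$) and is therefore unbounded in $x$. Here I would invoke dominated convergence: using $|\varphi(t+u,x,j,0)|\le\|\varphi\|(1+x)$ together with bounds $|m_u|\le M$ and $\bar\sigma_u\le S$ valid for all small $u$, the integrand is dominated by $\|\varphi\|\big(1+e^{M+S|\zeta|}\big)$, which is integrable against the standard normal since the Gaussian moment generating function is finite. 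Dominated convergence then yields the limit $\varphi(t,s,j,0)$, completing the argument.

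The main obstacle is precisely this uniform domination: one must control the linear growth of $\varphi$ uniformly in $u$ across the collapsing family of log-normal laws. The fact that rescues the argument is the finiteness of every exponential moment of the Gaussian; this is consistent with the uniform integrability suggested by the first-moment computation $\int_0^\infty x\,\alpha(x;t,s,i,u)\,dx=s\,e^{r(i)u}\to s$ already recorded in the proof of Lemma \ref{lm2}.
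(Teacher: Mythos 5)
Your argument is correct, and it reaches the conclusion by a genuinely different technical route than the paper. The paper works directly with the collapsing family of log-normal densities: it fixes a decreasing sequence $u_l\downarrow 0$, invokes uniform integrability of $\{x\mapsto x\,\alpha(x;t,s,i,u_l)\}_l$ to cut off the tail beyond a level $K$ at cost $\epsilon$, approximates $\varphi(t+u_l,\cdot,j,0)$ from below by step functions on $[0,K]$, and uses the fact that $\int_{I}\alpha(x;t,s,i,u_l)\,dx$ tends to $1$ or $0$ according as $s\in I$ or not, before finally letting the step-function index go to infinity. Your substitution $x=\exp(m_u+\bar\sigma_u\zeta)$ instead transfers the whole integral to a single fixed reference measure (the standard Gaussian law of $\zeta$), after which continuity of $\varphi$ on $\bar{\mathcal{D}}$ gives pointwise convergence of the integrand and the linear-growth bound $|\varphi(t+u,x,j,0)|\le\|\varphi\|(1+x)$ yields the explicit dominating function $\|\varphi\|\bigl(1+e^{M+S|\zeta|}\bigr)$, integrable because the Gaussian has finite exponential moments; one application of dominated convergence finishes the proof. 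What your approach buys is a cleaner and shorter argument that avoids the double limit (first in $l$, then in the step-function index $n$) and the separate uniform-integrability and truncation steps of the paper; what the paper's approach buys is that it is phrased purely as a weak-convergence statement about the measures $\alpha_l(x)\,dx$ and never needs the explicit Gaussian parametrisation. The only point to make explicit in your write-up is the mild regularity of $\sigma^2(\cdot,i)$ near $t$ needed to guarantee $\bar\sigma_u\to 0$ and $|m_u|\le M$, $\bar\sigma_u\le S$ for small $u$; this is already implicit in the standing assumptions under which \eqref{eq2} is solved, so there is no gap.
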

	\begin{proof}
		Since $ \vf(t,\cdotp,i,y) $ is of at most linear growth, there exist positive constants $ k_1 $ and $ k_2 $ such that $ \vf(t,s,i,y)\leq k_1 + k_2s $ for all $ s $. Let $ \{u_l\}_{l\in\mathbb{N}} $ be a decreasing sequence on $ (0,1) $ such that $ u_l\rightarrow 0 $. Let $ \alpha_l(x):=\alpha(x;t,s,i,u_l) $. Since $ \alpha_l $ is a lognormal density function for each $ l $, the sequence $ \{ \alpha_l \}_{l\in\mathbb{N}} $ is uniformly integrable, that is
		\begin{equation*}
			\lim_{k\rightarrow \infty}\sup_l\int_k^{\infty}x\alpha_l(x)\,dx=0.
		\end{equation*}
		We fix $ t $ and $ s $. Thus, for any $ \epsilon>0 $, we can find $ K>0 $ such that $ \int_K^{\infty}(k_1+k_2x)\alpha_l(x)\,dx<\epsilon $ for all $ l\in\mathbb{N} $. Now let $ \{\vf_n\}_{n\in\mathbb{N}} $ be a non-negative increasing sequence of step functions in $ x $ converging to $ \vf $ pointwise. Then, given $ \epsilon>0 $ and $ K $, we can find $ N $ such that for all $ n\geq N $,
		\begin{align*}
			\lefteqn{\int_0^{\infty}(\vf(t+u_l,x,j,0)-\vf_n(t+u_l,x,j,0))\alpha_l(x)\,dx}\\
			=&\int_0^{K}(\vf(t+u_l,x,j,0)-\vf_n(t+u_l,x,j,0))\alpha_l(x)\,dx + \int_K^{\infty}(\vf(t+u_l,x,j,0)-\vf_n(t+u_l,x,j,0))\alpha_l(x)\,dx\\ 
			\no\leq & \epsilon\alpha_l([0,K])+\int_K^{\infty}(k_1+k_2x)\alpha_l(x)\,dx+\int_{0}^{K}\left[ \vf(t+u_l,x,j,0)-\vf(t,x,j,0) \right]\alpha_l(x)\,dx\\
			\no<& 2\epsilon+\int_{0}^{K}\left[ \vf(t+u_l,x,j,0)-\vf(t,x,j,0) \right]\alpha_l(x)\,dx,
		\end{align*}
		where $ \alpha_l(A):=\int_A \alpha_l(x)\,dx $.
		Also,
		\begin{align*}
			\int_0^{\infty}(\vf(t+u_l,x,j,0)-\vf_n(t+u_l,x,j,0))\alpha_l(x)\,dx=&\int_0^{\infty}\vf(x)\alpha_l(x)\,dx-\sum_{i=1}^{K_n}\vf_n(x_i)\alpha_l(I_i),
		\end{align*}
		where $ \vf_n(x)=\sum_{i=1}^{K_n}a_i \mathds{1}_{I_i}(x) $ and $ x_i\in I_i $. As $ l\rightarrow\infty $,
		\begin{equation*}
			\alpha_l(I_i)\rightarrow\begin{cases}
			0,\text{ if } s\notin I_i,\\
			1,\text{ if } s\in I_i.
			\end{cases}
		\end{equation*}
		Hence, for each $ n $,
		\begin{equation*}
			\lim_{l\rightarrow\infty}\int_0^{\infty}\vf_n\alpha_l(x)\,dx = \vf_n(s).
		\end{equation*}
		Thus, for $ n\geq N(\epsilon,K) $,
		\begin{align*}
			0\leq&\lim_{l\rightarrow\infty}\int_0^{\infty}\vf(t+u_l,x,j,0)\alpha_l(x)\,dx-\vf_n(s)\\
			\leq& 2\epsilon+\lim_{l\rightarrow\infty}\int_{0}^{K}\left[ \vf(t+u_l,x,j,0)-\vf(t,x,j,0) \right]\alpha_l(x)\,dx\\
			=& 2\epsilon,
		\end{align*}
		since $ \vf(\cdotp,s,i,y) $ is smooth.	Thus, $ \lim_{n\rightarrow\infty}\vf_n(t,s,j,0) = \lim_{l\rightarrow\infty}\int_0^{\infty}\vf(t+u_l,x,j,0)\alpha_l(x)\,dx $. 
		\qed
	\end{proof}

	\begin{prop}\label{theo3} The unique solution of \eqref{pricing pde} also solves the initial value problem \eqref{p1}-\eqref{boundary}.
	\end{prop}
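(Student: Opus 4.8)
The plan is to verify \eqref{p1}--\eqref{boundary} by differentiating the integral representation directly, after first putting its kernel in a transparent form. Using $1-F(y\mid i)=e^{-\Lambda_i(y)}$, $f(y\mid i)=\Lambda_i'(y)e^{-\Lambda_i(y)}$ with $\Lambda_i'(y)=\sum_{j\neq i}\lambda_{ij}(y)$, and the identity $p_{ij}(y)\tfrac{f(y\mid i)}{1-F(y\mid i)}=\lambda_{ij}(y)$ established earlier, equation \eqref{pricing pde} becomes
\begin{align*}
\varphi(t,s,i,y)=&\,e^{-(\Lambda_i(T-t+y)-\Lambda_i(y))}\rho_i(t,s)\\
&+\int_0^{T-t}e^{-r(i)v}e^{-(\Lambda_i(y+v)-\Lambda_i(y))}\sum_{j\neq i}\lambda_{ij}(y+v)\,I_j(t,s,v)\,dv,
\end{align*}
where $I_j(t,s,v):=\int_0^\infty\varphi(t+v,x,j,0)\alpha(x;t,s,i,v)\,dx$. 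Writing $\varphi=T_1+T_2$ for the two summands and $\mathcal{L}_i:=r(i)s\partial_s+\tfrac12\sigma^2(t,i)s^2\partial_{ss}-r(i)$ for the Black--Scholes operator, the target \eqref{p1} reads $(\partial_t+\partial_y+\mathcal{L}_i)\varphi=\sum_{j\neq i}\lambda_{ij}(y)\big(\varphi(t,s,i,y)-\varphi(t,s,j,0)\big)$.

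The decisive point is that $\partial_t+\partial_y$ annihilates any function of $T-t+y$, and---after the substitution $\tau=t+v$ in the integral---any function of $y+\tau-t$. For the first term this gives $(\partial_t+\partial_y+\mathcal{L}_i)T_1=\Lambda_i'(y)T_1+e^{-(\Lambda_i(T-t+y)-\Lambda_i(y))}\big(\partial_t\rho_i+\mathcal{L}_i\rho_i\big)=\Lambda_i'(y)T_1$, the last bracket vanishing because $\rho_i$ solves the B-S-M PDE \eqref{eq2}. For the second term I substitute $\tau=t+v$, so $T_2=\int_t^T\tilde H\,d\tau$ with $\tilde H=\sum_{j\neq i}C_j\,J_j$, where $C_j(t,y,\tau):=\lambda_{ij}(y+\tau-t)e^{-r(i)(\tau-t)}e^{-(\Lambda_i(y+\tau-t)-\Lambda_i(y))}$ is $s$-free and $J_j(t,s,\tau):=\int_0^\infty\varphi(\tau,x,j,0)\alpha(x;t,s,i,\tau-t)\,dx$. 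A short logarithmic-derivative computation yields $(\partial_t+\partial_y)C_j=(r(i)+\Lambda_i'(y))C_j$, and combining with $\mathcal{L}_i$ the discount cancels, leaving $(\partial_t+\partial_y+\mathcal{L}_i)\tilde H=\Lambda_i'(y)\tilde H+\sum_{j\neq i}C_j\big[\partial_t J_j+r(i)s\partial_s J_j+\tfrac12\sigma^2(t,i)s^2\partial_{ss}J_j\big]$.

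The bracketed quantity vanishes: for fixed terminal time $\tau$, $J_j$ is the (undiscounted) Black--Scholes conditional expectation $E^{t,s}[\varphi(\tau,S_\tau,j,0)]$, so it solves the backward Kolmogorov equation $\partial_t J_j+r(i)s\partial_s J_j+\tfrac12\sigma^2(t,i)s^2\partial_{ss}J_j=0$. I would reduce this, by differentiating under the integral, to the statement that $\alpha(\cdot;t,s,i,\tau-t)$ itself satisfies $\partial_t\alpha+r(i)s\partial_s\alpha+\tfrac12\sigma^2(t,i)s^2\partial_{ss}\alpha=0$, which is checked directly from the explicit log-normal formulas \eqref{ldef}--\eqref{alphadef} using the relation \eqref{10ae2}. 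Hence $(\partial_t+\partial_y+\mathcal{L}_i)T_2=\Lambda_i'(y)T_2$ plus the Leibniz boundary term from the lower limit, $-\lim_{\tau\downarrow t}\tilde H=-\sum_{j\neq i}\lambda_{ij}(y)\varphi(t,s,j,0)$, where the limit is exactly Lemma \ref{dirac} (as $\tau\downarrow t$, $C_j\to\lambda_{ij}(y)$ and $J_j\to\varphi(t,s,j,0)$). Collecting terms and using $\Lambda_i'(y)=\sum_{j\neq i}\lambda_{ij}(y)$ and $T_1+T_2=\varphi$ gives precisely \eqref{p1}; the terminal condition $\varphi(T,s,i,y)=\rho_i(T,s)=K(s)$ follows on setting $t=T$, since then $T_1=\rho_i(T,s)$ and $T_2=0$.

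I expect the two real obstacles to be where regularity and degeneracy interact. First, every step above requires differentiation under the integral sign; I would justify this from $\varphi\in C^{1,2,1}(\mathcal{D})$ (Lemma \ref{lm2}(ii)) together with the at-most-linear growth of $\varphi$ and the Gaussian decay of $\alpha$, via dominated convergence. Second, and most delicate, is the boundary term at $\tau=t$: there the transition density $\alpha(\cdot;t,s,i,\tau-t)$ collapses to a Dirac mass, so the Leibniz evaluation at the lower limit is not a plain substitution but the delta-convergence of Lemma \ref{dirac}; making this term reproduce exactly $\sum_{j\neq i}\lambda_{ij}(y)\varphi(t,s,j,0)$---which supplies the nonlocal coupling in \eqref{p1}---is the crux of the argument.
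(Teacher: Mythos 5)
Your proposal is correct and follows essentially the same strategy as the paper's proof: a direct verification obtained by differentiating the integral representation, resting on the same three ingredients --- the B-S-M equation \eqref{eq2} for $\rho_i$, the identity \eqref{10ae2} (equivalently, the backward Kolmogorov equation satisfied by the lognormal kernel $\alpha$ in the variables $(t,s)$), and Lemma \ref{dirac} to produce the nonlocal coupling $\sum_{j\neq i}\lambda_{ij}(y)\varphi(t,s,j,0)$. The only difference is bookkeeping: the paper extracts the nonlocal term by integrating by parts in $v$ inside the $\partial_y$ computation and must then cancel extra boundary contributions at $v=T-t$ between \eqref{2ae} and \eqref{3ae}, whereas your substitution $\tau=t+v$ fixes the upper limit at $T$ so that the same term appears once, as the Leibniz lower-limit contribution of $\partial_t$ --- a cleaner organization of the identical computation.
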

	\proof
	Let $\varphi$ be the solutions of \eqref{pricing pde}. Thus using \eqref{pricing pde}, $\varphi(T,s,i,y)=\rho_{i}(T,s)=K(s)$, i.e., the condition \eqref{boundary} holds. From Lemma \ref{lm2} (ii), $\varphi$ is in $C^{1,2,1}(\mathcal{D})$. Hence we can perform the partial differentiations w.r.t. $t$ and $y$ on the both sides of \eqref{pricing pde}. We obtain
	\begin{align}\label{2ae}
	\no \frac{\partial}{\partial t} \varphi(t, s, i, y)=&\frac{f(T-t+y|i)}{1-F(y\mid i)}\rho_{i}(t,s)+\frac{ 1- F(T-t+y\mid i)}{\left(1-F(y\mid i)\right)}\frac{\partial\rho_{i}(t,s)}{\partial t}- e^{-r(i)(T-t)}\frac{f(y+T-t\mid i)}{1-F(y\mid i)}\times\\
	&\no\sum_{j\neq i} p_{ij}(y+T-t)\int_0^\infty \varphi(T,x,j,0)\alpha(x;t,s,i,T-t)dx+\int^{T-t}_0 e^{-r(i)v}\frac{f(y+v\mid i)}{1-F(y\mid i)}\times \\
	&\no\sum_{j\neq i} p_{ij}(y+v)\int^\infty_0 \frac{\partial \varphi}{\partial t}(t+v,x,j,0) \alpha(x;t,s,i,v) dx dv\\
	&\no + \int_0^{T-t} e^{-r(i)v} \frac{f(y+v\mid i)} {1-F(y\mid i)} \times\sum_{j\neq i} p_{i j}(y+v)\times\\
	& \int_0^{\infty} \varphi(t+v,x,j,0)\alpha(x;t,s,i,v)\left(\frac{\sigma^2(t+v,i)-\sigma^2(t,i)}{2} \right) \left( \frac{L^2}{\bar{\sigma}^2}-\frac{L}{\bar{\sigma}}-\frac{1}{\bar{\sigma}^2} \right) \,dx\,dv
	\end{align}
	by differentiating w.r.t. $t$ under the sign of integral. Now, before we take the partial derivative w.r.t. $y$ on both sides of \eqref{pricing pde}, we first simplify the right-hand side. Let $q_{ij}(y+v):=f(y+v\mid i)p_{ij}(y+v)$. Then
	\begin{align*}
		\no \frac{\partial}{\partial y} \varphi(t, s, i, y)=&\no -\frac{f(T-t+y \mid i)}{1-F(y\mid i)}\rho_{i}(t,s)+\frac{1- F(T-t+y\mid i)}{\left(1-F(y\mid i)\right)^2}f(y|i)\rho_{i}(t,s)\\
		&+\frac{\partial}{\partial y}\int_0^{T-t} e^{-r(i)v} \frac{q_{ij}(y+v)} {1-F(y\mid i)} \int_0^{\infty} \varphi(t+v,x,j,0)\alpha(x;t,s,i,v)\,dx\,dv.
	\end{align*}
	The last term can be simplified further. 
	\begin{align*}
		&\frac{\partial}{\partial y}\int_0^{T-t} e^{-r(i)v} \frac{f(y+v\mid i)} {1-F(y\mid i)}\sum_{j\neq i} p_{i j}(y+v) \int_0^{\infty} \varphi(t+v,x,j,0)\alpha(x;t,s,i,v)\,dx\,dv \\
		=&\sum_{j\neq i}\frac{\partial}{\partial y}\left[ \dfrac{1}{1-F(y\mid i)}\int_0^{T-t} \left( e^{-r(i)v}\int_0^{\infty} \varphi(t+v,x,j,0)\alpha(x;t,s,i,v)\,dx\right) q_{ij}(y+v)\,dv  \right] 
	\end{align*}
	Let $ b_{ij}(v;t,x,s):=e^{-r(i)v}\int_0^{\infty} \varphi(t+v,x,j,0)\alpha(x;t,s,i,v)\,dx $. Also let $ \tilde{q}_{ij}(y):=\int_{0}^{y}q_{ij}(w)\,dw $, so that $ \tilde{q}_{ij}'(y)=q_{ij}(y) $. Then, using the integration by parts formula, we get
	\begin{align*}
		\int_0^{T-t}b_{ij}(v;t,x,s)q_{ij}(y+v)\,dv =&\left[ b_{ij}(v;t,x,s) \tilde{q}_{ij}(y+v) \right]^{T-t}_0\\
		& -  \int_0^{T-t}\frac{\partial{b_{ij}(v;t,x,s)}}{\partial v}\tilde{q}_{ij}(y+v)\,dv.
	\end{align*}
	Now, 
	\begin{align*}
		b_{ij}(T-t;t,x,s)\tilde{q}_{ij}(y+T-t) =&  \no  e^{-r(i)(T-t)}\tilde{q}_{ij}(y+T-t)\int^\infty_0 \varphi(T,x,j,0)\alpha(x;t,s,i,T-t)\,dx
	\end{align*}
	while
	\begin{align*}
		b_{ij}(0;t,x,s)\tilde{q}_{ij}(y)=& \tilde{q}_{ij}(y)\left[ \lim_{u\downarrow 0}\int^\infty_0 \varphi(t+u,x,j,0)\alpha(x;t,s,i,u)\,dx\right] \\
		=& \tilde{q}_{ij}(y)\varphi(t,s,j,0)
	\end{align*}
	by lemma \ref{dirac}.
	
	\noi Hence, the partial derivative of $\varphi$ w.r.t $y$ is
	\begin{align}\label{3ae}
	\no \frac{\partial}{\partial y} \varphi(t, s, i, y)=&\no -\frac{f(T-t+y \mid i)}{1-F(y\mid i)}\rho_{i}(t,s)+\frac{1- F(T-t+y\mid i)}{\left(1-F(y\mid i)\right)^2}f(y|i)\rho_{i}(t,s)+\frac{f(y|i)}{1-F(y\mid i)}\times\\
	&\no \Big(\varphi(t,s,i,y)- \frac{1- F(T-t+y \mid i)}{1-F(y\mid i)}\rho_{i}(t,s)\Big)+e^{-r(i)(T-t)}\frac{f(T-t+y\mid i)}{1-F(y\mid i)}\times \\
	&\no\sum_{j\neq i} p_{ij}(y+T-t)\int^\infty_0 \varphi(T,x,j,0)\alpha(x;t,s,i,T-t)\,dx\\
	&\no -\frac{f(y\mid i)}{1-F(y\mid i)}\sum_{j\neq i} p_{ij}(y)\varphi(t,s,j,0)\\
	&\no-\int^{T-t}_0 e^{-r(i) v} \frac{f(y+v\mid i)}{1-F(y\mid i)}\int^\infty_0 \alpha(x;t,s,i,v) \Bigg\{-r(i) \sum_{j\neq i} p_{ij}(y+v) \varphi(t+v,x,j,0)\\
	&\no-\sum p_{ij}(y+v) \varphi(t+v,x,j,0)\left(L \frac{\partial L}{\partial v}+\frac{\sigma^2(t+v,i)}{2\bar{\sigma}^2}\right)\\
	&+\sum_{j\neq i} p_{ij}(y+v) \frac{\partial{\varphi(t+v,x,j,0)}}{\partial t}\Bigg\}\,dx\,dv.
	\end{align}
	\noi By adding equations \eqref{2ae} and \eqref{3ae}, we get
	\begin{align}\label{6ae}
	\no\lefteqn{\frac{\partial}{\partial t} \varphi(t, s, i, y)+\frac{\partial} {\partial y} \varphi(t, s, i, y)}\\
	=&\no\frac{ 1- F(T-t+y\mid i)}{1-F(y\mid i)}\frac{\partial\rho_{i}(t,s)}{\partial t}+ \frac{f(y|i)}{1-F(y\mid i)}\left(
	\varphi(t,s,i,y)-\sum_{j\neq i} p_{ij}(y)\varphi(t,s,j,0)\right)\\
	 &+ \int^{T-t}_0 e^{-r(i)v}\no \frac{f(y+v|i)}{1-F(y|i)}\sum_{j\neq i} p_{ij}(y+v) \int_0^\infty \varphi(t+v,x,j,0)\alpha(x;t,s,i,v)\times\\
	&\left(r(i)+L\frac{\partial L}{\partial v}+\frac{\sigma^2(t+v,i)L^2}{2\bar{\sigma}^2}-\frac{\sigma^2(t,i)L^2}{2\bar{\sigma}^2}-\frac{\sigma^2(t+v,i)L}{2\bar{\sigma}}+\frac{\sigma^2(t,i)L}{2\bar{\sigma}}+\frac{\sigma^2(t,i)}{2\bar{\sigma}^2}\right)\,dx\,dv.
	\end{align}
	\noi Now we differentiate both sides of \eqref{pricing pde} w.r.t. $s$ once and twice respectively and obtain
	\begin{eqnarray}\label{4ae}
	\no \frac{\partial}{\partial s} \varphi(t, s, i, y)&=& \frac{ 1- F(T-t+y\mid i)}{1-F(y\mid i)}\frac{\partial\rho_{i}(t,s)}{\partial s}+\int_0^{T-t} e^{-r(i)v}\frac{f(y+v\mid i)} {1-F(y\mid i)} \sum_{j\neq i} p_{i j}(y+v)\times\\
	&&\int_0^{\infty} \varphi(t+v,x,j,0)\alpha(x;t,s,i,v)\frac{L}{s\bar{\sigma}}\,dx\,dv,
	\end{eqnarray}
	\begin{eqnarray}\label{5ae}
	\no \frac{\partial^2} {\partial s^2} \varphi(t, s, i, y)&=& \frac{ 1- F(T-t+y\mid i)}{1-F(y\mid i)}\frac{\partial^2 \rho_{i}(t,s)}{\partial s^2}+\int_0^{T-t} e^{-r(i)v}\frac{f(y+v\mid i)} {1-F(y\mid i)} \sum_{j\neq i} p_{i j} (y+v)\times\\
	&&\int_0^{\infty} \varphi(t+v,x,j,0)\alpha(x;t,s,i,v) \frac{1}{s^2}\left(\frac{L^2}{\bar{\sigma}^2} - \frac{L}{\bar{\sigma}} -\frac{1}{\bar{\sigma}^2}\right)\,dx\,dv.
	\end{eqnarray}
	\noi From equations \eqref{4ae} and \eqref{5ae}, we get
	\begin{align}\label{7ae}
	\no \lefteqn{r(i) s \frac{\partial \varphi}{\partial s}+ \frac{1}{2}\sigma^2(i) s^2 \frac{\partial^2 \varphi}{\partial s^2}}\\
	=&\no \frac{ 1- F(T-t+y\mid i)}{1-F(y\mid i)}\left(r(i)s\frac{\partial\rho_{i}(t,s)}{\partial s}+\frac{1}{2}\sigma^2(i)s^2 \frac{\partial^2\rho_{i}(t,s)}{\partial s^2}\right)+\int_0^{T-t}e^{-r(i)v}\frac{f(y+v\mid i)}{1-F(y|i)}\times\\
	&\sum_{j\neq i} p_{ij}(y+v)\int_0^\infty \varphi(t+v,x,j,0)\alpha(x;t,s,i,v)\left(\frac{r(i)L}{\bar{\sigma}} +\frac{\sigma^2(t,i)L^2}{2\bar{\sigma}^2}-\frac{\sigma^2(t,i)L}{2\bar{\sigma}}-\frac{\sigma^2(t,i)}{2\bar{\sigma}^2}\right)\,dx\,dv.
	\end{align}
	\noi Finally, from equations \eqref{pricing pde}, \eqref{eq2}, \eqref{10ae2}, \eqref{6ae} and \eqref{7ae} we get
	\begin{eqnarray*}\label{8ae}
	\lefteqn{\no\frac{\partial}{\partial t} \varphi(t, s, i, y)+\frac{\partial} {\partial y} \varphi(t, s, i, y)+r(i) s \frac{\partial }{\partial s}\varphi(t, s, i, y)+ \frac{1}{2}\sigma(t,i)^2(i) s^2\frac{\partial^2 }{\partial s^2}\varphi(t, s, i, y)}\\
	&=&\no \frac{ 1- F(T-t+y\mid i)}{1-F(y\mid i)}\left[\frac{\partial\rho_{i}(t,s)}{\partial t} + r(i)s\frac{\partial\rho_{i}(t,s)}{\partial s}+\frac{1}{2}\sigma^2(t,i)s^2 \frac{\partial^2\rho_{i}(t,s)}{\partial s^2}\right] -\frac{f(y\mid i)}{1-F(y\mid i)}\times \\
	&&\sum_{j\neq i} p_{ij}(y)(\varphi(t,s,j,0)-\varphi(t,s,i,y)) + r(i) \left(\varphi(t,s,i,y)- \frac{ 1- F(T-t+y\mid i)}{1-F(y\mid i)} \rho_{i}(t,s)\right)\\
	&=& -\frac{f(y\mid i)}{1-F(y\mid i)}\sum_{j\neq i} p_{ij}(y)(\varphi(t,s,j,0)-\varphi(t,s,i,y))+r(i)\varphi(t,s,i,y).
	\end{eqnarray*}
	\noi Thus equation \eqref{p1} holds. \qed

	\noi From Lemma \ref{lm2} and Proposition \ref{theo3} it follows that \eqref{p1}-\eqref{boundary} has a classical solution. We prove uniqueness in the following section.

	\section{Uniqueness}\label{uniqueness}
	
	We consider equations \eqref{p1}-\eqref{boundary}.

	\noi It is interesting to note that although the domain $ \mathcal{D} $ has non-empty boundary, we have obtained existence of a unique solution of the IE in the prescribed class without imposing boundary conditions. Furthermore, we shall show that the uniqueness of the IE implies uniqueness of the PDE. This invokes an immediate surprise as we know that boundary condition is important for uniqueness for a non-degenerate parabolic PDE. In this connection, we would like to recall, here the PDE is degenerate. For one part of boundary, i.e $ s=0 $, coefficients of all the differential operators w.r.t. $ s $ vanish. Thus, it is natural to expect that a condition on $ s=0 $ might not be needed for uniqueness. In other words, the PDE would have non-existence for any boundary condition except possibly only an appropriate one obtained from the terminal condition. We further clarify this apparently vague reasoning with a precise calculation below. Other than $ s=0 $, the remaining parts of the boundary is due to the boundary of the $ y $ variable, i.e $ y=0 $ and $ y=t $. Here the non-rectangular nature of $ \mathcal{D} $ becomes apparent. We recall that we address a terminal value problem, thus the range of $ y $ shrinks linearly in $ t $ as $ t $ decreases to zero. On the other hand only the first order differential operator w.r.t. $ y $ appears in the PDE. Thus the absence of boundary data is not leading to a under-determined problem.
	
	\noi We consider continuous solutions to the problem  \eqref{p1}-\eqref{boundary} on the closure of the domain $ \mathcal{D} $, in particular, the set $ \{(t,s,i,y)\in\bar{\mathcal{D}}\mid s=0\} $. For $s=0$, the PDE is
	\begin{equation}\label{characteristics}
		\left( \frac{\partial }{\partial t}+\frac{\partial }{\partial y} \right) \vf(t,0,i,y)+\sum_{j\neq i}\lambda_{ij}(y)[\vf(t,0,j,0)-\vf(t,0,i,y)]=r_i\vf(t,0,i,y).
	\end{equation}
	Let $\hat{\vf}_i(t,y):=\vf(t,0,i,y)$. Then,
	\begin{equation*}
		\left( \frac{\partial }{\partial t}+\frac{\partial }{\partial y} \right)\hat{\vf}_i(t,y)+\sum_{j\neq i}\lambda_{ij}(y)[\hat{\vf}_j(t,0)-\hat{\vf}_i(t,y)]=r_i\hat{\vf}_i(t,y),
	\end{equation*}
	with the terminal condition $ \hat{\vf}_i(T,y)=K(0) $.
	Now, for any $ t_0<T $, consider $c_{t_0}(t):=t-t_0$. Then,
	\begin{equation*}
		\frac{d}{dt}\hat{\vf}_i(t,c_{t_0}(t))=\left( \frac{\partial }{\partial t}+\frac{\partial }{\partial y} \right)\hat{\vf}_i(t,c_{t_0}(t)).
	\end{equation*}
	Let $ g_i(t;t_0):=\hat{\vf}_i(t,c_{t_0}(t)) $. Then
	\begin{equation*}
		\frac{d}{dt}g_i(t;t_0)+\sum_{j\neq i}\lambda_{ij}(c_{t_0}(t))[\hat{\vf}_j(t,0)-g_i(t)]=r_i g_i(t;t_0).
	\end{equation*}
	Hence,
	\begin{equation*}
		\frac{dg_i(t;t_0)}{dt}=p(t)g_i(t;t_0)-q(t),\quad g_i(T;t_0)=K(0)
	\end{equation*}
	where $ p(t):=r_i+\sum_{j\neq i}\lambda_{ij}(c(t)) $ and $ q(t):=\sum_{j\neq i}\lambda_{ij}(c(t))\hat{\vf}_j(t,0) $. This is a first-order linear ODE, which can easily be solved to give
	\begin{equation*}
		g_i(t;t_0)=\int_{t}^{T}e^{-\int_{t_0}^{u}(r_i+\sum_{j\neq i}\lambda_{ij}(c_{t_0}(s)))\,ds} \sum_{j\neq i}\lambda_{ij}(c_{t_0}(u))\hat{\vf}_j(u,0)\,du - K(0) e^{-\int_{t_0}^{u}(r_i+\sum_{j\neq i}\lambda_{ij}(c_{t_0}(s)))\,ds}.
	\end{equation*}
	
	\noi Now, $g_i(t_0,t_0)=\hat{\vf}_i(t_0,0)$. Thus, we obtain the following equation for $\hat{\vf}$:
	
	\begin{equation}\label{integral eqn 2}
		\hat{\vf}_i(t,0)=\int_{t}^{T}e^{-\int_{t}^{u}(r_i+\sum_{j\neq i}\lambda_{ij}(c_t(s)))\,ds} \sum_{j\neq i}\lambda_{ij}(c_t(u))\hat{\vf}_j(u,0)\,du - K(0) e^{-\int_{t}^{u}(r_i+\sum_{j\neq i}\lambda_{ij}(c_t(s)))\,ds},
	\end{equation}
	
	\noi This is an integral equation in $ \hat{\vf}(t,0) $. If we show that this system of integral equations has a unique solution, our reasoning regarding the redundancy of the boundary condition on $ s $ will be justified. To this end, we proceed in a manner similar to the proof of Lemma \ref{lm2}. We define the operator $A$ to be
	\begin{equation}
		A\hat{\vf}_i(t,0)=\int_{t}^{T}e^{-\int_{t}^{u}(r_i+\sum_{j\neq i}\lambda_{ij}(c_t(s)))\,ds} \sum_{j\neq i}\lambda_{ij}(c_t(u))\hat{\vf}_j(u,0)\,du - K(0) e^{-\int_{t}^{u}(r_i+\sum_{j\neq i}\lambda_{ij}(c_t(s)))\,ds}.
	\end{equation}
	The solution to equation \eqref{integral eqn 2} is obviously a fixed point of the operator $A$. If we are able to establish that $A$ is a contraction in the class of functions we are about to consider, Banach fixed point theorem can be used to show that the integral equation \eqref{integral eqn 2} has a unique solution which is a fixed point of $A$. We define $ \Gamma:=\chi\times[0,T] $ to be the domain which we shall now consider. Consider the Banach space $ \mathcal{B}=C\left( \Gamma \right) $, endowed with the sup-norm. In order to show that $A$ is a contraction, we need to prove that for $\hat{\vf}_1,\hat{\vf}_2\in \mathcal{B}$, $||A\hat{\varphi}^1-A\hat{\varphi}^2|| \leq J||\hat{\varphi}^1-\hat{\varphi}^2||$ where $J<1$. Now,
	\begin{align*}
		A(\hat{\vf}^1_i-\hat{\vf}^2_i)=&\int_{t}^{T}e^{-\int_{t}^{u}(r_i+\sum_{j\neq i}\lambda_{ij}(c_t(s)))\,ds} \sum_{j\neq i}\lambda_{ij}(c_t(u))\left(\hat{\vf}^1_j(u,0)-\hat{\vf}^2_j(u,0)\right)\,du\\
		\leq& \no \int_{t}^{T}e^{-\int_{t}^{u}(r_i+\sum_{j\neq i}\lambda_{ij}(c_t(s)))\,ds} \sum_{j\neq i}\lambda_{ij}(c_t(u))\sup_{u,j}\left(\hat{\vf}^1_j(u,0)-\hat{\vf}^2_j(u,0)\right)\,du.
	\end{align*}
	
	\noi Since $ r(i)>0 $ for all $i$,
	\begin{align*}
		A(\hat{\vf}^1_i-\hat{\vf}^2_i)\leq & \| \hat{\vf}^1-\hat{\vf}^2 \| \int_{t}^{T}e^{-\int_{t}^{u}(r_i+\sum_{j\neq i}\lambda_{ij}(s-t))\,ds} \sum_{j\neq i}\lambda_{ij}(u-t)\,du\\
		=& \| \hat{\vf}^1-\hat{\vf}^2 \| \int_{t}^{T}e^{-r_i(u-t)}e^{-\int_{t}^{u}\sum_{j\neq i}\lambda_{ij}(s-t)\,ds} \sum_{j\neq i}\lambda_{ij}(u-t)\,du\\
		<& \| \hat{\vf}^1-\hat{\vf}^2 \| \int_{t}^{T}e^{-\int_{t}^{u}\sum_{j\neq i}\lambda_{ij}(s-t)\,ds} \sum_{j\neq i}\lambda_{ij}(u-t)\,du\\
		=&\| \hat{\vf}^1-\hat{\vf}^2 \| \int_{t}^{T} \frac{d}{du}\left( e^{-\int_{t}^{u}\sum_{j\neq i}\lambda_{ij}(s-t)\,ds} \right) \,du\\
		=& \| \hat{\vf}^1-\hat{\vf}^2 \| \left( 1-e^{-\int_{t}^{T}\sum_{j\neq i}\lambda_{ij}(s-t)\,ds} \right) \\
		=& J\| \hat{\vf}^1-\hat{\vf}^2 \|,
	\end{align*}
	where $ J=1-e^{-\int_{t}^{T}\sum_{j\neq i}\lambda_{ij}(s-t)\,ds}<1 $. This proves that $A$ is, in fact, a contraction. Thus, the uniqueness of $\hat{\varphi}$, the solution to equation \eqref{integral eqn 2} is established. The uniqueness of $ \hat{\vf}(t_0,0) $ for all $ t_0\in[0,T] $ implies the uniqueness of $ g(t;t_0) $ for all $ t \geq t_0 \geq 0 $. Also, $ \hat{\vf}_i(t,t-t_0) $ is unique for all $ t\in[t_0,T] , ~ t_0\in[0,T]$. Since, for $y\in[0,t]$, $ \vf_i(t,0,i,y)=\hat{\vf}_i(t,y)=\hat{\vf}_i(t,t-(t-y)) $, with $t-y\in[0,t]$, equation \eqref{characteristics} has a unique solution. Hence, $\vf(t,s,i,y)$ is unique for $ s=0 $.

	\begin{prop}\label{theo4} Assume \eqref{lambda condition 1} and \eqref{lambda condition 2}. We also assume that the transition matrix $ \tilde{p}_{ij}:=\int_0^\infty p_{ij}(y)\,dF_i(y) $ is irreducible. Let $\varphi$ be a classical solution of \eqref{p1}-\eqref{boundary}. Then (i) $\varphi$ solves the integral equation \eqref{pricing pde}; (ii) $ \vf(t,s,i,y)\leq k_1+k_2s $ for some $ k_1,k_2> 0 $.
	\end{prop}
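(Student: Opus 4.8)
The plan is to establish the \emph{converse} of Proposition \ref{theo3}: starting from an arbitrary classical solution $\varphi$ of \eqref{p1}--\eqref{boundary} (taken, as is standard in this pricing context, to be of at most polynomial growth in $s$, uniformly in $(t,i,y)$), I would first derive the integral equation \eqref{pricing pde} by a Duhamel / variation-of-parameters computation along the characteristics $y=y_0+(t-t_0)$ of the first-order part $\partial_t+\partial_y$, and then read off the growth bound (ii) from the probabilistic representation that this same computation exhibits. This route stays inside the analytic framework of Section \ref{uniqueness} and reuses Lemma \ref{dirac}.

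For (i), fix $(t,s,i,y)\in\mathcal{D}$ and set
\[
\Phi(v):=\int_0^\infty \varphi(t+v,x,i,y+v)\,\alpha(x;t,s,i,v)\,dx,\qquad v\in(0,T-t).
\]
Here $\alpha(\cdot\,;t,s,i,v)$ is the transition density of the regime-$i$ Black--Scholes--Merton diffusion, so as a function of $(x,v)$ it solves the forward Kolmogorov equation adjoint to $\mathcal{L}^i_x:=r(i)x\partial_x+\tfrac12\sigma^2(t+v,i)x^2\partial_{xx}$. Differentiating $\Phi$ under the integral, the derivative falling on $\varphi$ produces $(\partial_t+\partial_y)\varphi$ along the characteristic, while the derivative falling on $\alpha$ gives, after one integration by parts, $\int_0^\infty(\mathcal{L}^i_x\varphi)\,\alpha\,dx$. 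Substituting the PDE \eqref{p1} to replace $(\partial_t+\partial_y)\varphi+\mathcal{L}^i_x\varphi$, and using $\sum_{j\neq i}\lambda_{ij}=-\lambda_{ii}=f(\cdot|i)/(1-F(\cdot|i))$ together with $\lambda_{ij}=p_{ij}\,f/(1-F)$, one obtains the linear ODE
\[
\Phi'(v)=\Big(r(i)+\tfrac{f(y+v|i)}{1-F(y+v|i)}\Big)\Phi(v)-\tfrac{f(y+v|i)}{1-F(y+v|i)}\sum_{j\neq i}p_{ij}(y+v)\,\Psi_j(v),
\]
where $\Psi_j(v):=\int_0^\infty\varphi(t+v,x,j,0)\,\alpha(x;t,s,i,v)\,dx$. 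Solving with the integrating factor $e^{-r(i)v}\tfrac{1-F(y+v|i)}{1-F(y|i)}$ and integrating over $(0,T-t)$, the lower endpoint contributes $\Phi(0^+)=\varphi(t,s,i,y)$ by the Dirac-limit Lemma \ref{dirac}, while the upper endpoint contributes $\tfrac{1-F(T-t+y|i)}{1-F(y|i)}\rho_i(t,s)$ after inserting the terminal condition $\varphi(T,\cdot)=K$ and $\rho_i(t,s)=e^{-r(i)(T-t)}\int K\,\alpha\,dx$. Rearranging reproduces \eqref{pricing pde} verbatim.

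For (ii), I note that iterating \eqref{pricing pde} over successive regime switches, equivalently applying Dynkin's formula to $e^{-\int_t^u r(X_\tau)d\tau}\varphi(u,S_u,X_u,Y_u)$ for the process $(S_u,X_u,Y_u)$ obtained by coupling the age-dependent process of Section \ref{sec 2.1} with the regime-$i$ price dynamics and using that \eqref{p1} makes its finite-variation part vanish, yields the Feynman--Kac representation $\varphi(t,s,i,y)=E[e^{-\int_t^T r(X_\tau)d\tau}K(S_T)\mid S_t=s,X_t=i,Y_t=y]$. Since $K(x)\le k_1+k_2x$, $r\ge 0$ forces $e^{-\int_t^T r}\le 1$, and the discounted price $e^{-\int_t^u r(X_\tau)d\tau}S_u$ is a true martingale with conditional mean $s$; hence $\varphi\le k_1+k_2 s$. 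A purely analytic alternative is to check that $\bar\varphi(t,s,i,y):=k_1+k_2 s$ is a supersolution of \eqref{p1} dominating $K$ at $t=T$ and to invoke a comparison principle for the degenerate non-local operator.

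The hard part is the growth bookkeeping underlying both steps: the integration by parts in the $\Phi'$-computation and the $v\downarrow0$ Dirac limit both require the boundary contributions at $x=0,\infty$ to vanish and $\Phi$ to be differentiable, which is not automatic for a general classical solution. I would control this by localizing with a cutoff in $x$, exploiting the rapid log-normal decay of $\alpha$ and the assumed polynomial growth of $\varphi$ to remove the cutoff in the limit; conclusion (ii) then sharpens this a priori polynomial bound to the asserted linear one, closing the loop. The irreducibility of $\tilde p_{ij}$ enters only to guarantee that the embedded jump chain genuinely explores all regimes, so that the first-switch decomposition and the resulting representation are non-degenerate across every $j\neq i$.
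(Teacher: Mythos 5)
Your proposal is essentially correct but proves part (i) by a genuinely different route. The paper's proof is probabilistic: it builds an auxiliary filtered space carrying $W$ and $\wp$, defines $\tilde S_t$ by the regime-switching SDE, applies It\^{o}'s formula to $N_t=e^{-\int_0^t r(X_u)\,du}\varphi(t,\tilde S_t,X_t,Y_t)$, uses \eqref{p1} to kill the drift, upgrades $N$ from local martingale to martingale via the linear-growth bound and Doob's inequality, and then obtains \eqref{pricing pde} by conditioning on the first transition time after $t$ and the conditional lognormal law of $\tilde S$; irreducibility is invoked at the very end to pass from the random triple $(\tilde S_t,X_t,Y_t)$ to generic $(s,i,y)$. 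Your Duhamel computation along the characteristics of $\partial_t+\partial_y$ is the deterministic shadow of that first-jump decomposition: the ODE you derive for $\Phi(v)$ is exactly right (the integrating factor $e^{-r(i)v}\frac{1-F(y+v\mid i)}{1-F(y\mid i)}$ reproduces the survival weight, the endpoint $v=T-t$ gives the $\rho_i$ term, and $v\downarrow 0$ is handled by Lemma \ref{dirac}, whose proof uses only continuity and linear growth and so applies to $\varphi(t+v,\cdot,i,y+v)$ as well). What your route buys is that it works pointwise for every $(t,s,i,y)\in\mathcal{D}$, so the irreducibility hypothesis becomes genuinely superfluous for (i) --- contrary to your closing remark, it is not needed at all in your argument, whereas the paper does need it to deglobalize its probabilistic identity. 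What it costs is the analytic bookkeeping you already flagged: differentiation under the integral, the two integrations by parts against the Fokker--Planck equation for $\alpha$ (it is two, not one, for the $\partial_{xx}$ term), and vanishing of the boundary contributions at $x=0,\infty$, all of which your cutoff-plus-lognormal-decay scheme handles. For (ii) you fall back on exactly the paper's argument (Feynman--Kac representation, $K(x)\le k_1+k_2x$, $r\ge 0$, and the martingale property of the discounted price), so you end up constructing the probabilistic apparatus anyway. One caveat common to both proofs: the proposition as stated imposes no growth on $\varphi$, yet the martingale upgrade (and your integration by parts) requires an a priori bound; the paper silently assumes linear growth, you assume polynomial growth and then sharpen --- either is fine since the proposition is only applied in Theorem \ref{theo1} to solutions in the linear-growth class, but the hypothesis should be stated explicitly.
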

	\proof (i) Let $(\tilde\Omega,\tilde{\mathcal{F}}, \tilde P)$ be a probability space which holds a standard Brownian motion $W$ and the Poisson random measure $\wp$ independent of $W$. Let $\tilde{S}_t$ be the strong solution of the following SDE
	\begin{eqnarray*}
		d\tilde{S}_t = \tilde{S}_t(r(X_t)dt + \si(t,X_t)dW_t),\tab \tilde{S}_0>0
	\end{eqnarray*}
	where $X_t$ is the age-dependent process given by equations \eqref{xdef} and \eqref{ydef}. Let $\tilde{ \mathcal{F}}_t$ be the underlying filtration generated by $ \tilde{S}_t $ and $ X_t $ satisfying the usual hypothesis. We observe that the process $\{(\tilde{S}_t,X_t,Y_t)\}_t$ is Markov with infinitesimal generator $\mathcal{A}_t$, where 	
	\begin{align*}
		\mathcal{A}_t \varphi(s,i,y)=&\frac{\partial \vf}{\partial y}  (s, i, y) + r(i) s \frac{\partial \vf}{\partial s} (s, i, y)+ \frac{1}{2} \si^2(t,i) s^2 \frac{\partial^2 \vf} {\partial
		s^2} (s, i, y) \\
	&+ \sum_{j\neq i}\lambda_{ij}(y) \big(\varphi(s,j,0) -\varphi(s,i,y)\big)
	\end{align*}
	for every function $\varphi$ which is compactly supported $C^2$ in $s$ and $C^1$ in $y$. If $\vf$ is the classical solution of \eqref{p1}-\eqref{boundary} then by using the It\^{o}'s formula on $N_t := e^{-\int_0^t r(X_u)du} \varphi(t,\tilde{S}_t,X_t,Y_t)$, we get
	\begin{eqnarray*}
	dN_t &=& e^{-\int_0^t r(X_u)du}\left(- r(X_t) \varphi (t,\tilde{S}_t,X_t,Y_t)+ \frac{\partial \varphi}{\partial t }(t,\tilde{S}_t,X_t,Y_t) + \mathcal{A}_t \varphi (t,\tilde{S}_t,X_t,Y_t)\right)dt + dM_t
	\end{eqnarray*}
	where $M_t$ is a local martingale. Thus from \eqref{p1} and above expression, $N_t$ is also an $ \tilde{ \mathcal{F}}_t$ local martingale. The definition of $N_t$ suggests that there are constants $k_1$ and $k_2$ such that $ |N_t|\le k_1 +k_2 \tilde S_t$ for each $t$, since $\vf$ has at most linear growth. Again, from the following expression
	$$ \tilde S_t = \tilde S_0 \exp\left(\int_0^t (r(X_u)-\frac{1}{2}\sigma(u,X_u)^2)\,du + \int_0^t \sigma(u,X_u)\,dW_u\right)
	$$
	one concludes that $\tilde S_t$ is a submartingale with finite expectation. Therefore Doob's inequality can be used to obtain $ E\sup_{u\in [0,t]}|N_u| < \infty$ for each $t$. Thus $\{N_t\}_t$ is a martingale. Hence
	\begin{equation}\label{eq18}
	\varphi(t,\tilde{S}_t,X_t,Y_t)= e^{\int_0^t r(X_u)du}N_t = E[e^{\int_0^t r(X_u)du}N_T\mid \mathcal{ F}_t]= E[e^{-\int_t^T r(X_u)du} K(\tilde{S}_T)\mid \tilde{S}_t,X_t,Y_t].
	\end{equation}
	By conditioning at transition times and using the conditional lognormal distribution of $\tilde{S}_t$, we get
	\begin{align*}
		\lefteqn{\varphi(t,\tilde{S}_t,X_t,Y_t)}\\
		=& E[E[e^{-\int_t^T r(X_u)du} K(\tilde{S}_T) \mid \tilde{S}_t, X_t=i, Y_t, T_{n(t)+1}]\mid \tilde{S}_t, X_t=i, Y_t]\\
		=& P(T_{n(t)+1} > T\mid X_t,Y_t) E[e^{-\int_t^T r(X_u)du}  K(\tilde{S}_T)\mid \tilde{S}_t, X_t=i, Y_t,T_{n(t)+1} > T] \\
		& + \int_0^{T-t} E[e^{-\int_t^T r(X_u)du} K(\tilde{S}_T)\mid \tilde{S}_t, X_t, Y_t, T_{n(t)+1}=t + v] \frac{f(t-T_{n(t)}+v\mid X_t)} {1-F(Y_t\mid X_t)}dv\\
		=& \frac{ 1- F(T-T_{n(t)}\mid X_t)}{1-F(Y_t\mid X_t)} \rho_{X_t}(t,\tilde{S}_t)+ \int_0^{T-t} e^{-r(X_t)v} \frac{f(Y_t+v\mid X_t)} {1-F(Y_t\mid X_t)} \times\\ 
		& \sum_{j\neq i} p_{ij}(Y_t +v) \int_0^{\infty} E[e^{-\int_{t+v}^T r(X_u)du} K(\tilde{S}_T)\mid \tilde{S}_{t+v}=x,Y_{t+v}=0,\\
		& X_{t+v}=j, T_{n(t)+1}=t+v] \frac{\exp\{\frac{-1}{2}((\ln(\frac{x}{\tilde{S}_t})-\int_{t}^{t+v}(r(i) -\frac{\si^2(u,i)}{2})\,du) \frac{1}{\sqrt{\int_{t}^{t+v}\si^2(u,i)\,du} })^2\}}{x\sqrt{2\pi}\sqrt{\int_{t}^{t+v}\si^2(u,i)\,du} } dx\, dv\\
		=& \frac{ 1- F(T-t+Y_t\mid X_t)}{1-F(Y_t\mid X_t)} \rho_{X_t}(t,\tilde{S}_t)+\int_0^{T-t} e^{-r(X_t)v} \frac{f(Y_t+v\mid X_t)} {1-F(Y_t\mid X_t)} \times\\
		& \sum_{j\neq i} p_{ij} (Y_t+v) \int_0^{\infty} \varphi(t+v,x,j,0) \frac{e^{\frac{-1}{2}L^2}}{x\sqrt{2\pi}\sqrt{\int_{t}^{t+v}\si^2(u,i)\,du} }\, dx\, dv.
	\end{align*}
	Finally by using irreducibility condition (A1), we can replace $( \tilde{S}_t, X_t,Y_t)$ by generic variable $(s,i,y)$ in the above relation and thus conclude that $\varphi$ is a solution of \eqref{pricing pde}. Thus (i) holds.
	
	(ii) \noi We note that since $ K $ is of at most linear growth, there exist $ k_1,k_2>0 $ such that $ K(s)\leq k_1 + k_2 s $ for all $ s\geq 0 $. Hence,
	\begin{align*}
		\vf(t,\tilde{S}_t,X_t,Y_t)=&\tilde{E}[e^{-\int_t^T r(X_u)\,du}K(\tilde{S}_T)\mid \tilde{\mathcal{F}}_t]\\
		\leq & \tilde{E}[e^{-\int_t^T r(X_u)\,du}(k_1 + k_2 \tilde{S}_T)\mid \tilde{\mathcal{F}}_t]\\
		\leq & k_1 + k_2 \tilde{E}[e^{-\int_t^T r(X_u)\,du}\tilde{S}_T\mid \tilde{\mathcal{F}}_t].
	\end{align*}
	Since $ \vf(t,\tilde{S}_t,X_t,Y_t)=\tilde{E}[e^{-\int_t^T r(X_u)\,du}K(\tilde{S}_T)\mid \tilde{\mathcal{F}}_t] $, using the martingale property of $ e^{-\int_0^t r(X_u)\,du}\tilde{S}_t $, from equation \eqref{eq18} and the above, we get 
	\begin{equation*}
		\vf(t,\tilde{S}_t,X_t,Y_t)\leq k_1+k_2\tilde{S}_t.
	\end{equation*}
	From equation \eqref{eq18}, it is evident that $ \vf $ is an expectation of a non-negative quantity, and hence is non-negative. Thus (ii) holds.
	 \qed
	\begin{theo}\label{theo1}
		The initial-boundary value problem \eqref{p1}-\eqref{boundary} has a unique classical solution in the class of functions with at most linear growth.
	\end{theo}
	\proof Existence follows from Lemma \ref{lm2} and Proposition \ref{theo3}. For uniqueness, first assume that $\vf_1$ and $\vf_2$ are two classical solutions of \eqref{p1}-\eqref{boundary} in the prescribed class. Then using Proposition \ref{theo4}, we know that both also solve \eqref{pricing pde}. But from Lemma \ref{lm2}, there is only one such in the prescribed class. Hence $\vf_1=\vf_2$.\qed
	\begin{rem} The above theorem can also be proved in a different manner which heavily depends on the mild solution techniques \cite{PA} and Proposition 3.1.2 of \cite{CFMW}. Such an alternative approach is taken in \cite{AGMKG} to establish well-posedness of a special case of \eqref{p1}-\eqref{boundary}. The reason for adopting the present approach is that, it enables us to establish the equivalence between the PDE and an IE in the go. This in tern suggests an alternative expression of partial derivative of the solution. In the next section the importance of such representation is explained.
	\end{rem}

	\chapter{The option pricing problem}
	We concern ourselves with an extension of the widely-studied Black-Scholes model of financial markets. In our model, the market exhibits semi-Markov regime-switching. The Markov-modulated regime-switching model has been studied in \cite{tamal}. We use age-dependent processes, which have been discussed in Chapter 2 of this thesis, to extend this model.
	
	\noi Various financial instruments are traded in financial markets. Some of these instruments are stocks, bonds, options, futures, swaps, etc. Financial instruments whose price depend on the price of some other commodity are called derivatives. Options and futures are examples of derivatives.
	
	\noi An option is a contract between two parties- the writer of the option, and the holder of the option. The holder of the option purchases the option from the writer at a premium, called the ``price'' of the option. There are several types of options. The most common are European and American options. These are usually traded on exchanges, and are referred to as ``vanilla'' options. The other kinds of options are not so common, and are called ``exotic'' options. All options are further classified into call options and put options. A European call option confers upon its holder the right to buy a certain amount of stock at a fixed price, called the ``strike price'', at the time of maturity, while a European put option allows its holder to sell the same.
	
	\noi It is obvious that one must pay a premium to purchase an option. Without the premium, the holder of an option would never suffer a loss, violating the no-arbitrage condition which is satisfied in most real-life markets. The premium must be fair to both the holder as well as the writer of the option. The price of an option is thus the expected value of the discounted price of its corresponding contingent claim in a risk-neutral market.

	\noi The Black-Scholes model is a standard model used for pricing European-style options. It makes a number of assumptions, which are stated below:
	\begin{enumerate}
		\item The rate on the riskless asset is constant, and is thus called the risk-free interest rate.
		\item The logarithm of the stock price is a geometric Brownian motion (GBM) with constant drift and volatility.
		\item The stock is dividend-free.
		\item There are no arbitrage opportunities.
		\item It is possible to borrow or lend any amount, even fractional, of cash at the risk-free interest rate.
		\item It is possible to buy or sell any amount, even fractional, of the stock. This includes the possibility of short selling, i.e the act of selling a stock one does not own.
		\item The market is frictionless, i.e devoid of any fees or taxes, etc. 
	\end{enumerate}
	
	\noi The present price of a European call option, in the Black-Scholes model, can be expressed as
	\begin{equation*}
	\eta(t,s)=\tilde{E}[e^{-r(T-t)}(S_T-K)^+\mid S_t=s],
	\end{equation*}
	where $\tilde{E}$ is the risk-neutral measure, $r$ is the risk-free interest rate, $S_t$ is the stock price at the present time $ t $ and $T$ and $K$ are the maturity and the strike price, respectively.
	
	\noi Under the usual notation, the price of a European call option in the Black-Scholes model can also be expressed as the solution to a parabolic partial differential equation, known as the Black-Scholes PDE. This PDE is
	\begin{equation}\label{bs}
		\frac{\partial\eta(t,s)}{\partial t} + rs\frac{\partial\eta(t,s)}{\partial s}+\frac{1}{2}\sigma^2 s^2 \frac{\partial^2\eta(t,s)}{\partial s^2} = r\eta(t,s),
	\end{equation}
	with appropriate terminal conditions.
	
	\noi This PDE is a particular case of \eqref{eq2}, for a fixed $i$, where $r$ and $\sig$ are time-independent.	Equation \eqref{bs} can be solved analytically to give
	\begin{equation}
		\eta(t,s)=N\left( \frac{\ln\left( \frac{s}{K} \right) +\left( r+\frac{\sig^2}{2} \right)(T-t)}{\sig\sqrt{T-t}}\right) s - N\left( \frac{\ln\left( \frac{s}{K} \right) +\left( r-\frac{\sig^2}{2} \right)(T-t)}{\sig\sqrt{T-t}}\right) Ke^{-r(T-t)},
	\end{equation}
	
	
	\noi where  $N(.)$ is the cumulative distribution function of the standard normal distribution.
	
	\noi However, in practice, few of the conditions of the Black-Scholes model are met. Hence, we consider regime-switching models. Section 2.4 has discussed the motivation behind our study of age-dependent processes.
	
	\section{The Market Model}\label{sec 4.1}
	\noi Let $\{B_t\}_{t\geq0}$ be the price of money market account at time $t$ where, spot interest rate is $r_t=r(X_t)$ and $B_0=1$. Here, $ \{X_t\}_{t\geq 0}  $ is taken to be an age-dependent process discussed in Chapter 2. We have $B_t = e^{\int_0^t r(X_{u}) du}$. Let $\{S_t\}_{t\geq 0}$ be the price process of the stock, which is governed by a semi-Markov modulated GBM i.e.,
	\begin{equation}\label{8}
		dS_t = S_t~(\mu(t,X_{t})dt +\sigma(t,X_{t}) dW_t),\tab S_0>0,
	\end{equation}
	where $ \{W_t\}_{t\geq 0}$ is a standard Wiener process independent of $ \{X_t\}_{t\geq 0}$, $\mu : \mathcal{X} \to \mathbb{R}$ is the drift coefficient and $\sigma : [0,T]\times\mathcal{X} \to (0, \infty)$ corresponds to the volatility. Let $\mathcal{F}_t$ be a filtration of $\mathcal{F}$ satisfying usual hypothesis and right continuous version of the filtration generated by $X_t$ and $S_t$. Clearly the solution of the above SDE is an $\mathcal{F}_t$ semimartingale with almost sure continuous paths.
	
	\noi We address the problem of pricing derivatives under the above market assumptions. To this end we recall the quadratic hedging approach in a general market setup below.

	\section{Quadratic Hedging}
	
	Let a market consist of two assets $\{S_t\}_{t\geq0}$ and $\{B_t\}_{t\geq0}$ where $S_t$ and $B_t$ are continuous semi-martingales and $B_t$ is of finite variation.
	An \emph{admissible strategy} is a dynamic allocation to these assets and is defined as a predictable process $\pi=\{\pi_t=(\xi_t,\varepsilon_t), 0\leq t\leq T\}$ which satisfies conditions, given in $(A1)$ below.
	The components $\xi_t$ and $\varepsilon_t$ denote the amounts invested in $S_t$ and $B_t$ respectively at time $t$.
	The value of the portfolio at time t is given by
	\begin{eqnarray}\label{1a}
		V_t = \xi_tS_t + \varepsilon_tB_t.
	\end{eqnarray}
	\noi Here we assume
	\begin{itemize}
		\item[(A1)](i) $\xi_t$ is square integrable w.r.t $S_t$,\\
		(ii) $E(\varepsilon^2_t)<\infty$,\\
		(iii) $\exists a>0$ s.t. $P(V_t\geq-a,t\in[0,T])=1$.
	\end{itemize}
	
	\noi It can be shown, in a similar vein as in \cite{AGMKG}, that the market model under consideration admits the existence of an equivalent martingale measure. Hence, under the class of admissible strategies defined above, the market is free of arbitrage opportunities. This allows us to consider pricing using the F\"{o}llmer-Schweizer decomposition of the contingent claim.
	
	\noi Let $C_t$ be the accumulated additional cash flow due o a strategy $\pi$ at time $t$. Then $V_t$ can
	also be written as sum of two quantities, one is the return of the
	investment at an earlier instant $t-\Delta$ and the other one is the instantaneous cash flow $(\Delta C_t)$.
	\begin{eqnarray}\label{eq.1.1}
		ie. \quad V_t &=& \xi_{t-\Delta}S_t + \varepsilon_{t-\Delta}B_t + \Delta C_t \\
		or \quad \Delta C_t &=& S_t (\xi_t - \xi_{t-\Delta}) + B_t (\varepsilon_t - \varepsilon_{t-\Delta}) \no
	\end{eqnarray}
	which is different from $S_{t-\Delta}(\xi_t - \xi_{t-\Delta}) + B_{t-\Delta}(\varepsilon_t - \varepsilon_{t-\Delta})$. The above observation indicates that the external cash flow can be represented as a stochastic integral (but not in the It\={o} sense) resembling $S_t d\xi_t + B_t d\varepsilon_t$. It would have the same integrator and integrand but would be defined by taking the right end points instead of left end points unlike the It\={o} integral. However, here we confine ourselves in the formalism of It\={o} calculus alone. In order to derive an expression using It\={o} integrals, we note that the equations \eqref{1a} and \eqref{eq.1.1} lead to the following discrete equation
	\begin{eqnarray}
		V_t - V_{t-\Delta} = \xi_{t-\Delta}(S_t - S_{t-\Delta}) + \varepsilon_{t-\Delta}(B_t - B_{t-\Delta}) + \Delta C_t \nonumber
	\end{eqnarray}
	or equivalently the SDE
	\begin{eqnarray}\label{1f}
		dV_t=\xi_t dS_t + \varepsilon_t dB_t + dC_t.
	\end{eqnarray}
	This observation essentially makes the following (see \cite{SAN} for details) definition, which is standard in the literature, self explanatory.
	\noi\begin{defn1}
		A strategy $\pi=(\xi,\varepsilon)$ is defined to be self financing if
		\begin{eqnarray*}
			dV_t = \xi_t dS_t + \varepsilon_t dB_t,\tab \forall t \geq 0.
		\end{eqnarray*}
	\end{defn1}
	\noi Now using integration by parts rule of It\^{o} integration, we deduce from (\ref{1a})
	\begin{eqnarray*}\label{1g}
		dV_t=\xi_t dS_t + \varepsilon_t dB_t + S_t d\xi_t + B_td\varepsilon_t + d\langle S,\xi \rangle_t + d\langle B,\varepsilon\rangle_t.
	\end{eqnarray*}
	\noi By comparing this with equation (\ref{1f}) we get
	\begin{eqnarray}\label{1c}
		dC_t = S_t d\xi_t + B_td\varepsilon_t + d\langle S,\xi \rangle_t + d\langle B,\varepsilon\rangle_t.
	\end{eqnarray}
	\noi Since, $B_t$ is of finite variation and of continuous path, $ \langle B,\varepsilon\rangle_t=0 $ for all $ t $. We further notice that
	
	\begin{align}
		\no d((\xi_t S^*_t)B_t)=& \xi_t S^*_t\,dB_t+B_t\,d(\xi_t S^*_t)+d\langle\xi S^*, B\rangle_t\\
		=& \xi_t S^*_t\,dB_t + B_t S^*_t\,d\xi_t + B_t\xi_t\,dS^*_t + B_t\,d\langle\xi, S^*\rangle_t + d\langle\xi S^*, B\rangle_t \label{chain rule 1}
	\end{align}
	and
	\begin{align}
		\no d(\xi_t(S^*_t B_t))=&\xi_t\,d(S^*_t B_t) + S^*_t B_t\,d\xi_t + d\langle\xi, S^* B\rangle_t\\
		=& \xi_t B_t\,dS^*_t + \xi_t S^*_t\,dB_t + \xi_t\,d\langle S^*,B\rangle_t + S^*_t B_t\,d\xi_t + d\langle\xi, S^* B\rangle_t. \label{chain rule 2}
	\end{align}
	Thus, from equations \eqref{chain rule 1} and \eqref{chain rule 2}, we get
	\begin{equation}
		B_t\,d\langle \xi,S^*\rangle_t + d\langle\xi S^*, B\rangle_t = \xi_t\,d\langle S^*,B\rangle_t + d\langle\xi, S^* B\rangle_t.
	\end{equation}
	
	\noi Thus,
	
	\begin{align*}
		B_td\langle S^*,\xi\rangle_t= & d\langle BS^*,\xi\rangle_t+\xi_td\langle S^*,B\rangle_t-d\langle S^*\xi,B\rangle_t\\
		=& d\langle S,\xi\rangle_t,
	\end{align*}
	where $S^*_t:= B_t^{-1}S_t$. Thus using (\ref{1a}) and above identity, equation (\ref{1c}) gives
	\begin{align*}
		dC_t=& S_t\,d\xi_t + B_t\,d(V_t^*-\xi_t S_t^*)+B_t\,d\langle S^*,\xi\rangle_t\\
		=&{S_td\xi_t}+B_t(dV_t^*-\xi_tdS^*_t-{S^*_td\xi_t}- {d\langle S^*,\xi\rangle_t})+B_t{d\langle S^*,\xi\rangle_t}\\
		=& B_t(dV_t^*-\xi_t dS_t^*)
	\end{align*}
	or,
	\begin{equation}\label{1d}
		\frac{1}{B_t}dC_t = dV^*_t-\xi_tdS^*_t.
	\end{equation}
	\noi The process $C^*_t:=C^*_0 + \int^t_0 \frac{1}{B_t}dC_t$, for obvious reason, is called the \emph{discounted cost process} which gives the net
	present value at $t=0$ of the accumulated additional cash flow up to time $t$. If a strategy $\pi$ is \emph{self-financing}, clearly $C^*_t(\pi)=$ constant and hence one has from \eqref{1d},
	$$ dV^*_t = \xi_tdS^*_t .
	$$
	The Black-Scholes model is an example of what is called a \textit{complete} market. A complete market is one in which all contingent claims are attainable by self-financing strategies. In many market models, the class of self financing strategies is inadequate to ensure a perfect hedge for a given claim. Such markets are called \emph{incomplete}. In such a market an \emph{optimal strategy} is an admissible hedging strategy for which the \emph{quadratic residual risk}, a measure of the cash flow, is minimized subject to a certain constraint(see \cite{FS} for more details). This optimal strategy need not be self-financing. It is shown in \cite{FS} that if the market is arbitrage free, the existence of an optimal strategy for hedging an $\calF_T$ measurable claim $H$, is equivalent to the existence of F\"{o}llmer Schweizer decomposition of discounted claim $H^*:= B^{-1}_T H$ in the form
	\begin{equation}\label{eq1}
		H^*=H_0+\int^{T}_{0}{\xi^{H^*}_t}dS^*_t+L^{H^*}_T,
	\end{equation}
	where $H_0\in L^2(\Omega,\mathcal{F}_0,P), L^{H^*}=\{L^{H^*}_t\}_{0\leq t\leq T}$ is a square integrable martingale starting with zero and orthogonal to the martingale part of $S_t$, and $\xi^{H^*}=\{\xi^{H^*}_t\}_{t\geq0}$ satisfies A1 (i). Further $\xi^{H^*}$ appeared in the decomposition, constitutes the optimal strategy. Indeed the optimal strategy $\pi=(\xi_t,\varepsilon_t)$ is given by
	\begin{eqnarray}\label{11a}
		\nonumber \xi_t &:=& \xi^{H^*}_t,\\
		V^*_t &:=& H_0+\int^{t}_{0}{\xi_u}dS^*_u+L^{H^*}_t,\\
		\nonumber \varepsilon_t &:=& V^*_t-\xi_tS^*_t,
	\end{eqnarray}
	\noi and $B_t V^*_t$ represents the \emph{locally risk minimizing price} at time $t$ of the claim $H$.  The pricing and hedging problems in any market, especially an incomplete one, can thus be addressed by constructing the F\"{o}llmer-Schweizer decomposition of the relevant contingent claim.
	
	\noi Returning to our particular market model as described in Section \ref{sec 4.1}, we aim to construct the F\"{o}llmer-Schweizer decomposition.

	\section{Hedging and Pricing equations}
	 We seek to find an expression for the optimal hedging strategy for a number of European-type options. In this section, we discuss call, put and barrier options. Options can be categorised, depending on their dependence on the path of the stock price process. 
	
	\subsection{Path-independent options}\label{path independent}
	Path-independent options such as European call/put options and their combinations (butterfly spreads, etc.) are the easiest to price.

		\begin{theo}\label{theo5} Let $\vf$ be the unique classical solution of \eqref{p1}-\eqref{boundary} in the class of functions with at most linear growth.
		\begin{enumerate}
			\item Let $(\xi,\varepsilon)$ be given by
			\begin{equation}\label{VI3.20}
				\xi_t :=\frac{\partial\varphi(t,S_t,X_{t-},Y_{t-})}{\partial s} \txt{ and } \varepsilon_{t} := e^{-\int_{0}^{t}r(X_{u})du} (\varphi(t,S_t,X_{t},Y_{t})-\xi_{t}S_{t}).
			\end{equation}
			Then $(\xi,\varepsilon)$ is the optimal admissible strategy.
			\item $\varphi(t,S_t,X_t,Y_t)$ is the locally risk minimizing price of $K(\tilde{S}_T)$.
		\end{enumerate}
		\end{theo}
						
		\proof Under the market model, the mean variance tradeoff (MVT) process $\hat{K}_t$ (as defined in Pham et al \cite{PH}) takes the following form
		\begin{equation*}
			\hat{K}_t=\int_0^t\left(\frac{\mu(s,X_s)-r(X_s)}{\sigma(s,X_s)}\right)^2 ds.
		\end{equation*}
		Hence $\hat{K}_t$ is bounded and continuous on $[0,T]$. We also know that $S_t$ has almost sure continuous paths. Since, $H^*\in L^2(\Omega, {\cal F}, P)$ for $H=K(\tilde{S}_T)$ we apply corollary 5 and Lemma 6 of \cite{PH} to conclude that $H^*$ admits a F\"{o}llmer-Schweizer decomposition
		\begin{equation}\label{follschweiz}
			H^*=H_0+\int_0^T \xi_u^{H^*}\left( dA_u^*+\kappa(u,X_u)A^*_u\,du \right) +L^{H^*}_T,
		\end{equation}
	
		  \noi with an integrand $\xi^{H^*}$ satisfying A1 (i) and $L^{H^*}$ being square integrable. Therefore, to prove the theorem it is sufficient to show that
		\begin{itemize}
			\item[(a)] there exists $\mathcal{F}_0$ measurable $H_0$ and $\mathcal{F}_T$ measurable $L_T$ such that $L_t:= E[L_T\mid \mathcal{F}_t]$ is orthogonal to $\int_0^t \sig(X_t)S^*_t dW_t$ i.e., the martingale part of $S^*_t$ and $H^*=H_0+\int^{T}_{0}{\xi_t}dS^*_t+L_T$;
			\item[(b)] $\frac{1}{B_t} \varphi(t,S_t,X_{t-},Y_{t-})= H_0+\int^{t}_{0}{\xi_t}dS^*_t+L_t$ for all $t\le T$;
			\item[(c)] $\varphi(t,S_t,X_{t},Y_{t})= B_t \varepsilon_{t} + \xi_{t}S_{t}$ for all $t\le T$;
			\item[(d)] $P(\varphi(t,S_t,X_{t},Y_{t}) \ge 0 \forall t\le T)=1$,
		\end{itemize}
		where $\vf$ is the unique classical solution of \eqref{p1}-\eqref{boundary} in the prescribed class and $(\xi,\varepsilon)$ is as in \eqref{VI3.20}.
							
		\noi In Lemma \ref{lm2} it is shown that $\vf$ is a non-negative function. Hence (d) holds. From the definition of $\varepsilon_t$ in \eqref{VI3.20}, (c) follows. Next we show the condition (b). We apply It\^{o}'s formula to $e^{-\int_{0}^{t}r(X_{u})du} \varphi(t,S_{t},X_{t},Y_{t})$ under the
		measure $P$ to get
		\begin{align}
			\no e^{-\int_{0}^{t}r(X_{u})\,du}\varphi(t,S_{t},X_{t},Y_{t})=& \varphi(0,S_{0},X_{0},Y_{0})+\int_0^t e^{-\int_0^u r(X_v)\,dv}\frac{\pa\vf}{\pa u}(u,S_u,X_{u-},Y_{u-})\,du\\
			\no&+ \int_{0}^T e^{-\int_0^u r(X_v)\,dv}\left(-r(X_u)\right)\vf(u,S_u,X_{u-},Y_{u-})\,du\\
			\no&+ \int_{0}^T e^{-\int_0^u r(X_v)\,dv}\frac{\pa\vf}{\pa s}(u,S_u,X_{u-},Y_{u-})\,dS_u\\
			\no&+ \frac12 \int_{0}^T e^{-\int_0^u r(X_v)\,dv}\frac{\pa^2\vf}{\pa s^2}(u,S_u,X_{u-},Y_{u-})\,d\langle S\rangle_u\\
			\no&+ \int_{0}^T e^{-\int_0^u r(X_v)\,dv}\frac{\pa\vf}{\pa y}(u,S_u,X_{u-},Y_{u-})\,d Y^{(c)}_u\\
			&+ \sum_{u\leq t}e^{-\int_0^u r(X_v)\,dv}\left( \vf(u,S_u,X_{u},Y_{u})-\vf(u,S_u,X_{u-},Y_{u-}) \right),\label{ito on vf}
		\end{align}
		where $ Y^{(c)}_t $ is the continuous part of $ Y_t $. Now,
		\begin{align*}
			\vf(u,S_u,X_{u},Y_{u})-\vf(u,S_u,X_{u-},Y_{u-})=& \vf\left(u,S_u,X_{u-}+\int_{\mathbb{R}}h(X_{u-},Y_{u-},z)\,\wp(du,dz),\right.\\
			 &\left. Y_{u-} - \int_{\mathbb{R}} g(X_{u-},Y_{u-},z)\,\wp(du,dz)\right)\\
			 &- \vf(u,S_u,X_{u-},Y_{u-})\\
			 =& \int_{\mathbb{R}} \left[ \vf\left(u,S_u,X_{u-}+ h(X_{u-},Y_{u-},z), Y_{u-} - g(X_{u-},Y_{u-},z) \right)\right.\\
			 &-\left. \vf(u,S_u,X_{u-},Y_{u-})\right]\,\wp(du,dz)\\
			 =& \int_{\mathbb{R}} \left[ \vf\left(u,S_u,X_{u-}+ h(X_{u-},Y_{u-},z), Y_{u-} - g(X_{u-},Y_{u-},z) \right)\right.\\
			 &-\left. \vf(u,S_u,X_{u-},Y_{u-})\right]\,(\hat{\wp}(du,dz)+\,du\,dz),\\
		\end{align*}
		where $ \hat{\wp} $ is the compensated Poisson random measure.  We set
		\begin{eqnarray*}
		L_t&:=& \int_{0}^{t}e^{-\int_{0}^{u}r(X_{v})dv}\int_{\mathbb{R}} [\varphi(u,S_{u},X_{u-} +h(X_{u-},Y_{u-},z),Y_{u-}-g(X_{u-},Y_{u-},z)) \\
		&&-\varphi(u,S_{u},X_{u-},Y_{u-})]{\hat{\wp}}(du,dz).
		\end{eqnarray*}
		From the definitions of $h$ and $g$, we can write
		\begin{align*}
			X_{u-}+ h(X_{u-},Y_{u-},z)=& \sum_{j\neq X_{u-}} j \mathds{1}_{\Lambda_{X_{u-}j}(Y_{u-})}(z) + X_{u-}\mathds{1}_{\bigcup_{j\neq i}\Lambda_{X_{u-}j}(Y_{u-})^c }(z)
		\end{align*}
		and
		\begin{align*}
			Y_{u-}- g(X_{u-},Y_{u-},z)=&  Y_{u-}\mathds{1}_{\bigcup_{j\neq i}\Lambda_{X_{u-}j}(Y_{u-})^c }(z).
		\end{align*}
		Thus, 
		\begin{align*}
			&\int_{\mathbb{R}} \left[ \vf\left(u,S_u,X_{u-}+ h(X_{u-},Y_{u-},z), Y_{u-} - g(X_{u-},Y_{u-},z) \right)-\vf(u,S_u,X_{u-},Y_{u-})\right]\,du\,dz\\
			=& \sum_{X_{u-}\neq j} \left[ \vf(S_u,j,0)-\vf(S_u,X_{u-},Y_{u-}) \right] \lambda_{X_{u-}j}(Y_{u-})\,du.
		\end{align*}
		
		\noi We know that $ dS_t=S_t(\mu(t,X_t)\,dt + \sigma(t,X_t)\,dW_t $, $ d\langle S\rangle_t = \sigma^2(t,X_t)\,dt $ and $ dY^{(c)}_t=dt $. Hence, from \eqref{ito on vf}, we get
		\begin{align}
			\no e^{-\int_{0}^{t}r(X_{u})\,du}\varphi(t,S_{t},X_{t},Y_{t})=& \varphi(0,S_{0},X_{0},Y_{0})+\int_0^t e^{-\int_0^u r(X_v)\,dv}\times\\
			\no& \left( \frac{\pa \vf}{\pa u}+\frac{\pa \vf}{\pa y} + \mu(u,X_u)S_u \frac{\pa \vf}{\pa s} +\frac12 \sigma^2(u,X_u) S_u^2 \frac{\pa^2 \vf}{\pa s^2}-r(X_u)\right.\\
			\no&+\left. \sum_{X_{u-}\neq j} \left[ \vf(S_u,j,0)-\vf(S_u,X_{u-},Y_{u-}) \right] \lambda_{X_{u-}j}(Y_{u-}) \right)\,du\\
			\no&+ \int_0^t e^{-\int_0^u r(X_v)\,dv} \sigma(u,X_u)\frac{\pa\vf}{\pa s}(u,S_u,X_{u-},Y_{u-})\,dW_u + L_t.
		\end{align}
		
		\noi Using \eqref{p1}, this simplifies to
		\begin{align*}
			& \varphi(0,S_{0},X_{0},Y_{0})+\int_0^t e^{-\int_0^u r(X_v)\,dv} (\mu(u,X_u)-r(X_u))S_u\frac{\pa\vf}{\pa s}\,du \\
			&+ \int_0^t e^{-\int_0^u r(X_v)\,dv} \sigma(u,X_u)\frac{\pa\vf}{\pa s}(u,S_u,X_{u-},Y_{u-})\,dW_u + L_t.
		\end{align*}

		\noi Now, $ S_t^*=B_t^{-1}S_t=e^{-\int_0^t r(X_u)\,du}S_t $. Hence 
		 \begin{equation*}
		 dS_t^*=e^{-\int_0^t r(X_u)\,du}S_t((\mu(t,X_t)-r(X_t))\,dt + \sigma(t,X_t)\,dW_t.
		 \end{equation*}
		
		\noi Thus, we obtain, for all $t<T$
		\begin{align*}\label{VI3.24}
			e^{-\int_{0}^{t}r(X_{u})\,du} \varphi(t,S_{t},X_{t},Y_{t}) = &
			\varphi(0,S_{0},X_{0},Y_{0})+\int_{0}^{t}\frac{\partial
			\varphi(u,S_{u},X_{u-},Y_{u-})}{\partial s} d S^*_{u}\no
			+ L_t.
		\end{align*}
	
		\noi Since, $L_t$ is an integral w.r.t. a compensated Poisson random measure, it is a martingale. Again the independence of $W_t$ and $\wp$ implies the orthogonality of $L_t$ to the martingale part of $S^*_t$. Thus, we obtain the following F-S decomposition by letting $t \uparrow T $,
		\begin{equation}\label{VI3.25}
			B_T^{-1}K(\tilde{S}_T) = \varphi(0,S_{0},X_{0},Y_{0}) + \int^{T}_{0}{\xi_t}dS^*_t + L_T.
		\end{equation}
		Thus (a) and (b) hold.\qed

		\begin{theo}\label{theo6}
			Let $\varphi$ be the unique solution of \eqref{p1}-\eqref{boundary}. Set
			\begin{eqnarray}\label{5}
			\no\psi(t,s,i,y) &:=& \frac{ 1- F(T-t+y\mid i)}{1-F(y\mid i)} \frac{\partial\eta_{i}(t,s)}{\partial s}+\int_0^{T-t}e^{-r(i)v} \frac{f(y+v\mid i)} {1-F(y\mid i)} \times\sum_j p_{i j}(y+v) \\
			\no&& \int_0^{\infty} \varphi(t+v,x,j,0) \frac{e^{\frac{-1}{2}L(t,i)^2}}{\sqrt{2\pi} xs\bar{\sigma}}\frac{\left(\ln(\frac{x}{s})-(r(i)v -\bar{\sigma}^2)\right)}{\bar{\sigma}^2} dx dv\\
			\end{eqnarray}
			where $(t,s,i,y)\in \mathcal{D}$ and $\bar{\sigma}^2=\int_t^{t+v}\sigma(u,i)^2\,du$.
			Then $\psi(t,s,i,y) =\frac{\pa}{\pa s}\vf(t,s,i,y)$.
		\end{theo}
		
		\proof We need to show that $\psi$ (as in \eqref{5}) is equal to $\frac{\partial\varphi}{\partial s}$. Indeed, one obtains the RHS of \eqref{5} by differentiating the right side of \eqref{p1} with respect to $s$. Hence the proof. \qed
		
		\begin{remark}\label{rem1}
			
			\noi We have shown that $ \frac{\pa}{\pa s}\vf(t,s,i,y) $ is a necessary quantity to be calculated in order to find the optimal hedging. Attempting to compute $ \frac{\pa}{\pa s}\vf(t,s,i,y) $ using numerical differentiation would increase the sensitivity of $ \frac{\pa \vf}{\pa s} $ to small errors. Equation \eqref{5} gives a better, more robust approach to computing $ \frac{\pa}{\pa s}\vf(t,s,i,y) $, using numerical integration.
		\end{remark}

%

	\subsection{Weakly path-dependent options}\label{barrier options}
	In this subsection, we consider barrier options. These are the options which are either exercised or allowed to expire immediately upon the stock price hitting a certain ``barrier''. There are four types of European barrier options (the barrier is assumed to be $b>0$):
	\begin{enumerate}
		\item Down-and-out: The option becomes worthless if the barrier $ S=b $ is reached from above before expiry. 
		\item Up-and-out: The option becomes worthless if the barrier $ S=b $ is reached from below before expiry.
		\item Down-and-in: The option becomes worthless unless the barrier $ S=b $ is reached from above before expiry.
		\item Up-and-in: The option becomes worthless unless the barrier $ S=b $ is reached from below before expiry.
	\end{enumerate}
	The payoff function for barrier options is not solely determined by the stock price at maturity. The option expires, or is immediately exercised (as the case may be), depending on whether the stock price process, $S_t$, hits a certain barrier or not. In other words, the payoff is path-dependent. However, the payoff does not depend on the entire history of the stock price; it only depends on a particular attribute of the stock price process. Thus, barrier options are called ``weakly path-dependent''.
	
	\noi These barrier conditions can apply to call options as well as put options. We consider the problem of pricing an up-and-out European call option in this subsection. We, however, restrict ourselves to the case where the volatility does not depend explicitly on time, so that $\sigma(t,i)=\sigma(i)$ for all $ i$ and $t$. Let the price of the up-and-out European call option be $ \vf_c^{uo} $. Then, the contingent claim can be written as
	\begin{equation}
		H=(S_T-K)^+\mathds{1}\left \lbrace\max_{t\in [0,T]}S_t <b\right\rbrace,
	\end{equation} 
	under the usual notation. We define $ \tau:=\min\{t>0:S_t=b\} $. Thus, $\tau$ is an $ \mathcal{F}_t$-stopping time, which is almost surely finite. Now, if $ S_0\geq b $, then the option will already be in a state of expiry. Hence, we only consider the non-trivial case $S_0<b$. In this case, the contingent claim $H$ can be written in an alternative form as
	\begin{equation*}
		H=(S_T-K)^+\mathds{1}\left \lbrace \tau>T \right\rbrace.
	\end{equation*} 
	The pricing problem for barrier options reduces	to the one of solving equation \eqref{pricing pde} on the domain
	\begin{equation*}
		\mathcal{D}_-:=\{(t,s,i,y)\in(0,T)\times(0,b)\times\chi\times(0,T)\}
	\end{equation*}
	with the boundary condition
	\begin{equation}\label{barrier boundary}
		\vf(t,b,i,y)=0\ \text{for all } t\in(0,T),\ i\in\chi.
	\end{equation}
	The analysis we have made in Section \ref{uniqueness} regarding the redundancy of the boundary condition as $s\downarrow 0$ does not apply here, for $ s\uparrow b $, because the pricing PDE does not reduce to an $s$-independent PDE. Hence, the boundary condition \eqref{barrier boundary} is necessary.
	\begin{lem}
		Consider the following integral equation
		\begin{align}
		\no\varphi_c^{uo}(t,s,i,y)=& \frac{ 1- F(T-t+y\mid i)}{1-F(y\mid i)} \eta^{uo}_{c;i}(t,s)+\int_0^{T-t} e^{-r(i)v} \frac{f(y+v\mid i)} {1-F(y\mid i)} \times\\
		&\no\left[ \Phi\left( \dfrac{\ln\left (\frac{b}{s}\right )-(r(i) -\frac{\si^2(i)}{2})v}{\sigma(i)\sqrt{v}} \right)-\exp\left\lbrace \left( \frac{2r(i)}{\sigma^2(i)}-1 \right)\ln\left (\frac{b}{s}\right ) \right\rbrace\times \right.\\
		&\left. \Phi\left( \dfrac{-\ln\left (\frac{b}{s}\right )-(r(i) -\frac{\si^2(i)}{2})v}{\sigma(i)\sqrt{v}} \right)  \right]\times\no\\
		&\no \sum_{j\neq i} p_{ij} (y+v) \int_0^b \varphi_c^{uo}(t+v,x,j,0) \alpha(x;s,i,v)\, dx\, dv.\label{barrier integral}\\
		\end{align}
		Then (i) equation \eqref{p1} has a unique solution $ C(\overline{\mathcal{D_-}}) $, (ii) the solution of the integral equation is in $ C^{1,2,1}(\mathcal{D})$, and (iii) $\vf(t,s,i,y)$ is non-negative.
	\end{lem}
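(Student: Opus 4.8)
The plan is to follow the template of Lemma~\ref{lm2}, adapting it to the bounded spatial domain $s\in(0,b)$ and to the hard boundary at $s=b$. Since $\overline{\mathcal{D}_-}=[0,T]\times[0,b]\times\chi\times[0,T]$ is compact, I would work in the Banach space $C(\overline{\mathcal{D}_-})$ with the ordinary sup-norm (the weight $1/(1+s)$ used in Lemma~\ref{lm2} is no longer needed, as $s$ is now bounded). Define the operator $A$ to be the right-hand side of \eqref{barrier integral}, so that a solution of the integral equation is precisely a fixed point of $A$. Writing $\Pi(s,i,v)$ for the bracketed factor in \eqref{barrier integral}, a key preliminary observation is that $A$ automatically respects the boundary condition \eqref{barrier boundary}: at $s=b$ we have $\ln(b/s)=0$, so $\Pi(b,i,v)=\Phi(\cdot)-\Phi(\cdot)=0$, while the single-regime up-and-out price satisfies $\eta^{uo}_{c;i}(t,b)=0$; hence $A\varphi(t,b,i,y)=0$ for every $\varphi$. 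Continuity of $A\varphi$ on $\overline{\mathcal{D}_-}$ then follows from the continuity of $\eta^{uo}_{c;i}$ and of the integrand, so $A:C(\overline{\mathcal{D}_-})\to C(\overline{\mathcal{D}_-})$.

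For part (i), I would show $A$ is a contraction. Subtracting two images cancels the first term, and since $\Pi(s,i,v)$ equals the probability that the regime-$i$ risk-neutral geometric Brownian motion started at $s$ stays below $b$ up to time $v$ (a consequence of the reflection principle for Brownian motion with drift, whose explicit form is exactly the bracket in \eqref{barrier integral}), one has $0\le\Pi\le1$; together with $e^{-r(i)v}\le1$, $\sum_{j\neq i}p_{ij}(y+v)\le1$ and $\int_0^b\alpha(x;s,i,v)\,dx\le1$ this gives
\begin{align*}
\|A\varphi_1-A\varphi_2\|\le\|\varphi_1-\varphi_2\|\sup_{\overline{\mathcal{D}_-}}\int_0^{T-t}\frac{f(y+v\mid i)}{1-F(y\mid i)}\,dv=\|\varphi_1-\varphi_2\|\sup_{\overline{\mathcal{D}_-}}\frac{F(y+T-t\mid i)-F(y\mid i)}{1-F(y\mid i)},
\end{align*}
and the supremum is strictly less than $1$ exactly as in Lemma~\ref{lm2}. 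The Banach fixed point theorem then yields the unique fixed point of $A$ in $C(\overline{\mathcal{D}_-})$.

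Parts (ii) and (iii) transcribe the corresponding parts of Lemma~\ref{lm2}. For (ii), the first term inherits $C^{1,2,1}$ regularity on $\mathcal{D}$ from the smoothness of the barrier price $\eta^{uo}_{c;i}$ in the interior and from the smoothness of $\Pi$ and of $1-F$, while the second term is smooth because $\varphi(t+v,x,j,0)$ is multiplied by $C^1$ functions of $(t,y)$ and a $C^2$ function of $s$ and then integrated over $v\in(0,T-t)$. For (iii), every kernel appearing in $A$ — including $\Pi$ and $\eta^{uo}_{c;i}$ — is non-negative, so $A$ maps the closed cone $V=\{\phi\in C(\overline{\mathcal{D}_-}):\phi\ge0\}$ into itself; starting from any $v_0\in V$ the iterates $A^nv_0$ form a Cauchy sequence by the contraction estimate, and its limit lies in $V$ and is the fixed point, so $\varphi_c^{uo}\ge0$.

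The main obstacle is the verification of the properties of the barrier factor $\Pi(s,i,v)$ and of $\eta^{uo}_{c;i}$ that drive all three parts: namely that the explicit expression in \eqref{barrier integral} is genuinely the first-passage survival probability $P(\max_{0\le u\le v}S_u<b\mid S_0=s)$ — and therefore lies in $[0,1]$, is smooth on $(0,b)$, vanishes at $s=b$, and is non-negative — and that the single-regime up-and-out call $\eta^{uo}_{c;i}$ shares the same boundary and positivity behaviour. Once these standard facts about single-regime up-and-out options are in hand, the remainder of the argument is a direct adaptation of Lemma~\ref{lm2}.
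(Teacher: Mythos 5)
Your proposal is correct and follows exactly the route the paper intends: the paper's own proof of this lemma consists of the single sentence that it is ``similar to that of Lemma \ref{lm2}'', and your write-up is a faithful, correctly adapted version of that argument (replacing the weighted norm by the plain sup-norm on the now-compact domain, using that the bracketed survival-probability factor lies in $[0,1]$ and vanishes at $s=b$ together with $\eta^{uo}_{c;i}(t,b)=0$, and bounding the contraction constant by $\sup\bigl(F(y+T-t\mid i)-F(y\mid i)\bigr)/\bigl(1-F(y\mid i)\bigr)<1$ as before). The details you supply are consistent with the paper and fill in precisely what the paper leaves implicit.
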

	\begin{proof}
		The proof is similar to that of Lemma \ref{lm2}.\qed
	\end{proof}

	\begin{prop}
		The unique solution of equation \eqref{barrier integral} also solves the initial value problem \eqref{p1}-\eqref{barrier boundary}.	
	\end{prop}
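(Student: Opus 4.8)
The plan is to follow the proof of Proposition~\ref{theo3} essentially line for line, the only genuinely new features being the barrier-corrected kernel and the inhomogeneous datum at $s=b$. By the preceding lemma the solution $\vf_c^{uo}$ of \eqref{barrier integral} exists, is non-negative and lies in $C^{1,2,1}(\mathcal{D}_-)$, so every partial derivative written below is legitimate and may be taken under the integral sign. Writing $\eta^{uo}_{c;i}$ for the up-and-out Black--Scholes price, which solves \eqref{eq2} on $(0,b)$ with $\eta^{uo}_{c;i}(t,b)=0$, and denoting by $\beta(x;t,s,i,v)$ the product of the bracketed factor in \eqref{barrier integral} with $\alpha(x;s,i,v)$, the integral equation has exactly the shape of \eqref{pricing pde} with $\rho_i$ replaced by $\eta^{uo}_{c;i}$ and $\alpha$ replaced by $\beta$, but with the inner integration confined to $(0,b)$.

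Accordingly I would differentiate \eqref{barrier integral} once in $t$, once in $y$, and once and twice in $s$, reproducing the computations \eqref{2ae}--\eqref{7ae}. The decisive structural requirement is that the map $(t,s)\mapsto\int_0^b \vf_c^{uo}(t+v,x,j,0)\,\beta(x;t,s,i,v)\,dx$ be annihilated by the backward operator of \eqref{eq2} on $(0,b)$ and vanish at $s=b$; this is the exact analogue of the role played by $\alpha$ and the identity \eqref{10ae2} in Proposition~\ref{theo3}. The natural way to secure it is to recognise the barrier-corrected kernel as the transition density of the regime-$i$ geometric Brownian motion absorbed at $b$, produced by the method of images, for which the image term is itself a solution of the backward operator and the absorbing boundary forces the value to vanish at $s=b$. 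Granting this, the cancellations of \eqref{6ae} and \eqref{7ae} carry over and reproduce the non-local PDE \eqref{p1} on $\mathcal{D}_-$.

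Two auxiliary points complete the argument. First, the analogue of Lemma~\ref{dirac} must be re-proved, namely that for interior $s\in(0,b)$
\begin{equation*}
\lim_{u\downarrow 0}\int_0^b \vf_c^{uo}(t+u,x,j,0)\,\beta(x;t,s,i,u)\,dx=\vf_c^{uo}(t,s,j,0).
\end{equation*}
For $s$ strictly below $b$ the kernel $\beta(\cdot;t,s,i,u)$ converges weakly to the Dirac mass at $s$: the survival factor tends to $1$ and the lognormal part concentrates at $s$ exactly as in Lemma~\ref{dirac}, while any image correction, having its mass bounded away from interior $s$, vanishes in the limit. The uniform-integrability and step-function approximation of Lemma~\ref{dirac}, together with the linear-growth bound on $\vf_c^{uo}$ and the boundedness of $(0,b)$, then give the claim. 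Second, the data are checked directly: setting $t=T$ annihilates the integral term and leaves $\vf_c^{uo}(T,s,i,y)=\eta^{uo}_{c;i}(T,s)$, the up-and-out terminal payoff, while at $s=b$ one has $\ln(b/s)=0$, so the two normal terms in the bracket coincide and the prefactor $\exp\{(2r(i)/\sigma^2(i)-1)\cdot 0\}$ equals $1$, making the whole bracket vanish; since $\eta^{uo}_{c;i}(t,b)=0$ as well, \eqref{barrier integral} yields $\vf_c^{uo}(t,b,i,y)=0$, which is precisely \eqref{barrier boundary}.

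The step I expect to be the main obstacle is establishing the backward-operator property of the barrier-corrected kernel --- the barrier analogue of \eqref{10ae2} --- uniformly up to the boundary as $s\uparrow b$, where the kernel degenerates and one must simultaneously justify differentiation under the integral and the vanishing of the image contribution. One must in particular confirm that the kernel written in \eqref{barrier integral} does coincide with the absorbed transition density (rather than merely a survival factor times the free density), since only for the genuine killed density do the cancellations close. Everything else is a transcription of the unconstrained case of Proposition~\ref{theo3} with the $x$-integral truncated at $b$.
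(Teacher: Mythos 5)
You have correctly identified the decisive point, and it deserves to be stated as more than a worry: the kernel printed in \eqref{barrier integral} is \emph{not} the absorbed transition density, so the step you propose to ``grant'' fails for the kernel as written. The bracketed factor there is $P[\max_{[t,t+v]}S_u<b\mid S_t=s]$, a function of $s$ and $v$ alone, and it multiplies the \emph{free} lognormal density $\alpha(x;s,i,v)$. A product of two solutions of the backward Black--Scholes operator is not itself a solution (the cross term $\sigma^2(i)s^2\,\partial_s(\mathrm{Surv})\,\partial_s\alpha$ survives), and the mismatch is visible even more crudely: the product integrates over $x\in(0,b)$ to $P[\max_{[t,t+v]}S<b]\cdot P[S_{t+v}<b]$, whereas the killed density integrates to $P[\max_{[t,t+v]}S<b]$. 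Consequently the barrier analogue of \eqref{10ae2} does not hold for this kernel and the cancellations of \eqref{6ae}--\eqref{7ae} do not close. The verification does go through if the kernel is the genuine killed density $\alpha(x;s,i,v)-(b/s)^{2r(i)/\sigma^2(i)-1}\alpha(x;b^2/s,i,v)$, equivalently if the survival factor is taken conditional on the endpoint $S_{t+v}=x$ (a bridge probability, which does depend on $x$); that is also what the probabilistic computation in the following proposition produces once the conditioning on $\{\tau>T\}$ is decomposed correctly, since the factor $P[\tau>T\mid\cdot]$ cannot be pulled outside the $x$-integral. Your remaining steps --- the Dirac lemma on $(0,b)$ with the image term contributing nothing at interior $s$, the terminal datum, and the vanishing of the bracket at $s=b$ --- are right and are exactly what the paper intends: its own proof of this proposition consists of the single remark that the argument is ``similar to that of Proposition \ref{theo3}'', simplified by the boundedness of $\varphi_c^{uo}$ and the time-independence of $\sigma$, so it never confronts the kernel issue you have isolated.
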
 
	\begin{proof}
		The proof is similar to that of Proposition \ref{theo3}, albeit slightly less tedious, since $ \vf_c^{uo} $ is a bounded function, and also because $ \sigma(t,i)=\sigma(i) $ for all $ t $ and $ i $.\qed
	\end{proof}

	\begin{prop}
		Assume \eqref{lambda condition 1} and \eqref{lambda condition 2}. We also assume that the transition matrix $ \tilde{p}_{ij}:=\int_0^\infty p_{ij}(y)\,dF_i(y) $ is irreducible. Let $\varphi$ be a classical solution of \eqref{p1}-\eqref{barrier boundary}. Then  $\varphi$ solves the integral equation \eqref{barrier integral}. 
	\end{prop}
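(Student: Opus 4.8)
The plan is to mimic the proof of Proposition \ref{theo4}(i), the only genuinely new ingredient being the treatment of the absorbing barrier through a hitting time and the reflection principle for geometric Brownian motion. On an auxiliary space $(\tilde\Omega,\tilde{\mathcal{F}},\tilde P)$ carrying a standard Brownian motion $W$ and the Poisson random measure $\wp$ independent of $W$, I would let $\tilde S_t$ solve $d\tilde S_t=\tilde S_t\big(r(X_t)\,dt+\sigma(X_t)\,dW_t\big)$ with $X_t$ the age-dependent process of \eqref{xdef}--\eqref{ydef} (recall $\sigma$ is time-independent in the barrier setting), so that $(\tilde S_t,X_t,Y_t)$ is Markov with the generator $\mathcal{A}_t$ displayed in Proposition \ref{theo4}. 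I would then introduce the first hitting time $\tau:=\min\{t>0:\tilde S_t=b\}$ and work on the event $\{\tilde S_0<b\}$.

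Next I would apply It\^{o}'s formula to $N_t:=e^{-\int_0^t r(X_u)\,du}\varphi(t,\tilde S_t,X_t,Y_t)$ and stop it at $\tau\wedge T$. Because $\varphi$ is a classical solution of \eqref{p1} on $\mathcal{D}_-$, the finite-variation part of $dN_t$ vanishes on $\{t<\tau\}$, so $N_{t\wedge\tau}$ is a local martingale; since the up-and-out claim is bounded (its value is capped by the barrier $b$), $\varphi$ is bounded on $\overline{\mathcal{D}_-}$ and the local martingale is in fact a true martingale -- this replaces the Doob/submartingale argument that the linear-growth case required. Optional stopping then yields, on $\{\tau>t\}$,
\[
\varphi(t,\tilde S_t,X_t,Y_t)=\tilde E\!\left[e^{-\int_t^{\tau\wedge T}r(X_u)\,du}\,\varphi(\tau\wedge T,\tilde S_{\tau\wedge T},X_{\tau\wedge T},Y_{\tau\wedge T})\,\big|\,\mathcal{F}_t\right].
\]
On $\{\tau\le T\}$ one has $\tilde S_\tau=b$, whence $\varphi(\tau,b,\cdot,\cdot)=0$ by the boundary condition \eqref{barrier boundary}, while on $\{\tau>T\}$ the terminal condition gives $\varphi(T,\tilde S_T,\cdot,\cdot)=(\tilde S_T-K)^+$. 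Hence $\varphi(t,\tilde S_t,X_t,Y_t)=\tilde E\big[e^{-\int_t^T r(X_u)\,du}(\tilde S_T-K)^+\mathds{1}\{\tau>T\}\mid\mathcal{F}_t\big]$, the locally risk minimizing price of the up-and-out call.

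To obtain \eqref{barrier integral} I would condition on the first transition time $T_{n(t)+1}$ after $t$, exactly as in Proposition \ref{theo4}. The event of no transition before $T$ contributes $\frac{1-F(T-t+y\mid i)}{1-F(y\mid i)}\,\eta^{uo}_{c;i}(t,s)$, where $\eta^{uo}_{c;i}$ is the single-regime up-and-out price; the event that the first transition occurs at $t+v\le T$ contributes the discount $e^{-r(i)v}$, the holding-time density $\frac{f(y+v\mid i)}{1-F(y\mid i)}$, the switching weights $p_{ij}(y+v)$, and a reset of the age to $0$. Because $\wp$ is independent of $W$, on the no-transition event the price is a single-regime geometric Brownian motion with parameters $r(i),\sigma(i)$, so the joint contribution of $\{\text{barrier not breached on }(t,t+v)\}$ and the conditional law of $\tilde S_{t+v}$ on $(0,b)$ is governed by the transition mechanism of GBM absorbed at $b$; the reflection principle evaluates the associated survival probability as precisely the bracketed expression $\Phi(\cdot)-(b/s)^{2r(i)/\sigma^2(i)-1}\Phi(\cdot)$ appearing in \eqref{barrier integral}, multiplying the truncated integral $\int_0^b\varphi^{uo}_c(t+v,x,j,0)\,\alpha(x;s,i,v)\,dx$. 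Finally, irreducibility of $\tilde{p}_{ij}$ lets me replace $(\tilde S_t,X_t,Y_t)$ by a generic $(s,i,y)$, giving \eqref{barrier integral}.

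The steps that transcribe verbatim from Proposition \ref{theo4} are routine; the main obstacle is the barrier bookkeeping -- justifying optional stopping at $\tau\wedge T$ and, above all, carrying out the reflection-principle computation that identifies the contribution of the paths avoiding $b$ with the exact bracketed factor in \eqref{barrier integral}. This is where the time-homogeneous volatility assumption $\sigma(t,i)=\sigma(i)$ is essential, since the closed-form hitting-time law of GBM is available only in that case.
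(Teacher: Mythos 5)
Your proposal follows essentially the same route as the paper: construct $\tilde S_t$ on the auxiliary space, apply It\^{o}'s formula to the discounted solution, use boundedness of the up-and-out price to upgrade the local martingale to a true martingale, obtain the probabilistic representation $\varphi^{uo}_c(t,\tilde S_t,X_t,Y_t)=E[e^{-\int_t^T r(X_u)\,du}K(\tilde S_T)\mathds{1}\{\tau>T\}\mid\tilde S_t,X_t,Y_t]$, condition on the first transition time, evaluate the barrier survival probability by the reflection principle, and invoke irreducibility to pass to generic $(s,i,y)$. Your explicit optional stopping at $\tau\wedge T$ together with the boundary condition $\varphi(\cdot,b,\cdot,\cdot)=0$ is in fact a slightly more careful justification of the indicator $\mathds{1}\{\tau>T\}$ than the paper spells out, but the argument is the same.
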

	\begin{proof}
		Much of the proof is similar to that of Proposition \ref{theo4}. We construct $ \tilde{S}_t $ as given there. Now if $\vf^{uo}_c$ is the classical solution of \eqref{p1}-\eqref{barrier boundary} then by using the It\^{o}'s formula on $N_t := e^{-\int_0^t r(X_u)du} \varphi^{uo}_c(t,\tilde{S}_t,X_t,Y_t)$, we get
		\begin{eqnarray*}
		dN_t &=& e^{-\int_0^t r(X_u)du}\left(- r(X_t) \varphi^{uo}_c (t,\tilde{S}_t,X_t,Y_t)+ \frac{\partial \varphi^{uo}_c}{\partial t }(t,\tilde{S}_t,X_t,Y_t) + \mathcal{A}_t \varphi^{uo}_c (t,\tilde{S}_t,X_t,Y_t)\right)dt + dM_t
		\end{eqnarray*}
		where $M_t$ is a local martingale. 
		
		\noi Since $ \tilde{S}_t $ is a martingale and $ \vf^{uo}_c $ is a bounded function, $\{N_t\}_t$ is a martingale. Hence
		\begin{align*} 
		\varphi^{uo}_c(t,\tilde{S}_t,X_t,Y_t)=& e^{\int_0^t r(X_u)du}N_t\\
		 =& E[e^{\int_0^t r(X_u)du}N_T\mid \mathcal{ F}_t]\\
		 =& E[e^{-\int_t^T r(X_u)du} K(\tilde{S}_T)\mathds{1}\{\tau>T\}\mid \tilde{S}_t,X_t,Y_t].
		\end{align*}
		By conditioning at transition times and using the conditional lognormal distribution of $\tilde{S}_t$, we get
		\begin{align*}
			\lefteqn{\varphi_c^{uo}(t,\tilde{S}_t,X_t,Y_t)}\\
			=& E[E[e^{-\int_t^T r(X_u)du} K(\tilde{S}_T)\mathds{1}\{\tau>T\} \mid \tilde{S}_t, X_t=i, Y_t, T_{n(t)+1}]\mid \tilde{S}_t, X_t=i, Y_t]\\
			=& P(T_{n(t)+1} > T\mid X_t,Y_t) E[e^{-\int_t^T r(X_u)du}  K(\tilde{S}_T)\mathds{1}\{\tau>T\}\mid \tilde{S}_t, X_t=i, Y_t,T_{n(t)+1} > T] \\
			& + \int_0^{T-t} E[e^{-\int_t^T r(X_u)du} K(\tilde{S}_T)\mathds{1}\{\tau>T\}\mid \tilde{S}_t, X_t, Y_t, T_{n(t)+1}=t + v] \frac{f(t-T_{n(t)}+v\mid X_t)} {1-F(Y_t\mid X_t)}dv\\
			=& \frac{ 1- F(T-T_{n(t)}\mid X_t)}{1-F(Y_t\mid X_t)} \eta^{uo}_{c;X_t}(t,\tilde{S}_t)+ \int_0^{T-t} e^{-r(X_t)v} \frac{f(Y_t+v\mid X_t)} {1-F(Y_t\mid X_t)} \times\\ 
			& \sum_{j\neq i} p_{ij}(Y_t +v) \int_0^{\infty} E[e^{-\int_{t+v}^T r(X_u)du} K(\tilde{S}_T)\mathds{1}\{\tau>T\}\mid \tilde{S}_{t+v}=x,Y_{t+v}=0,\\
			& X_{t+v}=j, T_{n(t)+1}=t+v] \frac{\exp\{-\frac{1}{2}\left ((\ln(\frac{x}{\tilde{S}_t})-(r(i) -\frac{\si^2(i)}{2})v ) \frac{1}{\sigma(i)\sqrt{v} }\right )^2\}}{x\sqrt{2\pi}\si(i)\sqrt{v}}  dx\, dv,
		\end{align*}
		where $ \eta^{uo}_{c;X_t}(t,\tilde{S}_t) $ is the Black-Scholes price of a European  up-and-out call option with constant interest rate $r(i)$ and time-independent volatility $\sigma(i)$. Thus,
		\begin{align*}
			\lefteqn{\varphi_c^{uo}(t,\tilde{S}_t,X_t,Y_t)}\\
			=& \frac{ 1- F(T-t+Y_t\mid X_t)}{1-F(Y_t\mid X_t)} \eta^{uo}_{c;X_t}(t,\tilde{S}_t)+\int_0^{T-t} e^{-r(X_t)v} \frac{f(Y_t+v\mid X_t)} {1-F(Y_t\mid X_t)} \times\\
			& E[e^{-\int_{t+v}^T r(X_u)du} K(\tilde{S}_T)\mid \tilde{S}_{t+v}=x,Y_{t+v}=0, X_{t+v}=j, T_{n(t)+1}=t+v, \tau>T]\times\\
			& P[\tau>T\mid \tilde{S}_{t+v}=x,Y_{t+v}=0, X_{t+v}=j, T_{n(t)+1}=t+v]\times\\
			& \sum_{j\neq i} p_{ij} (Y_t+v) \int_0^b \varphi_c^{uo}(t+v,x,j,0) \frac{e^{\frac{-1}{2}L^2}}{x\sqrt{2\pi}\sigma(i)\sqrt{v}  } dx\, dv.\\
		\end{align*}

		\noi It can be proved, using the reflection principle, that 
		\begin{align*}
			& P\left[\max_{[t,T]}S_u<b \mid \tilde{S}_{t+v}=x,Y_{t+v}=0, X_{t+v}=j, T_{n(t)+1}=t+v\right ]\\
			=& \left[ \Phi\left( \dfrac{\ln\left (\frac{b}{s}\right )-(r(i) -\frac{\si^2(i)}{2})v}{\sigma(i)\sqrt{v}} \right)-\exp\left\lbrace \left( \frac{2r(i)}{\sigma^2(i)}-1 \right)\ln\left (\frac{b}{s}\right ) \right\rbrace\times \right.\\
			&\left. \Phi\left( \dfrac{-\ln\left (\frac{b}{s}\right )-(r(i) -\frac{\si^2(i)}{2})v}{\sigma(i)\sqrt{v}} \right)  \right], 
		\end{align*}
		which means
		
		\begin{align}
			 \no\varphi_c^{uo}(t,\tilde{S}_t,X_t,Y_t)=& \frac{ 1- F(T-t+Y_t\mid X_t)}{1-F(Y_t\mid X_t)} \eta^{uo}_{c;X_t}(t,\tilde{S}_t)+\int_0^{T-t} e^{-r(X_t)v} \frac{f(Y_t+v\mid X_t)} {1-F(Y_t\mid X_t)} \times\\
			 &\no\left[ \Phi\left( \dfrac{\ln\left (\frac{b}{s}\right )-(r(i) -\frac{\si^2(i)}{2})v}{\sigma(i)\sqrt{v}} \right)-\exp\left\lbrace \left( \frac{2r(i)}{\sigma^2(i)}-1 \right)\ln\left (\frac{b}{s}\right ) \right\rbrace\times \right.\\
			 &\left. \Phi\left( \dfrac{-\ln\left (\frac{b}{s}\right )-(r(i) -\frac{\si^2(i)}{2})v}{\sigma(i)\sqrt{v}} \right)  \right]\times\no\\
			 &\no \sum_{j\neq i} p_{ij} (Y_t+v) \int_0^b \varphi_c^{uo}(t+v,x,j,0) \alpha(x;s,i,v)\, dx\, dv.\\
		\end{align}
		Due to the irreducibility condition (A1), we can replace $ \tilde{S}_t $, $ X_t $, and $ Y_t $ by $ s $, $ i $, and $y$, respectively. \qed
		
	\end{proof}
	
	\begin{theo}
		The initial-boundary value problem \eqref{p1}-\eqref{barrier boundary} has a unique classical solution in the class of functions with at most linear growth.
	\end{theo}
	\begin{proof}
		The proof is similar to that of Theorem \ref{theo1}.\qed
	\end{proof}

	\begin{theo}\label{theo7}
		Let $ \vf_c^{uo}(t,s,i,y) $ denote the unique solution of the problem (\ref{pricing pde},\ref{barrier boundary}). Then the following statements hold true:
		\begin{enumerate}
			\item $ \vf_c^{uo}(t,s,i,y) $ is the locally risk-minimizing option price at time $t$ for an up-and-out European call option with strike price $K$, barrier $b>K$ and maturity $T>t$.
			\item An optimal hedging strategy $ \pi^*=\{\xi^*_t,\eta^*_t\} $ is given  by
			\begin{align}
				\no\xi^*_t=&\frac{\partial}{\partial s}\vf_c^{uo}(t,S_t,X_{t-},Y_{t-})\mathds{1}(\tau>T)\\
				\eta^*_t=&V_t^*-\xi^*_t S^*_t,
			\end{align}
			where
			\begin{align*}
				V_t^*=& \vf_c^{uo}(0,S_0,X_0,Y_0)+\int_{0}^{t}\frac{\partial}{\partial s}\vf_c^{uo}(u,S_u,X_{u-},Y_{u-})\mathds{1}(\tau>u)\,dS^*_u\\
				+& \int_0^t \int_{\mathbb{R}}e^{-\int_0^u r(X_v)\,dv}\left \lbrace \vf^{uo}_c\left( u,S_u,X_{u-}+h(X_{u-},Y_{u-},z),Y_{u-}-g(X_{u-},Y_{u-},z) \right) \right.\\
				&-\left. \vf_c^{uo}(u,S_u,X_{u-},Y_{u-}) \right \rbrace \mathds{1}(\tau>u) \, \hat{\wp}(du,dz).
			\end{align*}
			\item The residual risk at time $t$ is given by
			\begin{align}
				\no R_t(\pi^*)=& E\left[ \int_t^T e^{-2\int_0^u r(X_v)\,dv} \frac{f(Y_u|X_u)}{1-F(Y_u|X_u)}\times\right.\\
				&\left. \sum_{j\neq X_u}p_{X_u,j}\left( \vf^{uo}_c(u,S_u,j,0)-\vf^{uo}_c(u,S_u,X_u,Y_u) \right)^2 \mathds{1}(\tau>u)\,du \Bigg | \mathcal{F}_t \right].
			\end{align}
		\end{enumerate}
	\end{theo}
	\begin{proof}
		Let $0\leq t\leq T$. We define
		\begin{align*}
			N_t:=& e^{-\int_0^t r(X_u)\,du}\vf_c^{uo}(t,S_t,X_{t-},Y_{t-})\mathds{1}(\tau>T)\\
			=& e^{-\int_0^{t\wedge\tau} r(X_u)\,du}\vf_c^{uo}({t\wedge\tau},S_{t\wedge\tau},X_{{t\wedge\tau}},Y_{{t\wedge\tau}}),
		\end{align*}
		since $ \vf_c^{uo}(\tau,S_\tau,X_\tau,Y_\tau)=0 $. By It\={o}'s formula, we obtain, under $P$,
		\begin{align}
			\no N_t=& \vf_c^{uo}(0,S_0,X_0,Y_0)+\int_{0}^{t}\frac{\partial}{\partial s}\vf_c^{uo}(u,S_u,X_{u-},Y_{u-})\mathds{1}(\tau>u)\,dS^*_u\\
			\no+&\int_{0}^{t\wedge\tau}\int_{\mathbb{R}}e^{-\int_0^u r(X_v)\,dv}\left \lbrace \vf^{uo}_c\left( u,S_u,X_{u-}+h(X_{u-},Y_{u-},z),Y_{u-}-g(X_{u-},Y_{u-},z) \right) \right.\\
			&-\left. \vf_c^{uo}(u,S_u,X_{u-},Y_{u-}) \right \rbrace  \, \hat{\wp}(du,dz).\label{fs barrier}
		\end{align}
		By Doob's option sampling theorem, the R.H.S of \eqref{fs barrier} is an $ \mathcal{F}_t $-martingale under $P$, which is orthogonal to $ \{M_t\} $ (owing to the independence of $\{W_t\}$ and $ \hat{\wp}(\cdot,\cdot) $). Thus, as $t\uparrow T$, equation \eqref{fs barrier} provides the F\"{o}llmer-Schweizer decomposition of $N_T$ (i.e, the discounted contingent claim). Hence, the propositions in Theorem \ref{theo7} follow immediately. \qed
	\end{proof}
	
	\section{An example of a volatility model}
	There are many different ways in which the volatility can be modelled. Based on empirical data, several models of volatility can be constructed. We consider, in this section, a kind of ``Monday effect'', which is a surge in the volatility of stocks on Monday, due to the two non-trading days preceding it. The volatility can also be assumed to drop throughout the course of a typical week, only to increase sharply at the beginning of the trading week. One of the models which captures this effect is the following:
	\begin{equation*}
	\sigma(t,i)=\sigma(0,i)\left [\alpha+4(1-\alpha)\!\left( t^\beta-\frac12
	\right)^2 \right ],
	\end{equation*}
	where $t$ is the time in weeks and $\alpha$ and $\beta$ are parameters with $ 0<\alpha<1 $ and $ \beta>0 $.	This model assumes the volatility to decrease to a level $ \alpha $ times its maximum value, before  jumping back up. The minimum volatility is attained at $t=(\frac12)^{\frac{1}{\beta}}$. In this model, higher values of $ \alpha $ indicate lower variation in the volatility, while $ \beta $ dictates the position of the volatility trough, with higher values of $ \beta $ leading to later troughs.
	
	\noi Here is an example of the volatility model with $ \sig(0,1)=0.2,\ \sig(0,2)=0.5 $ and $ \sig(0,3)=0.3 $, with parameters $\alpha=\frac12$ and $ \beta=3 $.
	\begin{figure}[h]
		\centering
		\includegraphics[width=0.5\linewidth]{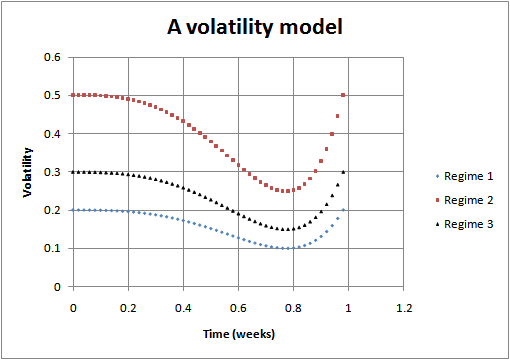}
		\caption{Volatility vs. time}
		\label{fig:volatility}
	\end{figure}

	
	\chapter{Defaultable bonds}
	
	\section{The Market Model}
	We consider a market on a probability space $(\Omega,\mathcal{F},P)$, with a finite state space $ \chi=\{1,2,\dots,k\} $. The market dynamics are modelled by an age-dependent process $X=\{X_t\}_{t\geq 0}$ on $\chi$, as described by equations \eqref{xdef} and \eqref{ydef}.
	We define the following market parameters as the functions
	\begin{equation}\label{market parameters}
		r:\chi\rightarrow(0,\infty),\ \mu:(0,\infty)\times\chi\rightarrow(0,\infty),\ \kappa:(0,\infty)\times\chi\rightarrow\mathbb{R},\ \sigma:(0,\infty)\times\chi\rightarrow(0,\infty).
	\end{equation}
	Here, $r, \mu, \kappa, \sigma$ are the interest rate, the drift coefficient, the dividend payout rate and the volatility, respectively.
	
	\noi We consider a structural model of the company's bond, in which the company defaults on its bond if its asset value drops below a certain threshold. The company's asset value, $A_t$, is assumed to follow a geometric Brownian motion modulated by an age-dependent process $X_t$ given by equations \eqref{xdef} and \eqref{ydef}. Thus,
	\begin{equation}\label{asset}
		dA_t=A_t\left[ \left( \mu(t,X_t)-\kappa(t,X_t) \right)\,dt+\sigma(t,X_t)\,dW_t  \right] ,\ A_0>0
	\end{equation}
	where $\{W_t\}_{t\geq 0}$ is a standard Wiener process independent of $X$.
	The market is also assumed to contain an amount $B_t$ a locally risk-free money-market account, where
	\begin{equation}
		B_t=e^{\int_{0}^{t}r(X_u)\,du}.
	\end{equation}

	\noi We use the structural approach to model the credit risk, i.e the risk of the company defaulting on its debt (bonds). We regard the firm's equity as well as the defaultable bond as contingent claims on the firm's assets. The equity and the debt of the company are denoted by $E_t$ and $D_t$, respectively.

	\subsection{Model 1}
	The first model that we consider is Merton's classical model (\cite{merton}), with a few modifications to account for the fact that the market is modulated by an age-dependent process. We consider a coupon-free bond that can default only on maturity ($t=T$). In the event of  a default, the creditors are entitled to the firm's assets under consideration. Hence, the firm's equity holders receive a payoff only if $A_T>K$, where $K$ is a certain threshold. The total payoff, at maturity, to the equity holders, is
	\begin{equation}
		E(T,A_T,X_T)=(A_T-K)^+=\max(A_T-K,0).
	\end{equation}
	The price of the defaultable bond at maturity is given by
	\begin{equation}
		D(T,A_T,X_T)=\min(A_T,K)=K-(K-A_T)^+.
	\end{equation}
	Since the above payoff is the same as that of a portfolio consisting of a default-free loan with face value $K$, maturing at time $T$ and a short European put option on $A_t$ with dividend rate $\kappa(t,X_t)$, strike price $K$ and maturing at time $T$, it suffices to solve the problem of pricing European call options under the same market model. We have done that in \ref{path independent}. Therefore, we do not produce any further details here.
	
	\subsection{Model 2}
	Merton's classical model does not allow a premature default. It may be that there is a critical threshold below which the firm would be disposed to default on its debt. Such a model is more favourable to the owners of the defaultable bonds. We consider a model where the firm defaults if the asset value $A_t$ dips below a critical threshold $J$ for any time $t\in (0,\infty]$, or if the terminal asset value, $A_T$ is less than $K$. We assume that $J<K$. Define the following stopping times
	\begin{equation}
		\tau_1=\begin{cases}
			T,\ \text{if } A_T<K\\
			\infty,\ \text{otherwise},
		\end{cases}
	\end{equation}
	and $\tau_2=\inf\{t\in(0,T]|A_t<J\}$. If $A_t$ never drops below $J$, we set $\tau_2=\infty$. Then the default time, $\tau$, is given by
	\begin{equation}
		\tau=\min(\tau_1,\tau_2).
	\end{equation}
	If the default time is infinity, the firm does not default and the bondholders receive their principal entirely. We can write the value of the defaultable bond at time $T$ as
	\begin{equation}
		D(T,A_T,X_T)=K-(K-A_T)^+ + (A_T-K)^+ \mathds{1}(\min_{t\leq T}A_t<J).\label{model 2}
	\end{equation}
	
	\noi The above payoff can at once be recognised as that of a portfolio consisting of the following three components:
	\begin{enumerate}
		\item A default-free loan of face value $ K $, with maturity $T$,
		\item A short European put option on $A_t$ with dividend rate $ \kappa(t,X_t) $, strike price $K$ and maturing at time $T$, and
		\item A long European down-and-out call option with strike price $K$, barrier $J$ and maturing at time $T$.
	\end{enumerate}
	The value of the defaultable bond under this model is at least as much as that under Merton's classical model, due to the presence of the third term in \eqref{model 2}. The bondholders are thus better protected. If the volatility does not depend explicitly on time, i.e. if $ \sigma(t,i)=\sigma(i) $ for all $ t $ and $ i $, then the pricing and hedging problems may be addressed using our analysis in \ref{barrier options}.
	
	\subsection{Model 3}
	In this model, the criteria for a default are the same as that for Model 2. The recovery rule, however, is different. In case of a premature default, the bondholders are paid a fraction of the face value of the bond at a pre-determined constant recovery rate, $\delta$, which satisfies the following inequality
	\begin{equation}
		0\leq \delta\leq \frac{J}{K} \left(\leq 1\right).
	\end{equation}
	The procedure for debt recovery is the same as that in Model 2 if the firm defaults at maturity. If the firm does not default, the debt is paid of entirely at maturity. The value of the defaultable bond at maturity can thus be written as
	\begin{equation}
		D(T,A_T,X_T)=\min(A_T,K)\mathds{1}(\tau\geq T) + \delta K B(\tau, T, X_{\tau})\mathds{1}(\tau< T),
	\end{equation}
	where $B(\tau, T, X_{\tau})$ denotes the price at time $\tau$ of a default-free couponless bond with unit face value and maturity $T$. This model is different from the two models previously discussed in that the recovery is at the time of the default, and not necessarily strictly at maturity. As in Model 2, an integral equation formalism can be used in the case where the volatility has no explicit time-dependence.\\
	
	\noi The market we are considering is incomplete (i.e not all contingent claims can be perfectly hedged by self-financing strategies). This is due to the presence of semi-Markov modulated regime switching. We can, however, minimize the residual risk arising from the incompleteness of the market. We look for the price of derivative securities that minimizes the residual risk. This can be done by considering the F\"{o}llmer-Schweizer decomposition of the relevant contingent claim.


\begin{thebibliography}{9}
		
		
		
		
		
		\bibitem
		{CFMW} Arendt W., Batty C., Hieber, M. and Neubrander, F., Vector-valued Laplace Transforms and Cauchy Problems, Birkhauser 2001.
		\bibitem{tamal} Banerjee, Ghosh, Iyer, ``Pricing defaultable Bonds in a Markov Modulated Market'', Stochastic Analysis and Applications 30 (2012), 448-475.
		
		\bibitem
		{BAS} Basak G. K., Ghosh Mrinal K. and Goswami A., Risk minimizing option pricing for a class of exotic options in a Markov-modulated market,  Stoch. Ann. App. 29:2(2011), 259-281.
		
		
		\bibitem
		{BE}Buffington J. and Elliott R. J., American options with regime switching, Intl. J. Theor. Appl. Finance 5(2002), 497-514.
		
		\bibitem
		{DES} Deshpande A. and Ghosh M. K., Risk minimizing option pricing in a regime switching market, Stoch. Ann. App. 26(2008).
		
		\bibitem 
		{DKR} DiMasi G. B., Kabanov Y. and Runggaldier W. J., Mean-Variance hedging of options on stocks with Markov volatility. Theory Probab. Appl., Vol. 39 (1994), 173-181.
		
		\bibitem{ECS} Elliott R.J., Chan L. and Siu T.K., Option pricing and Esscher transform under regime switching, Annals of Finance 1,
		423-432 (2005).
		
		\bibitem
		{FS} F\"{o}llmer H. and Schweizer, M., \emph{Hedging of Contingent Claims under Incomplete Information}, Applied Stochastic Analysis, Stochastics Monographs, vol. 5 (1991), 389-414. 
		
		\bibitem
		{AGMKG} Ghosh M. K. and Goswami A., Risk minimizing option pricing in a semi-Markov modulated market, SIAM J. Control Optim. 48(2009), 1519-1541.
		\bibitem{agedep} Ghosh M., Saha S., ``Stochastic Processes with Age-Dependent Transition Rates'', Stochastic Analysis and Applications, Taylor \& Francis, Vol. 23, No. 5, 2005.
		
		\bibitem{jeeten} Goswami A., Patel J., Shevgaonkar P., ``A system of degenerate non-local parabolic PDE and application''.
		
		\bibitem
		{GZ}Guo X. and Zhang Q., Closed form solutions for perpetual American put options with regime switching , SIAM J. Appl. Math 39(2004), 173-181.
		
		\bibitem
		{JH} Hunt J. and Devolder P., Semi-Markov regime switching interest rate models and minimal entropy measure, Physica A: Statistical Mechanics and its Applications 390, 15(2011), 3767-3781.
		
		\bibitem{nholland} Ikeda, Watanabe, ``Stochastic Differential Equations and Diffusion Processes'', North Holland Mathematical Library, 1981.
		
		\bibitem
		{RR} Joberts A. and Rogers L. C. G., Option pricing with Markov-modulated dynamics, SIAM J. Control Optim. 44(2006), 2063-2078.
		
		\bibitem
		{KK} Kallianpur Gopinath and Karandikar Rajeeva L., Introduction to Option Pricing Theory, Birkhäuser Boston, 2000.
		
		\bibitem
		{MR} Mamon R. S. and Rodrigo M. R., Explicit solutions to European options in a regime switching economy, Operations Research Letters 33(2005), 581-586.
		
		\bibitem{merton} Merton, R. C. 1974. On the Pricing of Corporate Debt: The Risk Structure of Interest Rates. \textit{Journal of Finance}. 29:449-470
		
		\bibitem{naval} Nunn W. R., Desiderio A. M., ``Semi-Markov Processes: An Introduction'', Centre for Naval Analyses, CRC 335, July 1977.
		
		\bibitem
		{PA} Pazy A., Semigroups of Linear Operators and Applications to Partial Differential Equations, Springer-Verlag,
		1983.
		
		\bibitem
		{PH} Pham H., Rheinl\"{a}nder T. and Schweizer, M., Mean-variance hedging for continuous processes: new proofs and examples, Finance Stoch.(1998) 173-198.
		
		\bibitem{pyke1}
		Pyke R., Markov renewal processes: definitions and preliminary properties, Ann. Math. Statist. Vol. 32 (1961), pg. 1231–1242.
		
		\bibitem{pyke2}
		Pyke R., Markov renewal processes with finitely many states, Ann. Math. Statist. Vol. 32 (1961), pg. 1243–1259.
		
		\bibitem
		{S2} Schweizer M., A Guided Tour through Quadratic Hedging Approaches, E. Jouini, J. Cvitani\'{c}, M. Musiela (eds.), Option Priing Interest Rates and Risk Management, Cambridge University Press (2001), 538-574.
		
		\bibitem{SAN}Shiryaev A.N., Essentials of Stochastic Finance: Facts, Models, Theory, 1999.
		
		\bibitem{shreve 2}
		Shreve S. E., Stochastic Calculus for Finance-II: Continuous-Time Models, Springer, 113-115.
	\end{thebibliography}
\end{document}